\DeclareMathOperator\WF{WF}
\DeclareMathOperator\supp{supp}
\DeclareMathOperator\sign{sign}
\def\Chi{\text{\normalfont X}}
\def\RC{r}
\def\RQ{R}
\def\q{{q}}
\newtheorem{theorem}{Theorem}[section]
\newtheorem{proposition}[theorem]{Proposition}
\newtheorem{lemma}[theorem]{Lemma}
\theoremstyle{definition}
\newtheorem{remark}[theorem]{Remark}
\begin{document}
	%%%%%%%%%%%%%   Title %%%%%%%%%%%%%%%%%%%%%%%%%% 
	
	\par 
	\bigskip 
	\LARGE 
	\noindent 
	{\bf An algebraic QFT approach to the Wetterich equation on Lorentzian manifolds} 
	\bigskip 
	\par 
	\rm 
	\normalsize

	%%%%%%%%%%%%%%%%%%%%%%%%%%%%%%%%%%%%%%%%%%%%%
	%%%%%%%%%%%% Authors %%%%%%%%%%%%%%%%%%%%%%%% 

\large
\noindent 
{\bf Edoardo D'Angelo$^{1,2,a}$}, {\bf Nicol\`o Drago$^{3,b}$}, {\bf Nicola Pinamonti$^{1,2,c}$}, {\bf Kasia Rejzner$^{4,d}$}  \\
\par
\small
\noindent$^1$ Dipartimento di Matematica, Universit\`a di Genova, Italy. 
%Dipartimento di Matematica, Universit\`a di Genova - Via Dodecaneso, 35, I-16146 Genova, Italy. 
\smallskip

\noindent$^2$ Istituto Nazionale di Fisica Nucleare - Sezione di Genova, Italy. 
%Istituto Nazionale di Fisica Nucleare - Sezione di Genova, Via Dodecaneso, 33 I-16146 Genova, Italy. 
\smallskip
 
\noindent$^3$ Dipartimento di Matematica, Universit\`a di Trento, Italy. 
\smallskip

\noindent$^4$ Department of Mathematics, University of York, UK.
\smallskip

\noindent E-mail: 
$^a$edoardo.dangelo@edu.unige.it, 
$^b$nicolo.drago@unitn.it, 
$^c$pinamont@dima.unige.it, \\
$^d$kasia.rejzner@york.ac.uk
\normalsize

\par 

\rm\normalsize 

%\linespread{1.5} 
\rm\normalsize 

%%%%%%%%%%%% Date %%%%%%%%%%%%%%%%%%%%%%%%%%  
%%%%%%%%%%%% Date %%%%%%%%%%%%%%%%%%%%%%%%%% 

\par 
\bigskip 

\rm\normalsize 
\noindent {\small Version of \today}

\par 
\bigskip

\rm\normalsize

\bigskip

\noindent 
\small 
{\bf Abstract}
%
%This is the abstract.
%

\noindent
%\begin{abstract}
We discuss the scaling of the effective action for the  interacting scalar quantum field theory on generic spacetimes with Lorentzian signature and in a generic state (including vacuum and thermal states, if they exist).
This is done constructing a flow equation, which is very close to the renown Wetterich equation, by means of techniques recently developed in the realm of perturbative Algebraic Quantum Field theory (pAQFT).
The key ingredient that allows one to obtain an equation which is meaningful on generic Lorentzian backgrounds is the use of a local regulator, which keeps the theory covariant. 
%Furthermore, thanks to this choice, the obtained equation makes sense in the Lorentzian case.
As a proof of concept, the developed methods are used to show that non-trivial fixed points arise in quantum field theories in a thermal state and in the case of quantum fields in the Bunch-Davies state on the de Sitter spacetime. 
%\end{abstract}

\rm\normalsize 

\tableofcontents

\section{Introduction}

In this paper we extend functional renormalization methods to obtain flow equations for the effective action \textit{à la} Wetterich in curved Lorentzian backgrounds and in generic states.
Here we study the case of a real quantum scalar field, but the main ideas and methods we use here are applicable to other types of fields. 
To derive the flow equations, 
we work in the recently developed framework of perturbative algebraic quantum field theory \cite{HollandsWald2001a,HollandsWald2001b, HollandsWald14, Fredenhagen2012, AAQFT15, Rejzner2016} (pAQFT) in the functional approach \cite{Brunetti2009}. However, using ideas similar to those of \cite{Buchholz2019} the obtained result is valid also in an exact sense. 

Our method is based on the average effective action approach to the functional renormalization group, which was developed in a broad range of applications from the seminal papers of Wetterich \cite{Wetterich1989, Ringwald1989, Wetterich1991}. In Wilson's modern formulation of the renormalization group \cite{Wilson73}, short-distance fluctuations are progressively integrated-out, obtaining a coarse-grained, low-energy description of a system from its microscopic degrees of freedom.

In the functional approach \cite{Berges2000}, the fundamental object is the generating functional for the 1PI Feynman diagrams, regularised introducing a momentum cut-off which suppresses long-range fluctuations. The average effective action acts as a microscope with variable resolution, which permits to move from the fine-grained, microscopic description to the rough, macroscopic view. The equation governing the flow of the fluctuations from the microscopic to the macroscopic scale is the Wetterich equation \cite{Wetterich1992}, developed from earlier ideas of Polchinski \cite{Polchinski1983}.
Within the functional approach, non-perturbative methods to approximate the Wetterich equation have been developed, opening the possibility to study non-perturbative effects in highly-correlated systems.
The Functional Renormalization Group (FRG) has found numerous applications ranging from statistical physics to high-energy particle physics (QCD in particular), with the generalization of the Wetterich equation to include gauge fields \cite{Reuter1993}, and to approaches to quantum gravity based on the asymptotic safety scenario \cite{Niedermaier2006, Reuter1996, Bonanno2020}.

Most results within the FRG approach have been derived in Euclidean spaces. Investigations on Lorentzian signature FRG flows, based on analytic continuation of Euclidean correlation functions, has been initiated in \cite{Floerchinger2011}. A different approach, based on real-time Schwinger-Keldysh formalism and the spectral representation of correlation functions in Minkowski spacetime, has been developed in \cite{Kamikado2013, Pawlowski2015, Horak2020, Horak2021, Berges2012, Huelsmann2020} and is currently under investigation. Finally, a Lorentzian study of Asymptotic Safety in quantum gravity, based on a 3+1 decomposition of the metric, can be found in \cite{Manrique2011}.

Nice reviews on the use of the functional renormalization group method related to asymptotic safety can be found in \cite{Eichhorn2019,PawlowskiReichert2021,Pereira2019,ReuterSaueressig2012,ReuterSaueressig2019}.
Recent important developments in the latter research area include \cite{Fehre2021}, which postulates a flow equation for the graviton spectral function in Lorentzian signature.
Our current work adds to this by providing a framework where this flow equation can be derived from first principles and further generalized to curved spacetimes with Lorentzian signature.

The Wetterich flow equation is usually derived in the following way:
one starts with considering $I+J$, where $I$ is the action of the given theory and $J=\int j\chi$ is the term describing the smearing of the field $\chi$ with an external current $j$.
Consider $Z(j) = \langle \exp(i I + i J )\rangle$, the generating functional for the time-ordered correlation functions of the theory. The connected, time-ordered correlation functions are obtained from $W= -i\log Z$ by means of functional derivatives. 
Having a full control over $Z$ or $W$ would allow one to obtain the precise form of the correlation functions of the theory and hence would provide access to the interacting quantum theory.

Since the direct analysis of $Z$ is usually impractical, it is slightly better to study the associated effective action $\tilde\Gamma$.  
To obtain $\tilde\Gamma$, one starts by introducing the field $\phi$ defined as the first functional derivative of $W$, and thus depending on $j$. In the literature $\phi$ is called the {\it  classical field} because it arises from an expectation value.
The relation between $\phi$ and $j$ can be inverted, at least in perturbation theory, giving $j=j_\phi$.   
The effective action $\tilde\Gamma(\phi)= W(j_\phi)- \int \dd^d x j_\phi \phi $ is then obtained from $W$ by means of a Legendre transform.

The effective action $\tilde \Gamma$ encodes all the information on the quantum correlation functions, as the original $Z(j)$. Even though the effective action $\tilde \Gamma$ is local in the free case, when one includes interactions it contains an infinite series of possibly non-local terms, and as such can be computed only perturbatively, in some cases as a sum of the 1PI Feynman diagrams. As such, one introduces a suitable regularization, taming the infrared (long-range) quantum fluctuations. In order to do so, one artificially adds a contribution $Q_k$, quadratic in the fields, to the action in the generating functional $Z(j)$. The regulator $Q_k$ depends on a scale parameter $k$ and hence both the generating functional $Z_k$ or $W_k$ and the effective action $\tilde\Gamma_k$ depend on $k$. Their behaviour under changes of the parameter $k$ is governed by certain flow equations.

In particular, the Polchinski equation is an equation for the derivative with respect to $k$ of $W_k$, while the Wetterich equation involves the $k$-derivative of $\Gamma_k(\phi)=\tilde\Gamma_k(\phi)-Q_k(\phi)$ and it takes the well known form
\[
\partial_k \Gamma_{k} = \frac{i}{2}
\big\langle ( \Gamma^{(2)}_k + Q_{k}^{(2)} )^{-1}, \partial_k Q_{k}^{(2)}\big\rangle_{2}\,,
\]
where $\langle\cdot,\cdot\rangle_2$ is the standard pairing on $\mathcal{M}^2=\mathcal{M}\times\mathcal{M}$. Furthermore, $( \Gamma^{(2)}_k + Q_{k}^{(2)} )^{-1}$ is just $W_{k}^{(2)}$. Hence, the non-uniqueness of the inverse of $(\Gamma^{(2)}_k + Q_{k}^{(2)})$, present in the Lorentzian case, is not an issue when $W$ is the starting point of the construction, so that $W_{k}^{(2)}$ is the distinguished inverse. %\kacomment{So $W_{k}^{(2)}$ is a distinguished choice of such inverse? How does this non-uniqueness manifest here?} 

In order for this equation to be useful, the regulator $Q_k$ needs to have certain properties
\cite{Litim2006, Litim1998}:
\begin{itemize}
\item it should vanish in the limit $k\to0$, so that the original theory is recovered in that limit;
\item it should suppress all the quantum fluctuations in the limit $k\to\infty$, so that in that limit one obtains a theory governed by a classical action;
\item at finite $k$, it should behave as an effective mass term to control potential infrared divergences;
\item at finite $k$, it should vanish at high momentum to not alter drastically the short distance behaviour of the correlation functions.  
\end{itemize}

In the original approach and for Euclidean field theories, $Q_k$ is chosen as a momentum cutoff. One of the most used sharp cut-offs assumes a simple expression in the Fourier transform of its second functional derivative, as $\hat{Q}_k^{(2)}(p) = -(k^2-p^2)\theta(k^2-|p|^2)$, where $\theta$ is the Heaviside step function.  It has been discussed in \cite{Litim2001} that this regulator meets all the requirements listed above, and furthermore permits to keep the technical difficulties in practical computations under control.
When such a regulator is used, the source term at the right-hand side of the flow equation of the effective action $\tilde\Gamma_k$ has a peak in a vicinity of $|{p}|^2\sim k^2$, while both high and low momentum modes are suppressed. This gives rise to a flow in Wilsonian sense, for which at scale $k$ only the spectrum of the various propagators at momentum squared equal to $k^2$ matters, thus providing an interpretation of the used regularization at scale $k$ as a coarse-graining procedure.
Unfortunately, in position space, the regulator $Q_k$ introduced above turns out to be non-local. 
For this reason, it is difficult to extend similar techniques to field theories on generic curved backgrounds.
Similarly, if the state in which the theory is constructed is not a vacuum, it is not clear if this choice of regulator completely regularises the theory. This happens, for example, with the Wetterich equation in the case of thermal fields \cite{Tetradis1992, Litim1998}. Another problem of a non-local regulator is that the original analysis is conducted assuming the vacuum state and for Euclidean quantum field theories, and the na\"ive translation of $\hat{Q}_k^{(2)}(p)$ to spacetimes with Lorentzian signature would alter the principal symbol of the equation of motion governing the evolution. 
This could potentially affect the results on the solvability of the equation of motion of the theory even in the non-interacting case. 

As discussed in \cite{Litim2006}, it is still possible to use a local, mass term regulator $Q_k^{(2)} \sim k^a$ at the price of introducing a different regularization procedure of the ultraviolet regime.
This last requirement is not an issue in approaches to interacting field theories, which are automatically ultraviolet finite, like those analysed in \cite{Brunetti2009}. 
When a local regulator is used, the source term of the flow equation for the effective action is not anymore peaked around momenta of scale $k$ and hence the interpretation of the flow equation one obtains gets modified. The flow equations can then be interpreted as the flow of the theory under variations in the mass parameter. 
 %\kacomment{Not clear what the "flow in the mass parameter space" means and how this relates to momenta not peaked around k.}

Furthermore, if $Q_k$ is local, the perturbatively constructed $S$-matrix (used to build interacting fields needed to describe the generating functional $Z(j)$) is formally unitary. 
This implies in particular that, in the Lorentzian case, the effective action obtained from that $W(j)$ is real-valued. On the contrary, the $S$-matrix constructed with non local regulators is in general non-unitary (for states which are not the Minkowski vacuum, see for example \cite{Weinberg1987}) and thus the corresponding effective action could in principle be complex-valued, with an imaginary contribution due to the form of the non-local regulator, and not to intrinsic properties of the investigated physical system. We refer to \cite{Donoghue2020} for a discussion on the issues arising in the connection between Euclidean and Lorentzian approaches to the Wetterich equation.

Finally, a mass term regulator appears to be useful whenever one is interested in preserving the analytical structure of the propagator, without introducing cuts or poles. Such a propagator corresponds to the \textit{Callan-Symanzik propagator} with $Z_k = 1$ defined in Eq. (5) in \cite{Fehre2021}, where the FRG is applied to the spectral function of the graviton propagator.

In this paper, we introduce a version of the effective action that can be analysed on generic curved spacetimes and in generic quantum states. In order to achieve this result, we employ methods of the perturbative Algebraic Quantum Field Theory (see e.g. \cite{AAQFT15, Fredenhagen2012, HollandsWald14, Rejzner2016} for reviews and the references therein). 
In particular, the formulation of interacting theories provided by pAQFT does not make use of any representation provided by the reference state, and in the renormalization procedure used to perturbatively construct interacting fields only the position-space representation of the propagators is needed, with no reference to their Fourier transform.
Every element of the theory is by construction ultraviolet regular; this is particularly useful for our purposes, because there is no need to select a $Q_k$ which cuts the high momenta. 
Following the ideas similar to those presented in \cite{Litim1998} and to obtain flow equations which are tractable on generic spacetime, we choose the regulator to be local in the field to keep covariance. Moreover, the local regulator $Q_k$ acts as an artificial mass contribution to the field to tame infrared problems.
Its form is 
\[
Q_k(\chi) = -\frac{1}{2}\int \dd^d x \q_k (x) \chi^2(x) \ ,
\]
where $\q_k$ grows as $k^2$ for bosonic fields (the only ones we consider here in order to keep the discussion simple). 
Local regulators like the one analysed here have been already used in the literature, in \cite{Fehre2021} as discussed above but also e.g. in \cite{Litim1998, Litim2006}, to analyse states at finite temperature.

\bigskip

In this paper we show that an equation similar to the Wetterich equation can be obtained for the effective action on generic spacetimes and for generic background states. Furthermore, we show that in the case of the Minkowski vacuum it gives results qualitatively analogous to the one obtained with non local regulators.
However, the presence of a local $Q_k$ ---whose second functional derivative contains a Dirac delta function--- leads to two main differences:
(a) the pairing on the right-hand side of the Wetterich equation effectively acts only on $\mathcal{M}$ and not on $\mathcal{M}^2$;
(b) the contribution of $ ( \Gamma^{(2)}_k + Q_{k}^{(2)} )^{-1}(x,y)$ needs to be evaluated at coinciding points.
This second modification is not an issue because the local fields present in $Q_k$ are normal-ordered, and so the coinciding point limit we have to consider is finite without requiring extra regularizations.
Furthermore, a key difference from the Euclidean case is that the inverse $ ( \Gamma^{(2)}_k + Q_{k}^{(2)} )^{-1}$ is not unique
in Lorentzian spacetimes. The choice of the inverse we make is $( \Gamma^{(2)}_k + Q_{k}^{(2)} )^{-1} = W_{k}^{(2)}$ and it 
 depends on the choice of a reference state, and introduces a state dependence in the flow equations.  
 
Taking into account all this, we conclude that the modified Wetterich equation takes the form
\[
\partial_k \Gamma_{k} = \lim_{y \to x} -\frac{i}{2} \int \dd x \partial_k \q_{k}(x) \left [ \left( \Gamma^{(2)}_k - \q_{k}\right)^{-1}(x,y) - \widetilde{H}_F (x,y) \right ] \ .
\]
On the right-hand side $\widetilde{H}_F(x,y)$ is a counter-term implicitly defined ---\textit{cf.} Equation \eqref{Eq: implicit definition of tilde HF}--- and it is related to a Hadamard parametrix whose asymptotic behaviour in the limit $x\to y$ is universal, and can be obtained from just the background geometry and the free (linearised) equation of motion of the theory  \cite{Brunetti2001,HollandsWald2001a,AAQFT15}.
The subtraction described above appears to be very close to the known point-splitting regularization usually employed to get expectation values of Wick powers in curved backgrounds.  
The use of a Hadamard parametrix constructed with local properties of the metric only, and not the subtraction of the two-point function of a state, is necessary to keep the theory covariant \cite{HollandsWald2001a, Brunetti2001}.
As we shall see later, its presence is essential in the case of Minkowski vacuum to get 
a flow of the effective action qualitatively similar to the one obtained with non local regulator.

We also discuss an approximation scheme which is used to get approximated solutions of the obtained flow equation.
The developed method is then applied in special cases. The first case is a quantum field theory on
a Minkowski spacetime in the vacuum, in which our results are similar to the one obtained with non-local regulators. We then pass to discuss the case of fields in a thermal state, getting results similar to those presented in \cite{Litim2006}.
Finally, to prove that the method is directly applicable to the curved background, we analyse a quantum field theory in the Bunch-Davies state on the de Sitter spacetime.
We refer to \cite{banerjee2022spatial} for an analysis, similar in spirit to the one presented in this paper, focused on cosmological spacetimes and on the case of a non-local regulator.

The paper is organized as follows. In Section \ref{Sec: Perturbative Algebraic Quantum Field Theory} we collect some known facts about pAQFT methods used in the paper. Section 3 contains the analysis of the effective action and its regularization. In Section 4 we present the derivation of the flow equation. Section 5 contains the analysis of the approximation methods used, and finally, in section 6 we discuss some explicit examples of the flow equations. Conclusions and outlook are presented in Section 7. We collect some more technical results in the appendices.

\section{Perturbative Algebraic Quantum Field Theory}\label{Sec: Perturbative Algebraic Quantum Field Theory}
    \subsection{Classical field theory}
    In this section, we give a brief overview of the functional approach used in the perturbative construction of interacting quantum field theory, recently developed in the context of Algebraic Quantum Field Theory (AQFT) \cite{Brunetti2009}. This approach can be applied in full generality to gauge field theories, but, in this paper we restrict our attention to scalar fields for simplicity.
    For a thorough treatment and further references, see \cite{AAQFT15}. In what follows, we consider a $d$-dimensional globally hyperbolic spacetime $(\mathcal M, g)$ \cite{HawkingEllis73} whose metric has signature $-+\ldots+$.
    
    In this approach, classical observables are described by complex-valued functionals $F \in \mathcal F$ over \textbf{off-shell field configurations} $\chi \in C^\infty(\mathcal M, \mathbb R)$ with certain properties. 
    In particular, to implement locality, we require that the functionals have compact support, i.e.,
    % \[
    % \supp F :=  \{ x \in \mathcal M \ | \ \forall \ \text{neighbourhoods} \ U \ \text{of} \ x \ \exists \ \chi, \psi \in C^\infty(\mathcal M^n, \mathbb R), \ \supp \psi \subset U : F(\psi + \chi) \neq F(\psi)  \}
    % \,.\]
        \begin{gather*}
    \supp F := \{ x \in \mathcal M \ | \ \forall \ \text{neighbourhoods} \ U \ \text{of} \ x \ \exists \ \chi, \psi \in C^\infty(\mathcal M^n, \mathbb R), \ 
%    \right.
    \\
%    \left.
    \supp \psi \subset U : F(\psi + \chi) \neq F(\psi)  \}
    \,.
	 \end{gather*}
Furthermore, the functionals we are working with are smooth with respect to functional derivatives, in the sense that for every $n$, the $n$-th functional derivative obtained as 
\[
\eval{\dv[n]{t} F(\chi + t\psi)}_{t = 0} = \langle F^{(n)}(\chi), \psi^{\otimes n} \rangle \, \qquad \chi,\psi\in C_{\mathrm{c}}^\infty(\mathcal{M})\,,
\]
is a well defined, compactly supported, symmetric distribution.  
To keep the presentation simple, we also require that elements of $\mathcal{F}$ have only finitely many non-vanishing functional derivatives. 

Finally, we restrict the set to the \textbf{microcausal functionals}, satisfying a particular condition on their wavefront set:
\[
\mathcal F_{\mu c} := \{ F \in \mathcal F \ | \ F \text{\; is smooth, compactly supported, and} \ \WF(F^{(n)}) \cap (\overline{V}^n_+ \cup \overline{V}^n_-) = \emptyset \} \ ,
\]
where $\overline V_{+ (-)}$ denotes the closure of the subset of the cotangent space whose elements have covectors in the future (past) light-cones. 
The vector space $\mathcal F_{\mu c}$
is equipped with a weak topology induced by the natural topologies of distributions. 
Actually, we say that $A_l\in \mathcal F_{\mu c}$ converges to $A\in \mathcal F_{\mu c}$ for $l\to \infty$ if for every $n$ and for every field configuration $\chi\in C^{\infty}_{\mathrm{c}}({\mathcal{M}};\mathbb{R})$, $A^{(n)}_l(\chi)$ converges to $A^{(n)}(\chi)$ in $\mathcal{D}'(\mathcal{M}^{n})$. 
We refer to \cite{Rejzner2016} for further details.
Two important subsets of the space of microcausal functionals are the \textbf{local functionals} $\mathcal F_{\text{loc}}$, whose $n$-th derivatives $F^{(n)}$ are only supported on the diagonal $\mathcal D_n:=\{(x_1,\dots, x_n)\subset \mathcal{M}^n \;|\; \forall i,j\; x_i=x_j\}$, satisfy $\WF(F^{(n)}) \perp T\mathcal D_n $ and are used to describe local interaction Lagrangians, and the \textbf{regular functionals} $\mathcal F_{\text{reg}}$, satisfying $\WF(F^{(n)}) = \emptyset \ \forall n$.

For example, consider the linear fields and the Wick powers used to construct the interaction Lagrangians,
\[
\Chi_f(\chi)
:= \int_{\mathcal{M}} \dd^d x f(x) \chi(x), \qquad 
\Chi^n_f(\chi)
 := \int_{\mathcal{M}} \dd^d x f(x) \chi(x)^n, \qquad f\in{C}^\infty_0(\mathcal{M})\,, 
\]
which are elements of $\mathcal{F}_{\mu c}$ for every $n$. Here and in the rest of the paper, $\dd^dx$ denotes the volume form on $\mathcal{M}$ induced by $g$ and the orientation of $\mathcal{M}$. Furthermore,  
we have that $\Chi_f$ is local and also regular while $\Chi_f^n$ with $n>1$ is local but not regular. 
In the following, we will often use the integral kernels of these local functionals with respect to $f$, and we denote them simply by $\chi^n(x)$.

The space $\mathcal{F}_{\mu c}$ is linear, and equipping it with the pointwise product $F\cdot G (\chi) := F(\chi) G(\chi)$ and with the involution $F^*(\chi) := \overline{F(\chi)}$, we obtain a commutative $*$-algebra denoted by $(\mathcal{F}_{\mu c},\cdot,*)$.
The latter is the off-shell algebra of classical observables of the classical field theory.

    \subsection{Deformation quantization}
In the case of quantum theories satisfying hyperbolic linear equations of motions, the quantum observables algebra is obtained from $(\mathcal{F}_{\mu c},\cdot,*)$ by deforming the pointwise product to a suitable non-commutative, associative product which encodes the canonical commutation relations.
Concretely, we consider a free action for a scalar field
\begin{equation}\label{eq:free-action}
I_0(\chi) = - \int_\mathcal M \dd^d x \left( \frac{1}{2} \nabla_a\chi \nabla^a \chi + \frac{\xi}{2}R \chi^2 + \frac{m^2}{2} \chi^2 \right) f \ ,
\end{equation} 
where $m$ is the mass of the field and $\xi$ its coupling to the scalar curvature $R$.
Furthermore, $f\in C_{\mathrm{c}}^\infty(\mathcal{M},\mathbb{R})$ is an infrared cutoff, which is a positive function equal to one on the portion of spacetime over which we want to test our theory and guarantees that $I_0$ is an element of $\mathcal{F}_{\mu c}$. This cutoff is eventually removed by taking the \textit{adiabatic limit} $f \to 1$ in a suitable way \cite{Brunetti1999}.

From the above action, one derives the equations of motion $P_0 \chi = 0$, where $P_0$ is the linear, hyperbolic differential operator
\[
P_0 = \square - m^2 - \xi R \ .
\]
Here, $\square$ is the d'Alembert operator associated with the metric $g$. On globally hyperbolic spacetimes, such operator admits unique advanced and retarded fundamental solutions (or propagators) $\Delta_{A,R}$ which in turn define the {\bf causal propagator} (or commutator function) $\Delta = \Delta_R - \Delta_A$. The commutator function is then used to deform the commutative pointwise product of elements of $\mathcal{F}_{\text{reg}}$, obtaining the standard quantum product. Concretely, we define the quantum product $\star$ on $\mathcal F_{\text{reg}}$ as
\[
F \star G := \textsf{M} \circ e^{\Upsilon_{i \Delta/2}} (F \otimes G) \ , \quad \Upsilon_{\Delta} := \int_{\mathcal M^2} \Delta(x,y) \frac{\delta}{\delta \chi(x)} \otimes \frac{\delta}{\delta \chi(y)} \dd x \dd y \, \qquad 
F,G\in\mathcal{F}_{\text{reg}},
\]
where $\textsf{M}$ maps the tensor product to the pointwise product, $\textsf{M}(F\otimes G)(\chi)= F(\chi)G(\chi)$. 
More explicitly,
\begin{equation} \label{star-product}
F \star G = FG + \sum_{n\geq 1}^\infty \frac{1}{n!} \langle F^{(n)}, {\bigg(\frac{i}{2}\Delta\bigg)}^{\otimes n} \ G^{(n)} \rangle \ .
\end{equation}
Such a product implements canonical commutation relations between linear fields, in the sense that
\[
[\Chi_f,\Chi_h]_\star = \Chi_f \star \Chi_f - \Chi_f \star \Chi_f = 
i \langle f, \Delta \ h \rangle \ , \qquad f,h\in C_{\mathrm{c}}^\infty(\mathcal{M})
\]
and it is compatible with the involution $*$, $(F \star G)^* = G^* \star F^*$.
Therefore, the off-shell algebra of regular observables is given by
\[
\mathcal A_{\text{reg}} = (\mathcal F_{\text{reg}}, \star, *) \ .
\]
The free algebra $\mathcal A_{\text{reg}}$ is in fact generated by the identity, together with all possible linear fields $\{ \Chi_f \ | \ f \in C_{\mathrm{c}}^\infty(\mathcal M) \}$,
actually, every element of $\mathcal{A_{\text{reg}}}$ can be obtained as the limit of a sequence of linear combinations of products of its generators. The convergence of this sequence is taken with respect the topology of $\mathcal{F}_{\mu c}$ we have briefly recalled above.

However, the algebra $\mathcal A_{\text{reg}}$ is too small to define a quantum theory, since the product written above cannot be directly extended to non-linear local functionals, like those necessary to describe interaction Lagrangians, or even the stress-energy tensor of a free theory, because they are too singular. 
In order to obtain a well defined product among generic local fields we have to further deform the product. 
This is done by using in the construction of the $\star$-product a suitable bidistribution $\Delta_+$ in place of $i\Delta/2$, of the form
\begin{equation}\label{eq:two-point}
\Delta_+ := \Delta_S + \frac{i}{2} \Delta \ ,
\end{equation}
where $\Delta_S$ is a real and symmetric distribution, while  $\Delta_+$ solves the linear equation of motion $P_0$ in the weak sense, and its wave front set satisfies the \textbf{microlocal spectrum condition} \cite{BFK, Radzikowski1996} 
\begin{equation}\label{eq:microlocal-spectrum-condition}
\WF(\Delta_+) = \{(x,y;k_x,k_y)\in T^{*}(\mathcal{M}^2)\setminus \{0\} \ |\ (x,k_x)\sim(y,-k_y), k_x \triangleright 0 \}
\end{equation}
where $(x,k_x)\sim(y,-k_y)$ holds if $x$ and $y$ are joined by a null geodesic $\gamma$, $g^{-1}k_x$ is tangent to $\gamma$ at $x$ and $-k_y$ is the parallel transport of $k_x$ along $\gamma$. Furthermore, $k_x\triangleright 0$ holds if $g^{-1} k_x$ is future pointing. It is known that states that are quasifree and have a two-point function $\omega_2$ satisfying 
this condition exist. Furthermore, \cite{Radzikowski1996} has shown that these two-point functions have an universal singular structure typical of Hadamard parametrices \cite{KayWald}. We shall be more precise on that in the next section.

Given $\Delta_+$, $\mathcal F_{\mu c}$ becomes a $*$-algebra with the quantum product $\star_{\Delta_+}$ defined by
\[
F \star_{\Delta_+} G = \textsf{M} \circ e^{\Upsilon_{\Delta_+}} (F \otimes G) \ .
\]
The canonical commutation relations between linear fields hold also with this further deformed product. Moreover, the $*$-subalgebra 
$(\mathcal{F}_{\text{reg}},\star_{\Delta_+},*)$ is isomorphic to $\mathcal{A}_{\text{reg}}$. 
The isomorphism $\alpha:\mathcal{A}_{\text{reg}} \to (\mathcal{F}_{\text{reg}},\star_{\Delta_+},*)$ is realised by 
\begin{equation}\label{eq:alpha}
\alpha_{\Delta_S}(F) = e^{\widetilde{\Upsilon}_{\Delta_S}}F, \qquad  \widetilde{\Upsilon}_{\Delta_S}   
= \frac{1}{2}\int_{\mathcal M^2} \Delta_S(x,y) \frac{\delta^2}{\delta \chi(x)\delta \chi(y)} \dd x \dd y.
\end{equation}
%The newly deformed product can be extended $\mathcal{F}_{\mu c}$.
We have thus obtained the extended algebra $(\mathcal F_{\mu c}, \star_{\Delta_+}, *)$,
which contains also local functionals.

The construction we have presented depends on the non-canonical choice of $\Delta_S$ (or equivalently, of $\Delta_+$) in \eqref{eq:two-point}.
However, different choices of $\Delta_+$ produce isomorphic extended algebras, the isomorphism being defined by
\[
\alpha_{\tilde{\Delta}_+-\Delta_+}: (\mathcal F_{\mu c}, \star_{\Delta_+}, *) \to (\mathcal F_{\mu c}, \star_{\tilde{\Delta}_+}, *)
\]
with $\alpha$ given in \eqref{eq:alpha}. Hence, the $*$-algebras obtained with different two-point functions 
satisfying the properties stated above are equivalent realizations of the same \textbf{extended algebra of fields},  which we denote by $\mathcal{A}$.
The algebra $\mathcal{A}$ is thus seen as an abstract $*$-algebra and every 
$(\mathcal F_{\mu c}, \star_{\Delta_+}, *)$
is a concrete faithful representation of $\mathcal{A}$.
Abstract elements of $\mathcal{A}$ are represented in $(\mathcal F_{\mu c}, \star_{\Delta_+}, *)$ by means of $\alpha_{\Delta_+}$.
Hence,
a particular choice of representation $(\mathcal F_{\mu c}, \star_{\Delta_+}, *)$ of $\mathcal{A}$ can be understood as a choice of reference frame to be used to represent observables.
However, this choice cannot play any role in the construction of physically relevant observables. In this respect, we observe that linear fields are invariant under the action of the isomorphisms $\alpha_{\tilde{\Delta}_+-\Delta_+}\Chi_f=\Chi_f$. However, this is not the case for $\Chi_f^n$ $n>1$: for example, 
\[
\alpha_{\tilde{\Delta}_+-\Delta_+}\Chi^2_f =  \Chi^2_f +  \int_\mathcal{M} \dd^d x (\tilde{\Delta}_+-\Delta_+)(x,x) f(x).
\]
Hence $\Chi_f^2$ in $(\mathcal F_{\mu c}, \star_{\Delta_+}, *)$ differs from $\Chi_f^2$ in $(\mathcal F_{\mu c}, \star_{\tilde{\Delta}_+}, *)$, and furthermore both of them cannot be covariant fields.

We shall take this observation into account later, when we discuss the form of Wick-ordered polynomials.
In particular, Wick powers $:\!\chi^n\!:_H\in\mathcal{A}$ normal ordered with respect to the Hadamard function $H$ (see below), are the elements of $\mathcal{A}$ constructed as $\alpha_{-H}\Chi_f^n$.

\subsection{States} \label{sec:states}
In order to extract physical predictions from the algebra of observables $\mathcal{A}$, one needs to map the space of functionals to actual numbers, associating to every element of $\mathcal{A}$ its expectation value. 
This is achieved by introducing a state $\omega$, which is a positive, normalised, linear functional, initially given on $\mathcal{A}_{\text{reg}}$ and then extended to $\mathcal{A}$.

Thanks to linearity, a state $\omega$ is determined once its $n$-point functions (correlation functions)
\[
\omega_n(x_1,\dots, x_n):=\omega(\chi(x_1) \star \ldots \star \chi(x_n))
\]
are given as distributions on compactly supported smooth functions.

The algebra of observables is constructed as functionals over off-shell field configurations, as the linear equations of motion can be implemented at the level of states.
Hence, the positive normalised linear functionals on $\mathcal{A}_{\text{reg}}$ we want to work with need to be compatible with the equations of motions. This is done requiring that 
% \[
% \text{Ker}{(\omega)} \supset \mathcal{F}_{\mu c}\cdot  \Chi_{P_0f}, \qquad f\in C_{\mathrm{c}}^\infty(\mathcal{M})
% \]
\[
\text{Ker}{(\omega)} \supset \mathcal{F}_{\text{reg}}\cdot  \Chi_{P_0f}, \qquad f\in C_{\mathrm{c}}^\infty(\mathcal{M})
\]
namely requiring that the $n$-point functions $\omega_n$ are weak solutions of the linear equation of motion in any of their entries.

Among all possible states, we need to select a class of sufficiently regular states, in order to extend them on $\mathcal{A}$ by continuity, hence completely characterising the extended states by the same $n$-point functions. (Further details are given in \cite{Rejzner2016}). 
We can do this requiring that the state satisfies the microlocal spectrum condition \cite{Sanders2010}, i.e.: i) the two-point function is such that \eqref{eq:microlocal-spectrum-condition} holds, ii) the one-point function is smooth, and
iii) the truncated $n$-point functions with $n>2$ are also smooth. 
It was proved in \cite{Radzikowski1996} that states which satisfy the microlocal spectrum condition have a two-point function with an universal singular structure, known as Hadamard condition \cite{KayWald}. 
In particular, this implies that, for $y$ in a normal neighbourhood of $x$, the integral kernel of the two-point function $\omega_2$ has the Hadamard form, namely 
\begin{equation}\label{eq:Hadamard}
\omega_2(x,y) = \lim_{\epsilon\to 0^+}\bigg[\frac{u(x,y)}{\sigma_\epsilon(x,y)} + v(x,y) \log( \frac{\sigma_\epsilon(x,y)}{\mu^2})\bigg] + w(x,y) = H(x,y) + \frac{i}{2}\Delta(x,y)+ w(x,y) \ 
\end{equation}
where 
$u$, $v$, and $w$ are smooth functions, $\sigma_{\epsilon}(x,y) = \sigma(x,y)+i\epsilon (t(x)-t(y))$ with $t$ a generic time function, and $\sigma$ is the Synge world function, which is one half of the squared geodesic distance taken with sign. 
The function $u$ is the square root of the van-Vleck-Morette determinant \cite{Poisson2011}, so it is a purely geometric object; $v$ is uniquely fixed by geometry, the coupling constants and mass parameters of the theory, and can be expanded in a formal power series of $\sigma$:
\[
v(x,y) = \sum_{n\geq 0} v_n(x,y) \sigma^n(x,y)\,,
\]
such that only $v_0$ is relevant in the coincidence limit without derivatives. Finally, $w$ remains an arbitrary, smooth function, containing the residual freedom in the choice of the state. The additional freedom in the constant $\mu$ is required to have a dimensionless argument in the logarithm.
Hence, in the coincidence limit, the divergent part of the 2-point function is encoded in the {\bf Hadamard function} $H(x,y)$ and in the causal propagator $\Delta$, which are known a priori.
As examples, it is known that the Minkowski vacuum, or generic thermal states for the free theory in flat spacetime are Hadamard states, as well as the Bunch-Davies states for linear fields on De Sitter spacetime.

% \nicorrection{Finally, in this work we restrict our attention to \textbf{quasifree} or \textbf{Gaussian} states.}

Although the main results of this paper hold for an arbitrary state, we will occasionally restrict our attention to 
states that are \textbf{quasifree} or \textbf{Gaussian} for the free theory.
Quasifree states are defined requiring that i) all odd $n$-point functions vanish, and ii) even $n$-point functions can be computed from the two-point function according to Wick's rule \cite{AAQFT15}. Therefore, fixing the symmetric part $\Delta_S$ of the two-point function uniquely identifies a quasifree state. These are the states whose GNS representations are of Fock type. Notice, however, that states that are quasifree for the free theory are not quasifree for the interacting theory.
We finally observe that if we are interested in computing expectation values in a state $\omega$ of $\mathcal{A}$ whose two-point function is $\Delta_+$, it is particularly useful to select the represent $\mathcal{A}$ with 
$(\mathcal{F}_{\mu c},\star_{\Delta_+},*)$ where the $\star$-product is constructed with $\Delta_+$.
Then, the expectation value of 
$F\in \mathcal{A}$ in the state $\omega$, is simply the evaluation of $G=\alpha_{{\Delta_+}}(F)$ on the vanishing configuration, namely $\omega(F) =  \alpha_{{\Delta_+}}(F)(0)= G(0)$.
% $F\in (\mathcal{F}_{\mu c},\star_{\Delta_+},*)$ in the state $\omega$ is simply the evaluation of $F$ on the vanishing configuration, namely
% $\omega(F) = F(0)$. \kacomment{Make it explicit with the composition with alpha.}

\subsection{Normal ordering}\label{se:normal-ordering}
Consider a representation $(\mathcal{F}_{\mu c},\star_{\Delta_+},*)$ of $\mathcal{A}$. 
We observe that the deformed $\star_{\Delta_+}$-product on $\mathcal{F}_{\mu c}$
 implements Wick theorem for the product of non-linear observables. 
In fact, by construction 
% \kacomment{(NOW it should be ok )Rewrite this a bit, using the two-dot notation on left.}
\[
\chi^2(x) = \lim_{y\to x} \big[\chi(x) \star_{\Delta_+} \chi(y) - \Delta_+(x,y)\big]\,,
\]
where $\chi(x)\star_{\Delta_+}\chi(y)$ is the integral kernel of $\Chi_f\star_{\Delta_+}\Chi_g$ seen as a distribution on $f\otimes g$,
is always finite, and the same holds for higher polynomials. This means that we can understand local functionals like
$\Chi^n_f$ in $(\mathcal{F}_{\mu c},\star_{\Delta_+},*)$ as Wick-ordered monomials of the fields, where the Wick ordering is with respect to $\Delta_+$.
Namely 
\[
:\!\Chi^n_f\!:_{\Delta_+}
:=
\int_{\mathcal{M}} :\!\chi^n\!:_{\Delta_+} f \dd^d x \in \mathcal{A}
\]
where 
\[
:\!\Chi^n_f\!:_{\Delta_+} = \alpha_{-\Delta_+}(\Chi_f).
\]
In this way $:\!\Chi^n_f\!:_{\Delta_+}\in \mathcal{A}$ is represented in $(\mathcal{F}_{\mu c},\star_{\Delta_+},*)$ as $\alpha_{\Delta_+}:\!\Chi^n_f\!:_{\Delta_+} =\alpha_{\Delta_+}\alpha_{-\Delta_+} \Chi_f
=\Chi_f$.

However, as also discussed above, such normal ordering is not covariant, because $\Delta_+$ is globally defined, and also because of the non canonical choice of the symmetric part in $\Delta_+$. Actually, the quasifree state constructed with $\Delta_+$ would represent a preferred reference state, in contradiction with the requirements of the Equivalence Principle. 
One would like to perform normal ordering with local quantities only; a possibility is to use the Hadamard function $H$ given in \eqref{eq:Hadamard}
to extract the local singularity structure from $\Delta_+$, and define a new normal ordering prescription accordingly: 
% \kacomment{Remove the last equality and the last term.
% (Now it is fixed, check
% the new version)}
\begin{equation}\label{eq:Chi2}
:\!\chi^2\!:_H(x) = \alpha_{-H}\chi(x)^2 
= \lim_{y\to x} \big[\chi(x)\chi(y) - H(x,y)\big] 
\ ,
\end{equation}
% \begin{equation}\label{eq:Chi2}
% :\chi^2:_H(x) = \lim_{y\to x} \big[\chi(x) \star \chi(y) - H(x,y) -\frac{i}{2}\Delta(x,y)\big] = \chi(x)^2  \ ,
% \end{equation}
which is local and generally covariant. The drawback is that now one needs to pay attention to the correction introduced in representing Wick-ordered polynomials with respect to $H$, in the algebra constructed with the $\star$-product defined by $\Delta_+ = H +i\Delta/2 + w$. For example, in view of \eqref{eq:Chi2} we have 
% \kacomment{Make this consistent with Kasia's book. (Now it should be ok, check it) }
that $:\!\chi^2\!:_H=\alpha_{-H}\chi^2 \in \mathcal{A}$ is represented in $\mathcal{F}_{\mu c}$ as
\[
\alpha_{\Delta_+}:\!\chi^2\!:_H(x)
= \chi^2(x)+w(x,x)
\]
and furthermore
\begin{equation}\label{eq:example-product}
\begin{aligned}
\alpha_{\Delta_+}\left( :\!\chi^2\!:_H(x) \star :\!\chi^2\!:_H(y) 
\right) =& 
\alpha_{\Delta_+}(:\!\chi^2\!:_H(x))  \star_{\Delta_+} \alpha_{\Delta_+}(:\!\chi^2\!:_H(y)) 
\\
= &
(\chi^2(x) +w(x,x)) (\chi^2(y)+w(y,y)) +
\\
 &
+ 4 \Delta_+ (x,y) \chi(x) \chi(y) + 2 \Delta_+(x,y)^2 \ ,
\end{aligned}
\end{equation}
and so its expectation value in the quasifree state $\omega$ whose two-point function is $\Delta_+$ is
\[
\omega(:\!\chi^2\!:_H(x) \star :\!\chi^2\!:_H(y)) = w(x,x)w(y,y) + 2 \Delta_+^2(x,y) \ .
\]
We finally observe that there is some freedom in the choice of $H$ (as for example the length scale $\mu$ in the logarithmic contribution in $H$); furthermore, we could add a covariantly constructed smooth part to $H$ without breaking general covariance. This freedom has been classified in \cite{HollandsWald2001a, HollandsWald2001b, HollandsWald2004} and at the level of the Wick square it reduces to the choice of two real "regularisation" constants $c_1$ and $c_2$
\[
:\!\chi^2\!:_H   =  
:\!\chi^2\!:_{\tilde{H}} + c_1m^2 +c_2 R\,.
\]

In equation \eqref{eq:example-product}, we kept explicit both the normal ordering prescription and the dependence on $\Delta_+$ in the $\star$-product, to clarify their relationship. In the usual QFT notation, one would leave implicit the $\star$-product, writing explicitly the normal-ordering prescription; in what follows, adopting the more usual notation in the mathematical physics literature, 
we will keep the $\star$-products explicit but, without referring to a particular representation, we drop the subfix $\Delta_+$; at the same time, if not strictly necessary, we keep the covariant normal ordering implicit.

    \subsection{Interacting theories}
We now discuss the perturbative construction of interacting fields. 
The action we are working with contains terms which give rise to non-linear contributions to the equation of motion.
In particular, the action takes the form
    \begin{equation}\label{eq:full-action}
    I(\chi) = I_0 + {\lambda} V = - \int \dd^d x \left( \frac{1}{2} \nabla_a\chi \nabla^a  \chi  + \frac{\xi}{2}R \chi^2 + \frac{m^2}{2} \chi^2 +            \lambda \frac{\chi^n}{n!}\right)  f,
    \end{equation}
where, as before, $f$ is a cutoff introduced to keep $I(\chi) \in \mathcal{F}_{\mu c}$.   
The cutoff $f$ is a smooth compactly supported function which is equal to $1$ on the causal completion of the region where we want to test our theory.
The action $I$ is divided into two parts: $I_0$, which coincides with \eqref{eq:free-action} and gives rise to linear equation of motion, and $V$, the interaction Lagrangian. With the methods discussed above, we have now at disposal the free algebra $\mathcal A$, and
interacting observables are constructed as a formal power series in the coupling constant $\lambda$ with coefficients in the free algebra $\mathcal A$.

The perturbative construction of interacting fields makes use of a new operation, the time-ordered product $T$. We start defining the $T$-product in the subset of regular functionals $\mathcal F_{\text{reg}}$ in a manner similar to \eqref{star-product}:
\[
F \cdot_T G = \mathsf{M} \circ e^{\Upsilon_{\Delta_F}} (F \otimes G) \ ,
\]
that is,
\begin{equation} \label{T-product}
F \cdot_T G = FG + \sum_{n \geq 1}^\infty \frac{1}{n!} \langle F^{(n)}, \Delta_F^{\otimes n} G^{(n)} \rangle \ .
\end{equation}
In the above equations, $\Delta_F$ is a Feynman propagator associated with $\Delta_+$,
\[
%\begin{equation} \label{relation-propagators}
\Delta_F = \Delta_+ + i \Delta_A = \Delta_S + \frac{i}{2}(\Delta_R + \Delta_A) \ .
\]
%\end{equation}
However, even if we are working with normal ordered quantities,  
the $T$-product defined on $\mathcal F_{\text{reg}}$ cannot be extended to $\mathcal F_{\mu c}$. Actually,
multiplying local functionals with overlapping support one
encounters divergences that cannot be treated with methods of microlocal analysis.
Nevertheless, at least among local functions, it is possible to construct time ordered products. 
Therefore, we introduce an axiomatic prescription for the $T$-product on local functionals, seen as a symmetric and multilinear map from multilocal functionals $\mathcal F_{\text{loc}}^{\otimes n}$ to $\mathcal A$, satisfying a set of conditions \cite{Brunetti2009, Brunetti1999, HollandsWald2001a, HollandsWald2001b, HollandsWald2004}. In particular, assuming the causal factorization property,
\[
T(F_1,...,F_n, G_1,...,G_m) = T(F_1,\ldots,F_n) \star T(G_1,...,G_m) \ \text{if} \ J^+(\supp F_i) \cap J^-(\supp G_j) = \emptyset \ ,
\]
and the symmetry of $T$, one gets that $T$ is determined for arguments  with pairwise non-overlapping supports, as in \eqref{T-product}.
To extend $T$ also to local functionals with overlapping supports, one may use a recursive procedure on the number of factors, following the method originally presented by Epstein-Glaser \cite{EpsteinGlaser1973}.
Actually, using locality of the factors, multilinearity and  field independence (discussed later), one reduces the problem of constructing the time-ordered product with $n$ elements as the problem of extending suitable distributions $t_n\in C_{\mathrm{c}}^\infty(\mathcal{M}^n\setminus \mathcal{D}_n$) defined outside the diagonal $\mathcal{D}_n$ to the whole $\mathcal{M}^n$ \cite{Brunetti1999,HollandsWald2001a}.
This can be done keeping fixed the Steinmann \cite{Steinmann} scaling degree of the distribution, up to an ambiguity 
which corresponds to the known renormalization freedom.

The action of $T$ on 
% \nicorrection{local functions maps local fields}
local functionals maps local functionals  to covariant normal ordered ones \cite{HollandsWald2001a}, so that $T(F) = :\; F\; :_H$; (see Section \ref{se:normal-ordering} for the description of the normal ordering we are using) when not strictly necessary we keep the latter operation implicit. 
% \kacomment{Spell out exactly what that is.}
Furthermore, the map $T$, which is originally defined on multilocal functionals, can also be extended to pointwise products of local fields \cite{FR}.
For more details, see \cite{Brunetti2009, Brunetti1999, HollandsWald2001a, HollandsWald2001b, HollandsWald2004, Fredenhagen2012}.

Having a definition of time-ordering at disposal, one can construct the $S$-matrix of a local interaction as
\[
S(V) := e_{\cdot_T}^{i {\lambda}TV} = 
{Te^{i \lambda V}
=}
\sum_n \frac{i^n {\lambda^n}}{n!} T(\underbrace{V\ldots V}_{n \text{ times}}) \ .
\]
The $S$-matrix is an element of $\mathcal{A}[[\lambda]]$ namely, a formal power series in the coupling constant present in {front of} $V$ with coefficients in $\mathcal A$ satisfying
\begin{enumerate}
\item Causality: $S(A+B+C) = S(A+B)\star S(B)^{-1} \star S(B+C)$ if $J^+(\supp A) \cap J^-(C) = \emptyset$;
\item $S(0) = 1, \quad S^{(1)}(0) = 1$;
\item Field independence: $S(V)^{(1)}=iS(V)\cdot_T {\lambda}TV^{(1)}$.
\end{enumerate}

Using the $S$-matrix, we define the \textbf{relative} $S$-matrix as
\begin{equation}
S_V(F) = S(V)^{-1} \star S(V+F) \ 
\end{equation}
where the inverse is taken with respect to the $\star$-product. Furthermore, we have that for every real local $V$, $S(V)$ is formally unitary, so $S(V)^{-1} = S(V)^*$. 
Finally, interacting fields are represented in the free algebra by means of the \textbf{Bogoliubov map} (also called \textbf{quantum M\o ller map})
\begin{equation}\label{eq:Bogoliubov}
R_V(F) = - \frac{i}{\lambda} \eval{\dv{t} S_V(tT^{-1}F)}_{t = 0} = S(V)^{-1} \star [S(V) \cdot_T F] \ .
\end{equation}
We can interpret $R_V(\chi)$ as the interacting field because $R_V(\chi)$ satisfies weakly the equation of motion, in the sense that
\[
R_V(P_0 \chi) + R_V({\lambda}TV^{(1)}) = P_0 \chi\,,
\]
 where $TV^{(1)}$ is just the first functional derivatives of the normal ordered local potential. Hence, since the free equations of motion are encoded in a generic state $\omega$, we have
\[
\omega \big( R_V(P_0 \chi + {\lambda} TV^{(1)}) \big) = 0 \ .
\]
In the following, when not strictly necessary, we shall not write explicitly the formal parameter $\lambda$ in the formulas and we shall denote the algebra of formal power series simply as  $\mathcal{A}[[V]]$.
We also stress that a sequence in $\mathcal{A}[[V]]$ converges if the coefficients of the formal power series converge in the weak topology of $\mathcal{F}_{\mu c}$ mentioned above. See \cite{Rejzner2016} for further details.

\begin{remark}
The pAQFT formalism in the functional approach closely resembles the usual pQFT formalism preferred in the physics literature.  For example, the time-ordered product in the vacuum state in the algebraic setting can be regarded as the generalization of the (often ill-defined) path integral approach in usual QFT, where $n$-point Green functions are computed from a path integral with Gaussian measure
\[
\langle T\chi(x_1)... \chi (x_n) \rangle = \int \mathcal D \chi e^{i I_0} \chi(x_1)...\chi(x_n) \ ,
\]
where $\langle\ldots \rangle$ is the expectation value in the Minkowski vacuum state.

At the same time, the interacting field $R_V(\chi)$ defined by the Bogoliubov formula is equivalent to the field in the interaction picture,
\[
\Chi_I = R_V(\chi) = S(V)^{-1} \star [S(V)\cdot_T \chi] = T(e^{iV})^{-1} T(e^{iV} \chi) \ ,
\]
where in the last equality we dropped the $\star$-product as it is common in the physics literature.
The main difference from the usual construction is that this formalism does not make use of the vacuum representations of fields. At the same time, local observables like the interaction Lagrangian are normal-ordered in a covariant way. 
These two differences make the formalism directly applicable to fields propagating on curved spacetimes, and more adequate to analyse interacting quantum field theories in generic states. 

There is, however, a price to pay.
When dealing with e.g. the scattering theory in QFT on flat spacetime, one usually takes expectation values in the vacuum state on Minkowski; in this case, the Gell-Mann-Low formula permits to simply 
factorise the $\star$-product present in the Bogoliubov map,
\begin{equation}\label{eq:gellmann-low}
\omega(S(V)^{-1} \star S(V)\cdot_T \chi) = \omega(S(V)^{-1}) \omega( S(V)\cdot_T \chi)= \omega(S(V))^{-1} \omega( S(V)\cdot_T \chi)\,,
\end{equation}
at least when the support of $V$ tends to the entire Minkowski spacetime namely when the adiabatic limit is taken and the cutoff $f$ in $V$ in \eqref{eq:full-action} is removed. Assuming without loss of generality that $f$ is equal to $1$ in the neighbourhood of an origin of $\mathcal{M}$, the adiabatic limit is taken replacing the cutoff with $f(x/n)$ and eventually considering the limit $n\to\infty$ of the various expectation values of interests. A discussion about the validity of \eqref{eq:gellmann-low} for the case of massive field can be found in Section 6.2 of \cite{Duetsch:2000nh} 
making use of estimates given in the appendix of \cite{Duetsch:1996eh}. 
In this case, we have that 
\[
\omega(\Chi_I) =  \frac{\omega(T(e^{iV}\chi))}{\omega(T e^{iV})} \ ,
\]
with analogous formulas for the $n$-point functions; in the perturbative expansion of the right-hand side, only time-ordered products appear.
However, for more general states (e.g. thermal states) or on curved backgrounds, the Gell-Mann-Low formula fails in general, and the $\star$-products play an important role as new, oriented (as the product is non-commutative) internal lines in Feynman diagrams. In this sense, the algebraic approach takes directly into account all these effects.

\end{remark}

\section{Functional renormalization}
    \subsection{Generating functionals}
    In this section, we introduce the generating functional $Z(j)$ of the truncated time-ordered products, and by doing so we generalize the results known on flat Minkowski spacetime for quantum field theories constructed over the Minkowski vacuum to curved spacetimes and to generic states.
    
    Let's start with the review of the standard definitions. On flat Minkowski spacetime, the effective action is introduced as the Legendre transform of the generating functional of the connected interacting time-ordered products.
    The latter is usually defined as follows: let $j\in C_{\mathrm{c}}^\infty(\mathcal M)$ and let $J(\chi):=\Chi_j(\chi)=\int j\chi$.
    Denoting by $\omega_0$ the vacuum state on Minkowski spacetime, the generating functional for the interacting time-ordered products is defined by
    \begin{align}\label{Eq: naive Z-generating functional}
    	\mathcal{Z}(j):=\frac{\omega_0(S(V+J))}{\omega_0(S(V))}\,.
    \end{align}
	It then follows that
	\begin{align}\label{Eq: Z-generating functional defining property}
		\frac{\delta^n}{i^n\delta j(x_1)\cdots\delta j(x_n)}\log \mathcal{Z}(j)\big|_{j=0}
		=(\omega_0^c\circ \RQ_{V})(\chi(x_1)\cdot_T\cdots_T\chi(x_n))\,,
	\end{align}
	where $\omega_0^c$ denotes the connected part of $\omega_0$, defined by
	\begin{align}
	    \label{Eq: connected n-point functions}
	    \omega_0^c(\chi(x_1)\star\cdots\star\chi(x_n))
	    :=\frac{\delta^n}{i^n\delta f(x_1)\cdots\delta f(x_n)}\log \omega_0[\exp_\star(i\Chi_f)]\Big|_{f=0}\,.
	\end{align}
	Similarly the connected time-ordered functions of $\omega_0$ are defined by
	\begin{align}
	    \label{Eq: connected time-ordered n-point functions}
	    \omega_0^c(\chi(x_1)\cdot_T\cdots_T\chi(x_n))
	    :=\frac{\delta^n}{i^n\delta f(x_1)\cdots\delta f(x_n)}\log \omega_0[S(\Chi_f)]\Big|_{f=0}\,.
	\end{align}
	
	Notice that the previous equality uses the Gell-Mann-Low formula,
	\begin{align*}
		\omega_0(\RQ_VA)
		=\omega_0(S(V)^{-1}\star[S(V)\cdot_TA])
		=\frac{\omega_0(S(V)\cdot_T A)}{\omega_0(S(V))}\,.
	\end{align*}
	As already discussed, this formula holds in the adiabatic limit, that is,  in the limit where the cutoff $f$ in $V$ given in \eqref{eq:full-action} tends to $1$ on the whole Minkowski space.
	It reduces the complexity in the actual evaluation of $\omega_0(\RQ_V\chi)$ as it requires to compute only time-ordered products.
	Unfortunately, as discussed above, this formula is not valid for states different from the vacuum one or on general curved backgrounds, and it also fails if one does not take the adiabatic limit.
	For this reasons, the definition of $\mathcal{Z}(j)$ has to be modified.
	Our approach is to provide a definition of $Z(j)$ which fulfils the defining property \eqref{Eq: Z-generating functional defining property} and which reduces to formula \eqref{Eq: naive Z-generating functional} for the case of the vacuum state on Minkowski spacetime.
	With this in mind, we define, for an arbitrary but fixed Hadamard state $\omega$ on $\mathcal{A}$,
	\begin{align}\label{Eq: Z-generating functional}
		Z(j):=\omega(S_{V}(J))
		=\omega[S(V)^{-1}\star S(V+J)]
		=\omega[R_VS(J)]\,,
	\end{align}
	out of which Equation \eqref{Eq: Z-generating functional defining property} is verified by direct inspection.
	For $\omega=\omega_0$ (Minkowski vacuum) and in the adiabatic limit, the definitions given in \eqref{Eq: naive Z-generating functional} and in \eqref{Eq: Z-generating functional} coincide because of Gell-Mann-Low formula.

 % \edobservation{Notice that the definition of $Z(j)$ with the Bogoliubov map corresponds with the definition of the generating function in terms of a path integral along the Schwinger-Keldysh contour.}
	
	A remarkable property of $\mathcal{Z}$ defined in \eqref{Eq: naive Z-generating functional} is that Equation \eqref{Eq: Z-generating functional defining property} still makes sense for $j\neq 0$; as a matter of fact
	\begin{align}\label{Eq: j not zero defining property of generating functional}
		\frac{\delta^n}{i^n\delta j(x_1)\cdots\delta j(x_n)}\log \mathcal{Z}(j)
		=(\omega_0^c\circ \RQ_{V+J})(\chi(x_1)\cdot_T\cdots\cdot_T\chi (x_n))\,.
	\end{align}
	This shows that $j$-functional derivatives of $\mathcal{Z}(j)$ are physically meaningful also for $j\neq 0$. 

	Unfortunately, $Z(j)$ defined in \eqref{Eq: Z-generating functional} does not satisfy the property \eqref{Eq: j not zero defining property of generating functional}, as one can see, for example, from the following computation: 
	\begin{align}
		\nonumber
		\frac{\delta}{i\delta j(x)}\log Z(j)
		&=\frac{\omega(S(V)^{-1}\star [S(V+J)\cdot_T\chi(x)])}{\omega(S_{V}(J))}
		\\&=\frac{\omega(S_{V}(J)\star \RQ_{V+J}\chi(x))}{\omega(S_{V}(J))}
		=:\omega_J(\RQ_{V+J}\chi(x))\,,
		\label{Eq: omegaJ definition}
	\end{align}
	where $\omega_J\colon\mathcal{A}[[V]]\to\mathbb{C}[[V]]$ is a well-defined linear functional which, however, fails to be positive.
 % \kacorrection{This is an unfortunate feature of the new definition \eqref{Eq: Z-generating functional}, but cannot be remedied, unless one restricts onself to Minkowski vacuum, as can be seen from the following no-go result.}

To justify our definition of $Z$, recall that on regular functionals, one can introduce the interacting star product as
	\[
	F\star_V G\doteq R_V^{-1}(R_V(F)\star R_V(G))\,,
	\]
	For a given state $\omega$ of the free theory, the interacting state is defined by $\omega_V\doteq \omega\circ R_V$, so the correlator of $n$ interacting fields in such state is given by:
	\[
	\omega_V(\chi(x_1)\star_V\dots\star_V \chi(x_n))=\omega(R_V(\chi(x_1))\star\dots \star R_V(\chi(x_n))\,.
	\]
	The time-ordered version of $\star_V$ coincides with $\cdot_T$ (see e.g. \cite{Drago2015}) so the time-ordered correlator of $n$ fields in the interacting theory (interacting Green function) is given by
	\[
		\omega_V(\chi(x_1)\cdot_T\dots\cdot_T \chi(x_n))=	\omega\circ R_V(\chi(x_1)\cdot_T\dots\cdot_T \chi(x_n))\,.
	\]
	On the other hand,
	\begin{equation}
	  \eval{\frac{\delta^n Z}{i^n \delta j(x_1)...\delta j(x_n)}}_{j=0} = \omega \circ R_V \left ( \chi(x_1) \cdot_T ... \cdot_T \chi(x_n) \right) \,,
	\end{equation}
	so, for vanishing sources, the functional derivatives of \eqref{Eq: Z-generating functional} give exactly the interacting expectation value of the time-ordered correlation functions.
	
In this sense, the defining property of $Z$ as the generating functional for the time-ordered correlation functions is satisfied also by our definition, which generalises the usual approach to generic states and possibly curved spacetimes. Moreover, as discussed in \cite{Drago2015,Lindner2013}, if the support of $F$ does not intersect the past of the support of $G$, ($F \gtrsim G$), the expectation value in the interacting state of the time-ordered correlation functions, coincide with the expectation value of the correlation function between interacting observables, since, 
\begin{equation}
   F \gtrsim G \Rightarrow \ R_V(F \cdot_T G) = R_V(F) \star R_V(G) \ .
\end{equation}
Coming back to the property \eqref{Eq: j not zero defining property of generating functional}, we show in the lemma below that it cannot be fulfilled if one departs from the Minkowski vacuum. Hence, it is actually not a sensible condition to require for general states.
	\begin{lemma}\label{Lem: no-go result for strong generating functional}
		If $\omega$ does not fulfil the Gell-Mann-Low formula given in \eqref{eq:gellmann-low}, there is no functional $\zeta(j)$ satisfying Equation \eqref{Eq: j not zero defining property of generating functional}.
	\end{lemma}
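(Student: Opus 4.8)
The plan is to argue by contraposition: assuming a functional $\zeta(j)$ obeying \eqref{Eq: j not zero defining property of generating functional} for all $j$ and all $n$, I would show that $\omega$ is forced to satisfy the Gell-Mann-Low factorisation \eqref{eq:gellmann-low}. The leverage is that $\log\zeta$ is a \emph{single} functional, so the identities \eqref{Eq: j not zero defining property of generating functional} for consecutive $n$ are not independent: differentiating the $n$-point identity once more in $j$ must reproduce the $(n+1)$-point identity. I would extract the obstruction already at the first nontrivial order, comparing $n=1$ with $n=2$, and then contrast the outcome with the explicit computation \eqref{Eq: omegaJ definition}, which shows that the natural candidate $Z$ produces on the right-hand side not $\omega$ but the dressed functional $\omega_J$.

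The computational input is the exact, state-independent formula for the $j$-derivative of the Bogoliubov map. Writing $W=V+J$ and using field independence of the $S$-matrix, $\frac{\delta}{\delta j(y)}S(W)=iS(W)\cdot_T\chi(y)$, together with $R_W(F)=S(W)^{-1}\star[S(W)\cdot_T F]$ and the Leibniz rule for $\star$, one gets
\[
\frac{\delta}{i\delta j(y)}R_{W}(F)=R_{W}\big(\chi(y)\cdot_T F\big)-R_{W}(\chi(y))\star R_{W}(F).
\]
Applying $\omega$ with $F=\chi(x)$ computes the $j$-variation of the connected one-point function $\omega(R_W\chi(x))=(\omega^c\circ R_W)(\chi(x))$ that the $n=1$ instance of \eqref{Eq: j not zero defining property of generating functional} fixes to be $\frac{\delta}{i\delta j(x)}\log\zeta$.

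I would then equate the two available expressions for the connected two-point function. The $n=2$ instance of \eqref{Eq: j not zero defining property of generating functional} gives, as a connected two-point function, $\omega(R_W(\chi(x)\cdot_T\chi(y)))-\omega(R_W\chi(x))\,\omega(R_W\chi(y))$, whereas differentiating the $n=1$ instance with the identity above gives $\omega(R_W(\chi(x)\cdot_T\chi(y)))-\omega(R_W\chi(y)\star R_W\chi(x))$. The interacting time-ordered two-point term cancels, leaving the factorisation
\[
\omega\big(R_{W}\chi(x)\star R_{W}\chi(y)\big)=\omega(R_{W}\chi(x))\,\omega(R_{W}\chi(y)),\qquad\forall\,j .
\]
This states that $\omega$ factorises over $\star$-products of interacting fields, i.e. that the dressed functional $\omega_J$ of \eqref{Eq: omegaJ definition} agrees with $\omega$ on such products; tracing back the definitions of $R_W$ and of $\omega_J(\cdot)=\omega(S_V(J)\star\,\cdot\,)/\omega(S_V(J))$ one recognises this as the Gell-Mann-Low identity \eqref{eq:gellmann-low} for the interaction $W$. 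Since \eqref{eq:gellmann-low} is exactly what makes $Z$ reproduce \eqref{Eq: j not zero defining property of generating functional} at $j=0$, the contrapositive yields the lemma.

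The step I expect to be the main obstacle is precisely this last identification: upgrading the factorisation from a necessary condition to one \emph{equivalent} to \eqref{eq:gellmann-low}, uniformly in $j$. One must check that the factorisation for all $j$ is genuinely tied to GML and not merely a weaker consequence (the bare integrability/curl-freeness of the one-point identity only produces the antisymmetric, commutator-type statement $\omega([R_W\chi(x),R_W\chi(y)]_\star)=0$, which captures just part of it), and that the cancellation of the time-ordered two-point term is legitimate once the connected functional $\omega^c$ and the covariant normal ordering of the coincident local fields are accounted for. The safest route is probably to carry the dressed state $\omega_J$ explicitly throughout: since $\frac{\delta}{i\delta j}\log Z=\omega_J(R_{V+J}\chi)$ is already established in \eqref{Eq: omegaJ definition}, the analysis reduces to asking when the honest correlators $\omega(R_{V+J}\chi)$ can integrate to a functional at all, and one then shows that this is possible for every $j$ if and only if $\omega_J=\omega$, i.e. if and only if $\omega$ obeys \eqref{eq:gellmann-low}.
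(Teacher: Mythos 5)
Your computational core coincides with the paper's: you differentiate the $n=1$ instance of \eqref{Eq: j not zero defining property of generating functional} using the identity
$\frac{\delta}{i\delta j(x_1)}\omega(R_{V+J}\chi(x_2))
=\omega\big(R_{V+J}(\chi(x_1)\cdot_T\chi(x_2))\big)-\omega\big(R_{V+J}\chi(x_1)\star R_{V+J}\chi(x_2)\big)$
and compare with the $n=2$ instance; this is exactly how the paper produces its quantities $A$ and $B$. The divergence is in how the contradiction is extracted, and there your argument has a genuine gap, one you yourself flag. After cancelling the time-ordered term you are left with the factorisation $\omega(R_{V+J}\chi(x_1)\star R_{V+J}\chi(x_2))=\omega(R_{V+J}\chi(x_1))\,\omega(R_{V+J}\chi(x_2))$ for all $j$, and your plan is to "recognise" this as \eqref{eq:gellmann-low}. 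That identification cannot work: the left-hand side has the antisymmetric part $\frac{1}{2}\omega([R_{V+J}\chi(x_1),R_{V+J}\chi(x_2)]_\star)$, which at zeroth order in $V$ equals $\frac{i}{2}\Delta(x_1,x_2)$ --- a state-independent c-number, since the linear interaction $J$ merely shifts $\chi$ by a c-number and leaves the commutator untouched --- while the right-hand side is symmetric. So your factorisation condition fails identically, for \emph{every} state, vacuum included; it is strictly stronger than the Gell-Mann--Low formula and in particular not equivalent to it uniformly in $j$. The contraposition therefore cannot be closed along your route: either you prove nothing about \eqref{eq:gellmann-low}, or you prove that no state whatsoever admits $\zeta$, with the GML hypothesis playing no role --- contradicting the paper's own assertion that $\mathcal{Z}$ built on the Minkowski vacuum does satisfy \eqref{Eq: j not zero defining property of generating functional}.

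The paper avoids this trap by never attempting to derive GML from the existence of $\zeta$. It symmetrises the differentiated $n=1$ identity (second functional derivatives must be symmetric in $x_1,x_2$), which removes the unsymmetrised obstruction and, via causal factorisation $R_{V+J}(\chi(x_1)\cdot_T\chi(x_2))=R_{V+J}\chi(x_1)\star R_{V+J}\chi(x_2)$ for $t(x_1)>t(x_2)$, collapses $A$ to $\frac{1}{2}\sign(t(x_1)-t(x_2))\,\omega([R_{V+J}\chi(x_1),R_{V+J}\chi(x_2)]_\star)$. It then concludes with an explicit zeroth-order evaluation in a quasifree state: $A=\frac{i}{2}\sign(t(x_1)-t(x_2))\Delta(x_1,x_2)$ versus $B=\Delta_F(x_1,x_2)$, so the mismatch is precisely the symmetric part $\Delta_S$ of the two-point function, present in $B$ but absent from $A$. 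Your proposal contains no such concrete evaluation, and your fallback paragraph ("carry $\omega_J$ explicitly and show integrability holds iff $\omega_J=\omega$") is a plan rather than an argument --- indeed the "iff" it requires is exactly what fails. To repair the write-up, drop the attempt to recover \eqref{eq:gellmann-low} from the factorisation and instead exhibit the zeroth-order discrepancy $\Delta_S$ between the symmetrised second derivative and the connected time-ordered two-point function, as the paper does.
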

	\begin{proof}
	Let $\zeta(j)$ be any generating functional fulfilling \eqref{Eq: j not zero defining property of generating functional} for all $n\in\mathbb{N}$.
	For $n=1$ we have
	 \begin{equation}
	 - i \log \zeta(j)^{(1)}(x) = \omega^c \circ R_{V+J}(\chi(x)) \ .
	 \end{equation}
		By direct inspection we have
		\begin{align*}
			A(x_1,x_2):=\frac{\delta^2}{i^2\delta j(x_1)\delta j(x_2)}\log \zeta(j)
			&=\frac{1}{2}\frac{\delta}{\delta j(x_1)}\omega(\RQ_{V+J}\chi(x_2))
			+x_1\leftrightarrow x_2
			\\&=\frac{1}{2}[\omega(\RQ_{V+J}[\chi(x_1)\cdot_T\chi(x_2)])
			\\&\qquad -\omega(\RQ_{V+J}\chi(x_1)\star \RQ_{V+J}\chi(x_2))]
			+x_1\leftrightarrow x_2
			\\&=\frac{1}{2}\sign(t(x_1)-t(x_2))\omega([\RQ_{V+J}\chi(x_1),\RQ_{V+J}\chi(x_2)]_\star)\,,
		\end{align*}
		where we used the symmetry of the left-hand side in $x_1,x_2$.
		By Equation \eqref{Eq: j not zero defining property of generating functional} for $n=2$ the right-hand side should be equal to
		\begin{align*}
			B(x_1,x_2):=\omega(\RQ_{V+J}[\chi(x_1)\cdot_T\chi(x_2)])
			-\omega(\RQ_{V+J}\chi(x_1))\omega(\RQ_{V+J}\chi(x_2))\,,
		\end{align*}
		which in general is not the case, actually at zeroth order in the perturbation parameter, for quasifree states and for $t(x_1)>t(x_2)$
		\[
			A(x_1,x_2)=\frac{i}{2}\sign(t(x_1)-t(x_2)) \Delta(x_1,x_2)), \qquad 
			B(x_1,x_2)= \Delta_F (x_1,x_2),
		\]
		and $\Delta_S$, the symmetric part of the two-point function, is present in $B$ but not in $A$.
	\end{proof}

	Even though the interpretation given by Equation \eqref{Eq: j not zero defining property of generating functional} is not at our disposal, we can still make sense of the non-positive ``states'' $\omega_J$ defined in Equation \eqref{Eq: omegaJ definition}.
	As a matter of fact, $\omega_J$ is a positive state on an algebra $\mathcal{A}_{\circledast}[[V]]$ which is isomorphic to $\mathcal{A}[[V]]$.
	\begin{proposition}\label{Prop: modified star-product algebra}
		Let $U:=S_{V}(J)/\omega(S_{V}(J))\in\mathcal{F}_{\mu c}[[V]]$.
		Let $\mathcal{A}_{\circledast}$ be the $*$-algebra 
        obtained equipping $\mathcal{F}_{\mu c}[[V]]$ 
        % generated by $\mathcal{F}_{\mu c}[[V]]$ together 
        with the product $\circledast$ and the $*$-involution $*_\circledast$
		\begin{align}\label{Eq: modified star-product and involution}
			A\circledast B:=A\star U\star B\,,\qquad
			A^{\ast_\circledast}:=U^*\star A^*\star U^*\,.
		\end{align}
		Then, $\mathcal{A}_{\circledast}$ is a unital $\ast$-algebra and $\omega_J$, defined as per Equation \eqref{Eq: omegaJ definition}, is a state on $\mathcal{A}_{\circledast}$.
		Moreover, the map $\varsigma\colon\mathcal{A}\to\mathcal{A}_\circledast$ defined by $\varsigma(A):=U^*\star A$ is a $*$-isomorphism, and $\varsigma^*\omega_J=\omega$.
	\end{proposition}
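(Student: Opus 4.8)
The plan is to let the $*$-isomorphism $\varsigma$ carry the entire statement. Once $\varsigma\colon(\mathcal{A}[[V]],\star,*)\to(\mathcal{A}_\circledast,\circledast,\ast_\circledast)$ is shown to be a unital $*$-isomorphism with $\varsigma^*\omega_J=\omega$, the remaining claims become automatic: $\mathcal{A}_\circledast$ is a unital $*$-algebra as the isomorphic image of one, and the positivity of $\omega_J$ transports from that of $\omega$. Concretely, writing a generic element of $\mathcal{A}_\circledast$ as $\varsigma(A)$ and using that $\varsigma$ intertwines products and involutions,
\[
\omega_J\big(\varsigma(A)^{\ast_\circledast}\circledast\varsigma(A)\big)=\omega_J\big(\varsigma(A^*\star A)\big)=(\varsigma^*\omega_J)(A^*\star A)=\omega(A^*\star A)\geq 0,
\]
so the non-positivity of $\omega_J$ on $(\mathcal{F}_{\mu c}[[V]],\star)$ noted above is cured purely by the change of product and involution. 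Hence the real content is the isomorphism together with the pullback identity.

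First I would record that $U$ is $\star$-invertible: $S_V(J)=S(V)^{-1}\star S(V+J)$ is invertible by construction and the scalar $\omega(S_V(J))\in\mathbb{C}[[V]]$ is invertible because its lowest-order term is $\omega(1)=1$, so $U^{-1}=\omega(S_V(J))\,S_V(J)^{-1}$ exists; this $U^{-1}$ is the unit of $\mathcal{A}_\circledast$, since $U^{-1}\circledast A=U^{-1}\star U\star A=A=A\circledast U^{-1}$. The algebraic engine of the proof is the formal unitarity of the relative $S$-matrix,
\[
S_V(J)^*=\big(S(V)^{-1}\star S(V+J)\big)^*=S(V+J)^{-1}\star S(V)=S_V(J)^{-1},
\]
obtained from $(F\star G)^*=G^*\star F^*$ and the formal unitarity $S(W)^*=S(W)^{-1}$ for real local $W$ --- which holds here precisely because the regulator $Q_k$ is local. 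It is this relation, together with the normalizing scalar $\omega(S_V(J))$, that governs how $U$, $U^*$, and $U^{-1}$ combine in the computations below.

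Next I would verify the axioms by direct manipulation. Associativity of $\circledast$ is immediate, both $(A\circledast B)\circledast C$ and $A\circledast(B\circledast C)$ reducing to $A\star U\star B\star U\star C$. That $\ast_\circledast$ is antilinear, antimultiplicative for $\circledast$, and squares to the identity is checked by expanding the definition and repeatedly applying $(F\star G)^*=G^*\star F^*$ together with the unitarity relation for $S_V(J)$. For $\varsigma(A)=U^*\star A$ one then checks the homomorphism property $\varsigma(A\star B)=\varsigma(A)\circledast\varsigma(B)$ and the intertwining $\varsigma(A^*)=\varsigma(A)^{\ast_\circledast}$; bijectivity is clear since $\varsigma$ is left $\star$-multiplication by the invertible $U^*$, with inverse $B\mapsto(U^*)^{-1}\star B$. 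Finally the pullback identity reduces to $\omega_J(\varsigma(A))=\omega(U\star U^*\star A)$ and the normalization to $\omega_J(U^{-1})=\omega(1)=1$.

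I expect the main obstacle to be bookkeeping rather than conceptual, and to be concentrated in exactly one place: the careful tracking of the normalizing scalar $\omega(S_V(J))$ and its conjugate $\omega(S_V(J)^{-1})=\overline{\omega(S_V(J))}$ against the unitarity $S_V(J)^*=S_V(J)^{-1}$. It is their interplay that must be arranged so that $\ast_\circledast$ genuinely squares to the identity, that $\varsigma$ intertwines the two involutions, and that $\omega_J\circ\varsigma$ equals $\omega$ \emph{on the nose} rather than up to a scalar factor; this is the delicate step and deserves to be carried out explicitly. A secondary point is to make the formal unitarity $S(W)^*=S(W)^{-1}$ precise as an identity of formal power series in $V$ with microcausal-functional coefficients, and to read positivity in the formal-power-series sense, i.e.\ as positivity of the lowest non-vanishing order.
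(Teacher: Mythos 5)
Your route is the same as the paper's: verify directly that $\circledast$ is associative and unital, that $\ast_\circledast$ is a compatible involution, that $\varsigma(A)=U^*\star A$ intertwines products and involutions, and that $\omega_J\circ\varsigma=\omega$, after which positivity of $\omega_J$ transports from $\omega$ exactly as you say. The one place where you stop short is the one place where the paper inserts its only substantive input: its proof reads everything off the single assertion that $U$ \emph{itself} is unitary, $U\star U^*=U^*\star U=\boldsymbol{1}$, so that the unit of $\mathcal{A}_\circledast$ is $\boldsymbol{1}_\circledast=U^*$ (which then coincides with your $U^{-1}$), and each cancellation you need --- $(A^{\ast_\circledast})^{\ast_\circledast}=A$, $\varsigma(A)\circledast\varsigma(B)=\varsigma(A\star B)$, $\varsigma(A^*)=\varsigma(A)^{\ast_\circledast}$, and $\omega_J(\varsigma(A))=\omega(U\star U^*\star A)=\omega(A)$ --- is an instance of $U\star U^*=\boldsymbol{1}$. (A small slip: the invertibility of $\omega(S_V(J))$ does not come from its lowest-order term being $\omega(1)=1$; at order $V^0$ it equals $\omega(S(J))$, which is $\neq 1$ for $j\neq 0$, though still a nonzero scalar.)

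You are right to isolate the interplay between the normalizing scalar and the unitarity of $S_V(J)$ as the delicate step, but deferring it leaves the proposal incomplete, and it is not bookkeeping. Since $S_V(J)^*=S_V(J)^{-1}$, unitarity of $U=S_V(J)/\omega(S_V(J))$ is \emph{equivalent} to $|\omega(S_V(J))|=1$, i.e.\ to $Z(j)$ being a pure phase (equivalently $W(j)$ real, which is the content of the Remark following Equation \eqref{interacting-F-propagator}). If this modulus is not $1$, then $U\star U^*=|\omega(S_V(J))|^{-2}\boldsymbol{1}$, the candidate involution satisfies $(A^{\ast_\circledast})^{\ast_\circledast}=|\omega(S_V(J))|^{-4}A$, and $\omega_J\circ\varsigma=|\omega(S_V(J))|^{-2}\omega$, so every clause of the proposition degenerates. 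The zeroth-order check you propose is therefore the right test and must actually be carried out: by Lemma \ref{Lem: time-ordered multiplication with S(J)}, at order $V^0$ and for quasifree $\omega$ one has $\omega(S(J))=e^{-\Delta_F(j,j)/2}$, whose modulus is $e^{-\Delta_S(j,j)/2}$ and is not manifestly $1$; the pure-phase property is exactly the statement that the symmetric part $\Delta_S(j,j)$ drops out of $\mathrm{Re}\log Z(j)$. Until that identity is secured (or the statement is reformulated so as not to require it), neither your argument nor the paper's one-line appeal to unitarity of $U$ closes the proof.
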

	\begin{proof}
		By direct inspection, $\circledast$ is associative with unit given by $\boldsymbol{1}_\circledast:=U^*$ --- notice that $U$ is unitary as $V,J\in\mathcal{F}_{\text{loc}}$.
		Moreover $\circledast$ and $\ast_\circledast$ are compatible, meaning that $(A\circledast B)^{\ast_\circledast}=B^{\ast_\circledast}\circledast A^{\ast_\circledast}$.
		Since $\ast_\circledast$ is an involution, we have that $\mathcal{A}_{\circledast}$ is a unital $*$-algebra.
		
		Now, let $\varsigma\colon\mathcal{A}\to\mathcal{A}_\circledast$ be defined by $\varsigma(A):=U^*\star A$.
		Then $\varsigma$ is linear and invertible, and it holds that
		\begin{align*}
			\varsigma(A)\circledast\varsigma(B)
			=U^*\star A\star \cancel{U\star U^*}\star B
			=\varsigma(A\star B)\,.
		\end{align*} 
		It follows that $\varsigma$ is a $*$-isomorphism.
		Finally
		\begin{align*}
			\varsigma^*\omega_J(A)
			:=\omega_J(\varsigma(A))
			=\omega(A)\,.
		\end{align*}
	\end{proof}
	Proposition \ref{Prop: modified star-product algebra} shows that the $j$-functional derivatives of the generating functional $Z(j)$, given in Equation \eqref{Eq: Z-generating functional}, are still physically meaningful for $j\neq 0$.
	As a matter of fact, such derivatives coincide with the connected time-ordered functions \eqref{Eq: connected time-ordered n-point functions} for the state $\omega_J$ on $\mathcal{A}_\circledast$.
	Notice that, as $\mathcal{A}_\circledast[[V]]\simeq\mathcal{A}[[V]]$, the latter state can be interpreted as a state on $\mathcal{A}[[V]]$ too.

	\subsection{Effective action}
	Starting from the generating functional $Z(j)$ defined in Equation \eqref{Eq: Z-generating functional}, we may introduce the effective action $\tilde\Gamma$ using the standard definition.
	Let $W(j)$ be the functional defined by
	\begin{align}\label{Eq: W-definition}
		Z(j)
		=e^{iW(j)}\,.
	\end{align}
	Notice that, on account of Equation \eqref{Eq: Z-generating functional defining property}, we have
	\begin{align*}
		\frac{\delta W}{\delta j(x)}\bigg|_{j=0}
		= {\frac{1}{Z(0)}} \omega( \RQ_{V}(\chi(x))\,.
	\end{align*}
	The effective action $\tilde\Gamma$ is the functional defined by
	\begin{align}\label{Eq: effective action definition}
		\tilde\Gamma(\phi)
		=W(j_\phi)-J_\phi(\phi)\,,
	\end{align}
	where $j_\phi\in C_{\mathrm{c}}^\infty(M)$ is the current defined by
	\begin{align}\label{Eq: jphi-defining property}
		\frac{\delta W}{\delta j}\bigg|_{j=j_\phi}=\phi\,.
	\end{align}
	Proposition \ref{Prop: existence of jphi} given below shows that Equation \eqref{Eq: jphi-defining property} has a unique perturbative solution, so that Equation \eqref{Eq: effective action definition} really defines a functional.
	
     \subsection{Regularised generating functionals}
    
A direct computation of the generating functional $Z(j)$ and/or of the effective action is usually not feasible.
The main idea of the Wilsonian renormalization group is to progressively take into account high-energy degrees of freedom. This is usually done introducing an artificial scale $k$, such that the modes with energy $E < k$ are suppressed.
As discussed in the introduction, one introduces the scale $k$ so that in the limit $k\to 0$ one recovers the standard definition for the partition function, thus taking into account quantum fluctuations at all energies, while in the limit $k\to\infty$ one obtains a theory governed by a simple classical action.
If this is the case, we can obtain information about the full theory by analysing how the generating functional and the effective action transform under rescaling of $k$. 

The scale $k$ is usually introduced adding a quadratic contribution in the construction of $Z$. As discussed in the introduction, we use a local regulator
   \[
	Q_{k} = -\frac{1}{2} \int \dd x\, \q_{k}(x) \chi(x)^2\,,
    \]
    and we study the behaviour of $\Gamma$ and $W$ under changes of the scale $k$.
    \begin{remark}\label{Rmk: action I0k}
        We stress that the equation of motion for the action $I_{0k}=I_0+Q_k$ reads $P_{0k}\chi=P_0\chi+Q_k^{(1)}=(P_0-q_k)\chi=(\square-m^2-q_k)\chi=0$, so that $q_k$ really plays the role of a mass term.
        Moreover, to avoid any confusion, we stress that, in what follows, $\mathcal{A}$ will denote the $\ast$-algebra associated with the action $I_0$.
    \end{remark}
	In this paper we chose $\q_k(x) = k^2 f(x)$, where $f$ is a compactly supported smooth function which is usually $1$ on large region of the spacetime and which plays the role of adiabatic cutoff.
	Eventually, this cutoff is removed considering a suitable limit in which $f$ tends to 1.
	
	Under that limit $Q_{k}$ coincides with a mass contribution to the field, and since usually massive fields show a better infrared behavior compared to massless ones, $Q_k$ plays the role of an infrared regulator.

    We then propose the following definition for the regularised generating functional $Z_{k}$:
    \begin{equation}\label{eq:Z-reg}
    Z_{k}(j) := \omega(S(V)^{-1} \star S(V + J + Q_{k})) \,,
    \end{equation}
    which reduces to the $Z(j)$ given in Equation \eqref{Eq: Z-generating functional} in the limit $k\to0$, since $Q_k$ vanishes.
    Here, $\omega$ is an arbitrary Hadamard state on $\mathcal{A}$ which is not necessarily quasifree for the free theory.

This regularization is consistent with the usual IR regularization one can find in the literature \cite{Berges2000}. Actually, if the Gell-Mann-Low formula holds, we can factor the definition of the relative partition function into
\begin{equation}
Z_k(j) = \frac{1}{\omega(S(V))} \omega(S(V+J+Q_k))
{=:}\frac{\mathcal{Z}_k(j)}{\omega(S(V))}\ ,
\end{equation}
which, apart from a normalization constant $\omega(S(V))$, coincides with the usual regularised path integral formulation.
However, since the Gell-Mann-Low formula is broken on a generic curved spacetime $\mathcal{M}$ or if the state $\omega$ is not the Minkowski vacuum, we shall derive the generating functional for the connected correlation functions and the effective action starting from $Z_k(j)$ instead of $\mathcal{Z}_k(j)$.

In analogy with the unregularised functionals, we define the regularised generating functional for the connected correlation functions as
\begin{equation}\label{Eq: Wk definition}
    W_{k}(j) = - i \log Z_{k}(j) \ .
\end{equation}
The \textbf{classical field} $\phi$ at fixed current $j\in C_{\mathrm{c}}^\infty(M)$ is then obtained as
\begin{equation} \label{def-phi}
    \frac{\delta W_k}{\delta j(x)} = \frac{1}{Z_k(j)} \omega \big ( R_V(S(J+Q_k)\cdot_T \chi(x) )\big ) = \phi(x) \ .
\end{equation}
As we show in Proposition \ref{Prop: existence of jphi}, the relation between $j$ and $\phi$ can be inverted to get 
the current
$j=j_\phi$ which solves \eqref{def-phi} as a function of $\phi$, at least in the sense of perturbation theory.
Hence the Legendre transform can be applied to $W$  and it gives 
\begin{equation} \label{def-tilde-gamma}
\tilde{\Gamma}_{k}(\phi) = W_{k}(j_\phi) - J_\phi(\phi) \,.
\end{equation}
where $J_\phi(\chi) =\Chi_{j_\phi}(\chi)= \int \dd x j_\phi(x)\chi(x)$. 
Finally, we can translate $\tilde \Gamma_{k}$ to get the \textbf{average effective action},
\begin{equation}\label{Eq: average effective action}
\Gamma_{k}(\phi) = \tilde \Gamma_{k}(\phi) - Q_{k}(\phi) \,.
\end{equation}
By definition of the Legendre transform, the derivative of $\tilde \Gamma_k$ gives the \textbf{quantum equations of motion} \begin{align} \label{qeom}
\frac{\delta \tilde \Gamma_k}{\delta \phi} = \frac{\delta (\Gamma_k+Q_k)}{\delta \phi}= - j_\phi \ .
\end{align}
Hence, from \eqref{qeom} and \eqref{def-phi}, we have
\begin{align}
\nonumber
\delta(x,y) &= \frac{\delta j_\phi(x)}{\delta j_\phi(y)} = - \frac{\delta}{\delta j_\phi(y)} \frac{\delta }{\delta \phi(x)} (\Gamma_{k} + Q_{k}) \\
\label{relation-second-derivatives}
&= 
- \int\dd z\frac{\delta \phi(z)}{\delta j_\phi(y)} \frac{\delta}{\delta \phi(z)}\frac{\delta}{\delta \phi(x)} (\Gamma_k +Q_k) = -\int\dd z (\Gamma^{(2)}_{k} +Q_{k}^{(2)})(x,z) \frac{\delta^2 W_{k}}{\delta j(z)\delta j(y)} \ ,
\end{align}
showing that $\Gamma_k^{(2)}+Q_k^{(2)}=\Gamma_k^{(2)} - q_k$ is (minus) the inverse of the interacting propagator.

By direct computation, we find that the second functional derivative of $W_k$ is
\begin{multline} \label{interacting-F-propagator}
- i \frac{\delta^2 W_{k}(j)}{\delta j(x)j(y)} = \frac{1}{Z_{k}(j)}\omega \Big (S(V)^{-1} \star [S(V+J+Q_{k}) \cdot_T \chi(x)  \cdot_T \chi(y)] \Big )  \\
- \frac{1}{Z_{k}(j)^2} \omega \Big (S(V)^{-1} \star [S(V+J+Q_{k}) \cdot_T \chi(x)] \Big )
\ \omega \Big (S(V)^{-1} \star[ S(V+J+Q_{k}) \cdot_T \chi(y)] \Big ) \\
= \frac{1}{Z_{k}(j)}\omega \Big (S(V)^{-1} \star [S(V+J+Q_{k}) \cdot_T \chi(x)  \cdot_T \chi(y)] \Big ) - W_k^{(1)}(x) W_k^{(1)}(y) \ .
\end{multline}
In section \ref{sec:free-case}, using the principle of perturbative agreement, \textit{cf.} Appendix \ref{sec:perturbative-agreement}, we will see that the second derivative of $W_k$ is, in certain limits, 
the Feynman propagator for the regularised theory.

We finally notice that, as in the $k$-independent case, for finite $j$ as well as finite $k$, the functional derivatives of $W_{k}(j)$ are not the connected correlation functions; only taking the limits $j \to 0$ and $k \to 0$ one recovers the meaning of $W_{k}$ as a generating functional of the truncated time-ordered correlation functions.

\begin{remark}
One of the basic requirements for a well-defined $S$-matrix is that it is unitary.
It follows that $Z_k(j)$ is pure phase because it is the expectation value of a product of unitary operators, thus $W_k(j):=-i\log Z_k(j)$ must be real.
This in turn ensures the reality of the average effective action, implying that quantum contributions to the action cannot give rise to complex couplings.
\end{remark}

\subsection{Properties of the average effective action}

    \subsubsection{Quantum equations of motion}
    
Before applying our formalism to a particular example, we want to study some general properties of the average effective action $\Gamma_k$. First of all, we want to show that $\tilde \Gamma_k$ is well-defined, in the sense that the relation $\phi = \phi(j)$ given by equation \eqref{def-phi} is invertible into $j_\phi = j_\phi(\phi)$ in the sense of perturbation theory ---see \cite{Ziebell2021} for a non perturbative version of this result on Euclidean space.
We have the following proposition:
\begin{proposition}\label{Prop: existence of jphi}
		Let $\omega$ be a state on $\mathcal{A}$ and let $\phi\in C^\infty(M)$ be such that $\phi=i \Delta_F \tilde{j}_0 $, for some $\tilde{j}_0\in C_{\mathrm{c}}^\infty(M)$, while $\Delta_F$ is the Feynman propagator of the theory we are considering.
		Then, outside of the adiabatic limit, there exists a unique $j_\phi\in C_{\mathrm{c}}^\infty(M)[[V]]$ which solves Equation \eqref{def-phi}.
		Furthermore, $j_\phi$ can be obtained solving 
\begin{equation} \label{eq-j}
j = - P_0 \phi - Q^{(1)}_k(\phi) - \frac{1}{Z_k(j)} \omega\big(S(V)^{-1} \star [S(V+ Q_k + J) \cdot_T TV^{(1)} ]\big) \ 
\end{equation}
	by induction on the perturbation order.
	\end{proposition}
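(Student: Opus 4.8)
\noindent\emph{Proof strategy.} The plan is to establish the statement in two movements. First, I would show that the implicit relation \eqref{def-phi} is equivalent to the explicit ``solved'' equation \eqref{eq-j}, whose right-hand side consists of a purely classical leading term $-P_0\phi-Q_k^{(1)}(\phi)=-P_{0k}\phi$ plus a remainder that carries at least one explicit power of the coupling. Granted this equivalence, existence and uniqueness of $j_\phi\in C_{\mathrm c}^\infty(M)[[V]]$ follow from a direct induction on the perturbative order. The whole content therefore lies in the derivation of \eqref{eq-j}, which is a Schwinger--Dyson (quantum equation of motion) identity for the regularised generating functional.

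To derive \eqref{eq-j} I would apply the free operator $P_0$ in the variable $x$ to \eqref{def-phi}, written as $\phi(x)=Z_k(j)^{-1}\,\omega\big(S(V)^{-1}\star[S(W)\cdot_T\chi(x)]\big)$ with $W:=V+J+Q_k$. The key point is how $P_0$ passes through the time-ordered factor: from $F\cdot_T\chi(x)=F\,\chi(x)+\int F^{(1)}(y)\,\Delta_F(y,x)\,\dd y$ and the fundamental-solution property $P_0\Delta_F\propto\delta$, one gets a pointwise part $F\cdot(P_0\chi)(x)$ and a contact term $\propto F^{(1)}(x)$. Taking $F=S(W)$ and using field independence of the $S$-matrix, $S(W)^{(1)}=iS(W)\cdot_T TW^{(1)}$, the contact term becomes proportional to $S(W)\cdot_T TW^{(1)}(x)$. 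Since $W=V+J+Q_k$ yields $TW^{(1)}=TV^{(1)}+j-q_k\chi$, this single term splits, after applying $Z_k^{-1}\omega(S(V)^{-1}\star\,\cdot\,)$ and using $Z_k^{-1}\omega(S(V)^{-1}\star S(W))=1$ together with the definition of $\phi$, into exactly the three contributions on the right-hand side of \eqref{eq-j}: the source $j$, the regulator term $-q_k\phi=Q_k^{(1)}(\phi)$, and the interaction term $Z_k^{-1}\omega(S(V)^{-1}\star[S(W)\cdot_T TV^{(1)}])$, with the overall signs fixed by the chosen convention for $P_0\Delta_F$.

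The remaining piece is the ``free equation of motion'' contribution $Z_k^{-1}\omega\big(S(V)^{-1}\star[S(W)\cdot(P_0\chi)(x)]\big)$ produced by the pointwise part, which I would show vanishes. Smearing against a test function $h$ turns $(P_0\chi)(x)$ into $\Chi_{P_0 h}$ (using that $P_0$ is formally self-adjoint), and since $\Delta\,P_0h=0$ the pointwise and the $\star$-products by $\Chi_{P_0h}$ coincide; the contribution thus equals $\omega\big((S(V)^{-1}\star S(W))\star\Chi_{P_0h}\big)$, which is zero because any state $\omega$ on $\mathcal A$ annihilates $A\star\Chi_{P_0h}$ for every $A\in\mathcal A$ (its $n$-point functions are weak solutions of $P_0$). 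Importantly, this argument never uses quasifreeness, so \eqref{eq-j} holds for an arbitrary Hadamard state. Equivalence of \eqref{eq-j} with \eqref{def-phi} is then completed by noting that $P_{0k}$ is injective on the range of the Feynman propagator, where both $\phi$ and the field reconstructed from a solution of \eqref{eq-j} live by the hypothesis $\phi=i\Delta_F\tilde j_0$.

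Finally, I would solve \eqref{eq-j} by induction on the power of the coupling. At zeroth order the remainder drops, since $TV^{(1)}$ carries a power of the coupling, so \eqref{eq-j} gives $j^{(0)}=-P_{0k}\phi$; the hypothesis $\phi=i\Delta_F\tilde j_0$ makes $P_0\phi$, hence $P_{0k}\phi$, compactly supported, so $j^{(0)}\in C_{\mathrm c}^\infty(M)$. At order $n\ge 1$ the order-$n$ part of the remainder depends only on $j^{(0)},\dots,j^{(n-1)}$, because one power of the coupling is already spent in $TV^{(1)}$; hence $j^{(n)}$ is uniquely determined and again compactly supported, the cutoff in $V$ localising $TV^{(1)}$. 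This yields a unique $j_\phi\in C_{\mathrm c}^\infty(M)[[V]]$. I expect the main obstacle to be the Schwinger--Dyson step: justifying the interchange of $P_0$ with $\cdot_T\chi(x)$ at the level of microcausal functionals, verifying that the contact and coinciding-point contributions are finite (as they are, by the covariant normal ordering built into the $T$-product), and proving the vanishing of the free-equation-of-motion term for a general, non-quasifree Hadamard state while keeping every manipulation inside $\mathcal F_{\mu c}[[V]]$ with compactly supported coefficients, away from the adiabatic limit.
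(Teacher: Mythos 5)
Your proposal follows essentially the same route as the paper's proof: apply $P_0$ to \eqref{def-phi}, exploit the fundamental-solution property to extract the contact term, use field independence of the $S$-matrix to identify $TV^{(1)}+j-q_k\chi$, kill the free-equation-of-motion term via $\omega(A\star\Chi_{P_0h})=0$, and then induct on the perturbative order using compact support of $q_k$ and $TV^{(1)}$ outside the adiabatic limit. The only cosmetic difference is that you expand $\cdot_T$ into the pointwise product plus a $\Delta_F$ contact term (paper: into $\star$ plus a $\Delta_A$ term, which is algebraically the same since $\Delta_F=\Delta_++i\Delta_A$), and your added remark on injectivity of $P_{0k}$ on the range of $\Delta_F$ is a sound way to make the equivalence of \eqref{def-phi} and \eqref{eq-j} fully two-directional, a point the paper leaves implicit.
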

	\begin{proof}
		We recall that $P_0\phi = I_0^{(1)}(\phi)$.
		By applying $P_0$ on both sides of Equation \eqref{def-phi}, we obtain
\begin{equation} \label{intermediate-1}
P_0 \phi = \frac{1}{Z_k (j)} \omega \big (S(V)^{-1}\star[ S(V+J+Q_k) \cdot_T P_0 \chi ]\big ) \ .
\end{equation}
In fact, since $P_0$ is a partial differential operator, it acts only on the spacetime-dependent quantities, i.e. $\chi$. Now, we can turn the $T$-product into a $\star$-product, because, if $B$ is linear in the field configurations, it holds that 
\begin{equation}
A \cdot_T B = A \star B + i \int A^{(1)}(x) \Delta_A(x,y) B^{(1)}(y) \dd x \dd y \ ,
\end{equation}
given the relation $\Delta_F = \Delta_+ + i \Delta_A$ and the definitions of the products. Applying the above relation in \eqref{intermediate-1}, and recalling that $P_0 \Delta_A = \delta$, we obtain
\[
P_0 \phi(x) = \frac{1}{Z_k (j)} \bigg( \omega \big (S(V)^{-1}\star S(V+J+Q_k) \star P_0 \chi(x) \big) + i \omega \big (S(V)^{-1}\star S(V+J+Q_k)^{(1)}(x)\big)\bigg) \ .
\]
The first term in parenthesis vanishes, because $\omega$ satisfies the free equation of motion and thus  $\omega( A \star P_0 \chi) = 0$, while the derivative of $S$ in the second term can be computed explicitly, leading to
\[
P_0 \phi  = - \frac{1}{Z_k(j)} \omega\big ( S(V)^{-1} \star [ S(V+ Q_k + J) \cdot_T (TV^{(1)} + Q^{(1)}_k + J^{(1)} ) ]\big ) \ .
\]
Since $J$ is linear in the field $\chi$, its first derivative gives the classical current $j(x)$. 
On the other hand, we have $Q_k^{(1)}(\chi)(x) = -\q_k(x) \chi(x)$, and thus 
\begin{align*}
\frac{1}{Z_k(j)}\omega\big(S(V)^{-1}\star  [S(V+ Q_k + J) \cdot_T Q^{(1)}_k(\chi)] \big )
&=-q_k\frac{1}{Z_k(j)}\omega\big(S(V)^{-1}\star  [S(V+ Q_k + J) \cdot_T \chi ]\big ) 
\\&=-q_k\phi
= Q_k^{(1)}(\phi) \ .
\end{align*}
This shows that Equation \eqref{def-phi} is equivalent to Equation \eqref{eq-j},
which can be used to obtain $j_\phi$ from $\phi$ as a formal power series in $V$.

Notice that the obtained solution is unique and lies in $C_{\mathrm{c}}^\infty(M)[[V]]$.
In fact, at zeroth order in perturbation series, the equation simply gives
\begin{equation}
j_{\phi,0}(x) 
{= - P_0\phi(x)  - Q^{(1)}_k(\phi)(x)
= - (P_0 - q_k) \phi(x) \ ,}
\end{equation}
which is nothing but the free, regularised equations of motion. 
Furthermore, $j_{\phi,0}\in C^\infty_\mathrm{c}(M)$ because $P_0\phi = -\tilde{j}_0\in C_{\mathrm{c}}^\infty(M)$ by hypothesis, and outside the adiabatic limit $\q_k$ is smooth and of compact support.
Proceeding by induction, we see that if $j_\phi$ is compactly supported up to order $V^{n-1}$, then so is up to order $V^n$  because, denoting by $j_{\phi,{n}}$ the solution up to order $V^{n}$,
\[
j_{\phi,n} 
=
j_{\phi,0} - \frac{1}{Z_k(j_{\phi,n-1})} \omega\big(S(V)^{-1}\star(  S(V+ Q_k + J_{\phi,n-1}) \cdot_T TV^{(1)}) \big) \ 
\]
and outside the adiabatic limit $TV^{(1)}$ is also of compact support.
\end{proof}

\begin{remark}\label{remark-semiclassical-limit}
We observe that in the limit $k\to0$ the previous proposition implies the well-posedness of the Legendre transform of $W(j)$ to $\tilde{\Gamma}(\phi)$ also in the unregularised case.  
Furthermore, equation \eqref{eq-j} of
Proposition \ref{Prop: existence of jphi} is nothing but the quantum equation of motion \eqref{qeom}. 
It can be used to obtain the form of the effective action $\Gamma_k$. In particular, at linear order in $V$ and in the limit $k\to 0$, using Lemma \ref{Lem: time-ordered multiplication with S(J)},  Equation \eqref{eq-j} reduces to
		\begin{align*}
			j_\phi&=
			-P_0\phi
			-\frac{1}{\omega(S(J_\phi))}\omega\left(
			S(J_\phi)\cdot_T TV^{(1)}
			\right)
			\mod O(V^2)
			\\&=-P_0\phi
			-TV^{(1)}(i\Delta_Fj_\phi)
			\mod O(V^2)
			\\&=-P_0\phi
			-TV^{(1)}(\phi)
			\mod O(V^2)\,.
		\end{align*}
Recalling that $\dfrac{\delta \tilde\Gamma}{\delta \phi} = -j_\phi$ %$\frac{\delta \Gamma_k + Q_k}{\delta \phi} = -j_\phi$ 
we have that, up to normal ordering, at leading order the effective action coincides with the classical action $I$.
\end{remark}

\subsubsection{Classical limit} \label{sec:classical-limit}
From Equations \eqref{eq-j} and \eqref{Eq: average effective action}, substituting the quantum equations of motion \eqref{qeom}, we get
\begin{equation} \label{qeom-explicit}
\Gamma_k^{(1)}(\phi) = P_0 \phi + \frac{1}{Z_k(j_\phi)} \omega\big(R_V (S(Q_k + J_\phi) \cdot_T TV^{(1)}) \big) \ .
\end{equation}
We would like to compute the limit $k \to \infty$ from the above equation. In the Euclidean case, this limit has been discussed in some detail in \cite{Reuter1996-Liouville}, where it was shown that the average effective action and the bare action, in the $k\to \infty$ limit, differ by the infinite mass limit of a one-loop determinant.

To study the limit in our formalism, it is convenient to apply the perturbative agreement \cite{HollandsWald2004,Drago2015,Zahn2015}, which we recall in Appendix \ref{Sec: Principle of Perturbative Agreement}, \textit{cf.} Equation \eqref{Eq: gPPA for Moeller operators} and Equation \eqref{Eq: gPPA for relative S-matrices} in particular.
This way we may convert the non-commutative products $\star$ to $k$-dependent products $\star_k$.
In fact, at least when the state $\omega$ is assumed to be quasifree for the free theory, we have the following

\begin{lemma}\label{lem:step}
Let $\omega$ be a quasifree state on $\mathcal{A}$ (associated to $I_0$), and consider $\star$ and $\cdot_T$ constructed out of the two-point function $\Delta_+$ of $\omega$.
Let $\star_k$ and $\cdot_{T_k}$ the star and time ordered products of $\mathcal{A}_k$, which descends from the action $I_0+Q_k$, and constructed out of $\Delta_{+,k} = \mathsf{r}_{Q_k} \Delta_+ \mathsf{r}_{Q_k}^*$ with $\mathsf{r}_{Q_k}$ given in Equation \eqref{Eq: classical Moeller maps} of Appendix \ref{Sec: Principle of Perturbative Agreement}  and $\mathsf{r}_{Q_k}^*f:=f-q_k\Delta_{A,k}f$. 
It holds that for every $A\in\mathcal{F}_{\textrm{\normalfont loc}}$
\begin{align}\nonumber
\omega\big(R_V (S(Q_k + J)\cdot_T A) \big) 
&= \eval{S(V)^{-1} \star [S(V+Q_k+J)\cdot_T A]}_{\chi=0} \\
\label{Eq: variation of expectation value using gPPA}
&= \eval{S_k(\gamma_k V- \gamma_k Q_k)^{-1}\star_k [S_k(\gamma_k V+J)\cdot_{T_k} \gamma_kA]}_{\chi=0} \ .
\end{align}
where $S_k$ is the $S$-matrix constructed with $\cdot_{T_k}$ and where
 the map $\gamma_k$ is given in \eqref{Eq: PPA map}.
\end{lemma}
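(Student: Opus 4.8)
The plan is to treat the two equalities separately: the first is a matter of unfolding definitions, whereas the second is an application of the generalised Principle of Perturbative Agreement (gPPA) recalled in Appendix \ref{Sec: Principle of Perturbative Agreement}.

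For the first equality I would start from the Bogoliubov map \eqref{eq:Bogoliubov}, $R_V(F)=S(V)^{-1}\star[S(V)\cdot_T F]$, applied to $F=S(Q_k+J)\cdot_T A$, and use the additivity of the time-ordered exponential, $S(V)\cdot_T S(Q_k+J)=S(V+Q_k+J)$ (a consequence of associativity and commutativity of $\cdot_T$), to rewrite $R_V(S(Q_k+J)\cdot_T A)=S(V)^{-1}\star[S(V+Q_k+J)\cdot_T A]$. Since $\omega$ is quasifree with two-point function $\Delta_+$, I would then work in the representation $(\mathcal{F}_{\mu c},\star_{\Delta_+},\ast)$, where, as recalled in Section \ref{sec:states}, the expectation value of any element coincides with its evaluation at the vanishing configuration $\chi=0$. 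This gives the first line.

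For the second equality the idea is that the quadratic regulator $Q_k$ can be moved from the interaction into the free action, at the price of replacing the $I_0$-structures $(\star,\cdot_T,S)$ by the $(I_0+Q_k)$-structures $(\star_k,\cdot_{T_k},S_k)$ built from $\Delta_{+,k}=\mathsf{r}_{Q_k}\Delta_+\mathsf{r}_{Q_k}^*$, and transporting observables through the PPA map $\gamma_k$ of \eqref{Eq: PPA map}. Concretely, I would invoke the gPPA for relative $S$-matrices \eqref{Eq: gPPA for relative S-matrices} to re-express the two $S$-matrices: since the $k$-free theory already contains $Q_k$, an $I_0$-quantity carrying total interaction $W$ is represented in the $k$-theory with transported interaction $\gamma_k(W-Q_k)$. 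Applied to the normalising factor ($W=V$) this produces $S_k(\gamma_k V-\gamma_k Q_k)^{-1}$, while applied to the numerator ($W=V+Q_k+J$) the explicit $Q_k$ is cancelled, leaving the transported interaction $\gamma_k V$ together with the source which, according to the form of \eqref{Eq: gPPA for relative S-matrices}, enters the $k$-relative $S$-matrix as $S_k(\gamma_k V+J)$. Simultaneously, the gPPA for M\o ller operators \eqref{Eq: gPPA for Moeller operators} converts the insertion $\cdot_T A$ into $\cdot_{T_k}\gamma_k A$. Finally, because $\omega$ is quasifree, its transport under the change of free theory is again quasifree, with two-point function $\Delta_{+,k}$; hence the evaluation at $\chi=0$ of the right-hand side computes the expectation value in this $k$-deformed state, and collecting all substitutions yields the second line.

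The main obstacle I expect is the precise bookkeeping of the quadratic term under the change of free theory: one must verify that the $Q_k$ sitting inside $S(V+Q_k+J)$ is exactly compensated, and that the normalising factor acquires the shift $-\gamma_k Q_k$, rather than the naive $\gamma_k(V+Q_k+J)$ and $\gamma_k V$ that a blind application of an algebra isomorphism would produce — this asymmetry is the whole content of the gPPA and must be extracted carefully from \eqref{Eq: gPPA for relative S-matrices}, together with the prescription under which the linear source $J$ is carried through. A secondary technical point is to ensure that all regularised objects are well-defined: outside the adiabatic limit $q_k$ is smooth and compactly supported, so that $\mathsf{r}_{Q_k}$, $\Delta_{A,k}$ and $\Delta_{+,k}$ exist, and one should check that $\Delta_{+,k}$ still satisfies the microlocal spectrum condition \eqref{eq:microlocal-spectrum-condition}, so that $\star_k$, $\cdot_{T_k}$ and the transported state make sense. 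The quasifree hypothesis on $\omega$ enters precisely at the step identifying the transported expectation value and cannot be relaxed without bringing in the higher truncated functions.
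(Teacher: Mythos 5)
Your proposal is correct and follows essentially the same route as the paper: both equalities rest on the gPPA of Appendix \ref{Sec: Principle of Perturbative Agreement}, with the normalising factor acquiring the shift $-\gamma_k Q_k$, the explicit $Q_k$ in the numerator being absorbed, and the final evaluation at $\chi=0$ passing through the classical M\o ller map because $\mathsf{r}_{Q_k}(0)=0$. The only cosmetic difference is that the paper inserts $S(V+Q_k)\star S(V+Q_k)^{-1}$ and applies \eqref{Eq: gPPA for relative S-matrices} to the two resulting relative $S$-matrices separately (handling the insertion of $A$ via an auxiliary parameter $\mu$), whereas you apply the same identity once with the shifted interaction $V-Q_k$ and $F=Q_k+J$ --- the bookkeeping is identical.
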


\begin{proof}
Up to a constant factor, the left hand side of 
\eqref{Eq: variation of expectation value using gPPA} can be obtained evaluating 
$S(V)^{-1} \star S(V+Q_k+J+{\mu} A)$ on the state $\omega$ and taking the derivative with respect to $\mu$ in $\mu=0$. Hence 
we start rewriting $S(V)^{-1} \star S(V+Q_k+J+{\mu} A)$, inserting the identity $1 = S(V+Q_k) \star S(V+Q_k)^{-1}$:
\begin{align*}
S(V)^{-1} \star S(V+Q_k+J+{\mu} A) &= S(V)^{-1} \star S(V + Q_k) \star S(V+Q_k)^{-1} \star S(V+Q_k+J+{\mu} A) 
\\ &= R_V(Q_k) \star S_{V+Q_k}(J+{\mu} A) \ .
\end{align*}
The first factor can be rewritten using Equation \eqref{Eq: gPPA for Moeller operators} of Appendix \ref{Sec: Principle of Perturbative Agreement} as
\[
R_V(Q_k) = \RC_{Q_k}  R_{k,\gamma_k (V-Q_k)}(\gamma_k Q_k).
\]
The second term can be rewritten using Equation \eqref{Eq: gPPA for relative S-matrices} of Appendix \ref{Sec: Principle of Perturbative Agreement} as
\[
S_{V+Q_k}(J+{\mu} A) = \RC_{Q_k}\left[ S_k(\gamma_k V)^{-1} \star_k S_k(\gamma_k V+J+\gamma_k {\mu} A)\right] \,,
\]
where we also used the identity $\gamma_kJ=J$ because $J$ is $\chi$-linear.
Since $\RC_{Q_k}$ intertwines $\star$ to $\star_k$ and since $\eval{\RC_{Q_k}B}_{\chi=0} =
\eval{B\circ \mathsf{r}_{Q_k}}_{\chi=0}
=
\eval{B}_0$
because $\eval{\mathsf{r}_{Q_k}(\chi)}_{0}=0$, equation \eqref{Eq: variation of expectation value using gPPA} follows.
\end{proof}

\begin{theorem}\label{Thm: classical limit of Gamma on ultrastatic spacetimes}
Let $(\mathcal{M},g)$ be an ultrastatic spacetime with bounded curvature, let $\omega$ be the ground state on $\mathcal{A}$, and
% Let $\omega$ be a quasifree state on $\mathcal{A}$,
consider the limit where the support of $q_k$ tends to $\mathcal{M}$, namely where $q_k=k^2$.
Then the average effective action $\Gamma_k$ coincides with the classical action up to a constant in the limit where $q_k=k^2/2$ and  $k\to\infty$, namely, 
it holds that
\[
\Gamma_k^{(1)}(\phi)\underset{k\to \infty}{\longrightarrow} I^{(1)}(\phi)
\]
in the sense of pointwise converges of functions at any order 
in the coupling constant.
\end{theorem}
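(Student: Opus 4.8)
The plan is to read off the quantum contribution to $\Gamma_k^{(1)}$ from the explicit quantum equation of motion \eqref{qeom-explicit},
\[
\Gamma_k^{(1)}(\phi) = P_0 \phi + \frac{1}{Z_k(j_\phi)}\,\omega\big(R_V (S(Q_k + J_\phi) \cdot_T TV^{(1)}) \big),
\]
and to show that, at each fixed order in the coupling, the second summand converges pointwise to the classical vertex, i.e. to the interaction part of $I^{(1)}(\phi)$. Since $P_0\phi$ is $k$-independent and already equals the free part of $I^{(1)}$, the whole content of the statement is the suppression of every genuinely quantum (loop) contribution as $k\to\infty$.

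First I would rewrite both the numerator and the normalisation $Z_k(j_\phi)$ through Lemma \ref{lem:step} (applied with $A=TV^{(1)}$), so that they are expressed by means of the products $\star_k$ and $\cdot_{T_k}$ of the regularised algebra $\mathcal{A}_k$ attached to $I_0+Q_k$, \emph{cf.} \eqref{Eq: variation of expectation value using gPPA}. The decisive feature is that these products are built from $\Delta_{+,k}$, a two-point function of the massive operator $P_{0k}=\square-m^2-q_k$ whose effective squared mass $M^2=m^2+q_k$ diverges as $k\to\infty$. Expanding numerator and denominator as formal power series in $V$ and truncating at a fixed order, I would then sort each contribution by its number of internal contractions: the unique contraction-free term reproduces the covariantly normal-ordered vertex $\gamma_k TV^{(1)}$ evaluated on the configuration fixed by $j_\phi$, whereas every term with at least one contraction carries a factor of a regularised propagator --- either the smooth state-dependent remainder $w_k=\Delta_{+,k}-H_k-\tfrac{i}{2}\Delta_k$ at coinciding points, or $\Delta_{+,k}$, $\Delta_{F,k}$ tested against the compactly supported loop data.

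The crux of the argument, and the step I expect to be the main obstacle, is a uniform decay estimate: for the ground state on an ultrastatic background with bounded curvature, $w_k(x,x)$ (and its derivatives) and the relevant integrals of $\Delta_{+,k}$, $\Delta_{F,k}$ must vanish as $M\to\infty$. Here I would exploit the static structure of $\mathcal{M}=\mathbb{R}\times\Sigma$: the regularised ground-state two-point function has kernel $(2\epsilon_k)^{-1}e^{-i\epsilon_k(t-t')}$ with $\epsilon_k=\sqrt{-\Delta_\Sigma+m^2+\xi R+q_k}$, so a Schwinger-DeWitt expansion of $(2\epsilon_k)^{-1}$ controls the coincidence limits, while the spectral gap $\epsilon_k\gtrsim M$ makes the off-diagonal kernel decay rapidly and renders the loop integrals negligible. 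The bounded-curvature hypothesis is what keeps the Seeley-DeWitt coefficients uniformly bounded, so that after subtracting the purely geometric Hadamard parametrix $H_k$ --- which by construction absorbs exactly the divergent and $\mu$-dependent powers of $M$ --- the renormalised remainder $w_k(x,x)$ tends to zero with the specific normalisation $q_k=k^2/2$. Since at fixed perturbative order only finitely many such factors appear, each paired with compactly supported data, the limit passes through the finite sums termwise.

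Finally, I would collect the surviving pieces. With all contractions suppressed, $\star_k$ and $\cdot_{T_k}$ degenerate to the pointwise product, the normalisation tends to a field-independent factor, and the correction reduces to $\lambda V^{(1)}(\phi)$, giving $\Gamma_k^{(1)}(\phi)\to I^{(1)}(\phi)$ pointwise at every order in the coupling. The discarded field-independent terms --- in particular the infinite-mass one-loop contribution analogous to the determinant of \cite{Reuter1996-Liouville} --- are precisely what the clause ``up to a constant'' refers to, since they drop out upon taking the first functional derivative.
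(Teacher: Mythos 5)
Your overall architecture coincides with the paper's: both start from \eqref{qeom-explicit}, both pass to the regularised algebra via Lemma \ref{lem:step}, and both conclude by arguing that every internal contraction is suppressed as $k\to\infty$, so that the $\star_k$- and $T_k$-products degenerate to pointwise products and only the bare vertex $V^{(1)}(\phi)$ survives (with $\phi_0=i\Delta_{F,k}j_\phi\to\phi$). The technical implementation of the suppression differs: on $\mathbb{R}\times\Sigma$ the paper bounds the spatial kernel $\tilde\Delta_{+,k}(t)=e^{it\sqrt{B+k^2}}/(2\sqrt{B+k^2})$ uniformly in $t$ by $1/(2\sqrt{k^2-r})$ in operator norm on $L^2(\Sigma)$, and then upgrades this $O(1/k)$ decay to $O(k^{-2l})$ for arbitrary $l$ by using that $\Delta_{+,k}$ is a weak solution of $P_0-k^2$ (so $P_0^l\Delta_{+,k}=k^{2l}\Delta_{+,k}$) and integrating by parts onto the compactly supported test data; your spectral-gap/Schwinger--DeWitt route would deliver comparable smeared estimates, and at fixed perturbative order finitely many decaying factors suffice.

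The step that fails as written is your claim that the renormalised coincidence limit $w_k(x,x)=(\Delta_{+,k}-H_k-\tfrac{i}{2}\Delta_k)(x,x)$ tends to zero. By the paper's own Appendix \ref{Sec: Hadamard expansion of the Minkowski vacuum two-point function and the Wick square}, Equation \eqref{eq:vacuum-Wick}, this quantity behaves like $M^{d-2}\log(M^2/\mu^2)$ with $M^2\sim k^2$, hence it \emph{diverges}: the Hadamard subtraction removes the short-distance ($\sigma\to0$) singularity, not the large-mass growth, and the normalisation $q_k=k^2/2$ is irrelevant to this asymptotics. For $V=\lambda\chi^4/4!$ this produces, already at first order, a contribution $\tfrac{\lambda}{2}w_k(x,x)\,\phi$ to $\Gamma_k^{(1)}$ which is linear in $\phi$ (so not absorbed by ``up to a constant'') and unbounded in $k$. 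The paper disposes of it not by decay but through the identification $\gamma_kTV^{(1)}=T_kV^{(1)}$ combined with the freedom, emphasised in Remark \ref{Rmk: local covariance in M,k}, to choose the finite renormalisation constants (in particular the scale $\mu$) of the $k$-dependent Wick ordering $T_k$ independently at each $k$. To close your argument you must either invoke that $k$-dependent renormalisation choice explicitly, or otherwise show that the normal-ordering corrections cancel; the decay estimates alone do not control the coinciding-point terms.
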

\begin{proof} We start with the quantum equation of motion given in the form \eqref{qeom-explicit}
\[
\Gamma_k^{(1)}(\phi) = P_0 \phi + \frac{1}{Z_k(j)} \omega\big(R_V (S(Q_k + J) \cdot_T T V^{(1)}) \big) \ .
\]
Using Lemma \ref{lem:step}, we may rewrite it as 
\[
\Gamma_k^{(1)}(\phi) = P_0 \phi + \frac{1}{Z_k(j)} \omega_k\bigg(S_k(\gamma_k V - \gamma_k Q_k )^{-1} \star_k [S_k(\gamma_k V +J)\cdot_{T_k} \gamma_k T V ^{(1)}] \bigg) \ 
\]
where $\omega_k=\omega\circ r_{Q_k}$.

\color{black}
When $q_k$ tends to $k^2$, $\omega_k$ tends to the ground state related to the equation $P_0\phi-k^2\phi=0$ (The proof for the  case of Minkowski background can be found in Lemma D.1 in \cite{Drago2015} and it can be generalized to generic ultrastatic spacetime with bounded scalar curvature.
See also \cite{Drago2018} for the case of equilibrium states.)

The spacetime $\mathcal{M}$ is ultrastatic, hence, it admits a natural notion of time, by means of which, 
$\mathcal{M}=\mathbb{R}\times \Sigma$. Furthermore, 
\[
P_0 = \partial_t^2 - B-k^2
\]
where $B$ is a self-adjoint operator on $L^2(\Sigma)$ whose spectrum is bounded from below (by $m^2-\xi \| R \|_\infty$). Hence if $k$ is sufficiently large, $-k$ is in the resolvent set of $B$ and thus for large $k$  $(B+k^2)^{-1}$ is a bounded positive operator.  
We furthermore observe that for any $\mathbb{N}\ni l>0$ and $k^2 > r$, we can define by spectral calculus $(B+k^2)^{-l}$ and
\[
\|(B+k^2)^{-l} \psi\|_2 \leq \frac{1}{(k^2-r)^{l}}\|\psi\|_2, \qquad \psi \in L^{2}(\Sigma),  
\]
where $r$ is a positive constant which is such that $(m^2-\xi R)\geq -r$ uniformly on $\mathcal{M}$. 

%\edobservation{Check the next paragraph}
With this at disposal, we can now construct 
the operators 
%$\Delta_{k}$, 
$\tilde\Delta_{+,k}(t)$ and $\tilde\Delta_{k,F}(t)$
used as integral kernels of $\Delta_{F,k}$ and of $\Delta_{+,k}$ 
by standard functional calculus
over the spectrum of $B+k^2$, which is contained in $\mathbb{R}^+$ for sufficiently large $k$.
Hence, for every $t$
% \[
% \Delta_k(t) = \frac{\sin(t \sqrt{B+k^2})}{\sqrt{B+k^2}}, 
% \]
\[
\tilde\Delta_{+,k}(t)=
\frac{e^{i t \sqrt{B+k^2}}}{2\sqrt{B+k^2}}
\]
and 
\[
\tilde\Delta_{F,k}(t):= \theta(t)\Delta_{+,k}(t)
+\theta(-t)\tilde\Delta_{+,k}(-t)
\]
and both are elements of $B(L^{2}(\Sigma, \dd x)$ and their operator norms are such that
% \[
% \|\Delta_{k}(t)\| \leq \frac{1}{\sqrt{k^2-r}}
% \]
\begin{equation}\label{eq:boundDelta+k}
\|\tilde\Delta_{+,k}(t)\| \leq \frac{1}{2\sqrt{k^2-r}}
,
\qquad
\|\tilde\Delta_{F,k}(t)\| \leq \frac{1}{2\sqrt{k^2-r}}.
\end{equation}
With these two operators at disposal we have that for every $h,g\in C^\infty_0(\mathcal{M})$.
\[
(\Delta_{+,k}g)(t_x,\mathbf{x}) = 
    \int_{\mathbb{R}} \dd t' \bigg(\tilde{\Delta}_{+,g}(t_x-t')g(t',\cdot)\bigg)(\mathbf{x}) 
\]
% \[
% \langle h,\Delta_{+,k}g\rangle = \int_{\mathcal{M}} \dd x       
% h(x)    \int_{\mathbb{R}} \dd t' \tilde{\Delta}_{+,g}(t-t')g(t',\cdot) 
% \]
and similarly for $\Delta_{F,k}$.
With this observation and the estimates of $\tilde{\Delta}_{+,k}$ and $\tilde{\Delta}_{F,k}$ at disposal valid uniformly in time, we can estimate the distributions $\Delta_{+,k}^{\otimes n}$ and $\Delta_{F,k}^{\otimes n}$ on $\mathcal{M}^{2n}$.

Operating as in Lemma \ref{le:limit-propagators}  we can observe that for every $h,g\in C^{\infty}_c(\mathcal{O})$ and with $f\in C^{\infty}_c(\mathcal{M})$ which is $1$ on $\mathcal{O}$ and for every $l\in\mathbb{N}$ with $C$ a suitable constant
\begin{equation}\label{eq:chain-est}
|\langle h,\Delta_{+,k} g \rangle| \leq 
|\langle h,\frac{P_0^l}{k^{2l}}\Delta_{+,k} g\rangle |
\leq \frac{1}{k^{2l}}\|P_0^l h\|_2 \| f \Delta_{+,k} g \|_2 \leq \frac{C}{k^{2l}}\|P_0^l g\|_2  \|f\|_2 \|g\|_2
\end{equation}
where now the $\|\cdot\|_2$ norms act on $L^2(\mathcal{M},\dd x)$, and where we used the fact that $\Delta_{+,k}$ is a weak solution of $P_0-k^2$ in the first inequality, Cauchy-Schwartz inequality in the second step and the uniform estimates \eqref{eq:boundDelta+k} in the third one. 

We can now generalize this observation to estimate
\begin{equation}\label{eq:est-Deltak-tensor}
|(F^{(n)} \Delta_{+,k}^{\otimes n} G^{(n)})| \leq \frac{C_l}{k^{2l}}
\end{equation}
for $F$,$G$ in which are obtained as tensor product of local functionals and
valid for large $k$ and for every $l\in\mathbb{N}$.
Actually, we observe that as an operator on 
$L^2(\Sigma)\otimes L^2(\Sigma)$
\[
\frac{1}{i(\sqrt{B_1+k^2}+\sqrt{B_2+k^2})}\partial_{t_k} \Delta_{+,k}(t_{x}-t_{y})\Delta_{+,k}(t_{x}-t_{z}) = \Delta_{+,k}(t_{x}-t_{y})\Delta_{+,k}(t_{x}-t_{z})
.
\]
Using this and other similar observations in estimates analogous to \eqref{eq:chain-est} we get the desired \eqref{eq:est-Deltak-tensor}.
With this at disposal, we observe that the $\star_k$-product reduces to the point-wise product in the limit $k\to\infty$ even if in one of the factors $Q_k$ appears.

Using also Lemma \ref{Lem: time-ordered multiplication with S(J)} we actually 
get that
% \begin{equation}
% \lim_{k\to\infty} 
% \frac{1}{Z_k(j)} \omega_k\bigg(S_k(\gamma_k V - \gamma_k Q_k )^{-1} \star_k [S_k(\gamma_k V +J)\cdot_{T_k} \gamma_k T V ^{(1)}] \bigg)
% =
% \lim_{k\to\infty} 
% \frac{ \omega_k\bigg(S_k(\gamma_k V +J)\cdot_{T_k} \gamma_k T V ^{(1)} \bigg)
% 1}{\omega_k\bigg(S_k(\gamma_k V +J)\bigg)}
% \end{equation}
% Using Lemma \ref{Lem: time-ordered multiplication with S(J)} we get that
\begin{multline*}
\lim_{k\to\infty} 
\frac{\omega_k\bigg(S_k(\gamma_k V - \gamma_k Q_k )^{-1} \star_k [S_k(\gamma_k V +J)\cdot_{T_k} \gamma_k T V ^{(1)}] \bigg)}{Z_k(j)} 
=
\\
\lim_{k\to\infty}
\frac{ \omega_k\bigg(S_k(\gamma_k V_{\phi_0})\cdot_{T_k} \gamma_k T V ^{(1)}_{\phi_0} \bigg)
}{\omega_k\bigg(S_k(\gamma_k V_{\phi_0})\bigg)}
\end{multline*}
where $F_{\phi_0}(\chi) = F(\chi+\phi_0)  $  and where $\phi_0 = i \Delta_{F,k} j_\phi $.
We observe now that in the limit $k\to \infty$, thanks to the estimate \eqref{eq:boundDelta+k} the
$T_k$-product among local functionals reduces to a pointwise product.
\color{black}
% We observe now that in the limit $k\to \infty$, the $T_k$-product reduces to a pointwise product, because {$T_k$-product is constructed out of } $\Delta_{F,k}$, which is the inverse of {$P_0 - q_k$} up to a factor $-i$ and 
% { $q_k$ diverges in the limit $k\to\infty$ because it is quadratic in $k$} (notice that this holds also outside the adiabatic limit, if  the spatial cutoff used to localize $V$ and $Q$ have the same support, which contains also the support of $j$, see Lemma \ref{le:limit-propagators} for further details); 
% %
% with a similar argument we have that also the $\star_k$-product reduces to the pointwise product.
Furthermore, $\gamma_k T V^{(1)}=T_k V^{(1)}$ ---\textit{cf.} Remark \ref{Rmk: T, TLambda Wick ordering} in Appendix \ref{Sec: Principle of Perturbative Agreement}--- and under the same limit $T_k V^{(1)}$ tends to $V^{(1)}$. 
Finally, using \eqref{eq-j}
we get that in the limit $k\to\infty$, $\phi_0$ converges to $\phi$, hence, since $V$ as a function of $\phi$ is smooth
we obtain
\[
\lim_{k\to\infty}
\frac{ \omega_k\bigg(S_k(\gamma_k V_{\phi_0})\cdot_{T_k} \gamma_k T V ^{(1)}_{\phi_0} \bigg)
}{\omega_k\bigg(S_k(\gamma_k V_{\phi_0})\bigg)} = V^{(1)}( \phi)
\]
and hence
\[
\Gamma_k^{(1)}(\phi) 
\xrightarrow[k\to\infty]{}  P_0 \phi + V^{(1)}( \phi) = I^{(1)}(\phi) \ ,
\]
% \[
% \Gamma_k^{(1)}(\phi) 
% = P_0 \phi + \frac{1}{Z_k(j)} \omega_k\bigg(R_{V- \gamma_k Q_k}(S(J+ \gamma_k Q_k)\cdot_{T_k}T_k V^{(1)}) \bigg) \xrightarrow[k\to\infty]{}  P_0 \phi + V^{(1)}( \phi) = I^{(1)}(\phi) \ ,
% \]
where $I(\phi)$ is the classical action and the limit holds in the sense of pointwise convergence of functions.
\end{proof}

Although Theorem \ref{Thm: classical limit of Gamma on ultrastatic spacetimes} is proved for the case of ultrastatic spacetime with bounded curvature and for ground state, its thesis holds in a more general setup.
%Even if we have made a full proof for the case of ultrastatic spacetime with bounded curvature and for ground state. A similar result holds for more general setup.
Indeed, the generalization to the case of states which satisfy a similar bound as those given in \eqref{eq:boundDelta+k} is straightforward. Notice that equilibrium states on flat spacetimes or Bunch Davies states in the case of de Sitter backgrounds satisfy a similar estimate.
This shows that the average effective action and the classical action coincide, up to a constant, in the limit $k \to \infty$; more precisely, the expansion at $k \to \infty$ coincides with the semiclassical approximation described in Remark \ref{remark-semiclassical-limit}.

\subsection{Parity of $V$}
Equation \eqref{qeom-explicit} is the starting point for a perturbative study of the average effective action. In this section we want to analyse the quantum corrections to the classical approximation of the effective action, and in particular we want to see if the quantum corrections can violate the symmetries of the classical action.

Since, in this paper, we will only study the $O(1)$ scalar model, we study the parity of the effective average action. In particular, if $V$ is even, $\chi \to - \chi$ is the only symmetry of the classical action.

\begin{proposition}\label{pr:parity}
Let $V$ be even with respect to $\chi \to - \chi$, so that it contains an even number of fields only. 
Then, if $\omega$ is quasifree, the average effective action $\Gamma_k$ is also even.
\end{proposition}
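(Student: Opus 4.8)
The plan is to realise the field reflection $\chi\mapsto-\chi$ as a $\ast$-automorphism of $\mathcal A$ and to exploit that a quasifree state is invariant under it. First I would introduce the map $\theta\colon\mathcal F_{\mu c}\to\mathcal F_{\mu c}$, $(\theta F)(\chi):=F(-\chi)$, whose functional derivatives obey $(\theta F)^{(n)}(\chi)=(-1)^{n}F^{(n)}(-\chi)$. Since the bidifferential operators $\Upsilon_{\Delta_+}$ and $\Upsilon_{\Delta_F}$ defining $\star$ and $\cdot_T$ contract $F^{(n)}$ against $G^{(n)}$, the two factors $(-1)^{n}$ cancel, so that $\theta(F\star G)=\theta F\star\theta G$ and $\theta(F\cdot_T G)=\theta F\cdot_T\theta G$; equivalently, $\theta$ is the $\ast$-automorphism of $\mathcal A$ fixed by $\theta\Chi_f=-\Chi_f$, extended to the time-ordered products consistently with its field-independence axiom. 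In particular $\theta S(V)=S(\theta V)$, $\theta R_V(F)=R_{\theta V}(\theta F)$, and $\theta$ is compatible with taking $\star$-inverses.

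The hypothesis enters through the fact that a quasifree state is $\theta$-invariant, $\omega\circ\theta=\omega$: its odd $n$-point functions vanish, while the even ones acquire only the harmless factor $(-1)^{2m}=1$. Together with the parities $\theta V=V$ and $\theta Q_k=Q_k$ (both even in $\chi$) and $\theta\Chi_j=-\Chi_j=\Chi_{-j}$ (the source is odd), I would then compute, using $\omega=\omega\circ\theta$ and distributivity of $\theta$ over $\star$,
\[
Z_k(-j)=\omega\bigl(S(V)^{-1}\star S(V+\Chi_{-j}+Q_k)\bigr)=\omega\bigl(S(V)^{-1}\star S(V+\Chi_{j}+Q_k)\bigr)=Z_k(j),
\]
so that $W_k(-j)=W_k(j)$ by \eqref{Eq: Wk definition}; that is, $W_k$ is even in $j$.

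Finally I would transport the evenness from $j$ to $\phi$ across the Legendre transform. Differentiating $W_k(-j)=W_k(j)$ and using \eqref{def-phi} gives $\phi[-j]=-\phi[j]$ for the classical field, whence, by the invertibility established in Proposition~\ref{Prop: existence of jphi}, the inverse relation satisfies $j_{-\phi}=-j_{\phi}$. Substituting into \eqref{def-tilde-gamma} and noting that $\Chi_{j_{-\phi}}(-\phi)=\Chi_{j_\phi}(\phi)$ yields $\tilde\Gamma_k(-\phi)=\tilde\Gamma_k(\phi)$; since $Q_k$ is quadratic and hence even, \eqref{Eq: average effective action} gives $\Gamma_k(-\phi)=\Gamma_k(\phi)$, which is the claim.

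The step I expect to be the main obstacle is the verification that $\theta$ is genuinely an automorphism of the time-ordered product on local functionals. On $\mathcal F_{\mathrm{reg}}$ the cancellation of signs is immediate from the explicit series \eqref{T-product}, but on local functionals one must check that the Epstein–Glaser extension and the covariant normal ordering can be, and are, performed $\mathbb{Z}_2$-covariantly, so that the reflection symmetry propagates through the inductive construction of the counterterms; this is guaranteed by the covariance and field-independence axioms. It is exactly here, together with the identity $\omega\circ\theta=\omega$, that quasifreeness of $\omega$ is indispensable, the latter being liable to fail for states with non-vanishing odd correlations.
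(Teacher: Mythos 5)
Your proof is correct, but it follows a genuinely different route from the paper's. The paper works entirely at the level of parity bookkeeping in the perturbative expansion: it notes that $\omega[R_V(S(Q_k+J_\phi))]$ contains only even powers of $J_\phi$ (because $V$ and $Q_k$ are $\chi$-even and the quasifree $\omega$ corresponds to evaluation at $\chi=0$), so $W_k(j_\phi)$ is $j_\phi$-even, and then proves that $j_\phi$ is $\phi$-odd by induction on the order in $V$ using the quantum equation of motion \eqref{eq-j}. You instead promote the reflection $\chi\mapsto-\chi$ to a $\ast$-automorphism $\theta$ of $\mathcal A$ commuting with $\star$, $\cdot_T$, $S$ and $R_V$, use $\omega\circ\theta=\omega$ to get $Z_k(-j)=Z_k(j)$ directly, and then transport the evenness of $W_k$ through the Legendre transform, obtaining $j_{-\phi}=-j_\phi$ from the \emph{uniqueness} of the inverse established in Proposition~\ref{Prop: existence of jphi} rather than from an order-by-order induction. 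The automorphism argument is more structural and arguably cleaner; its price is the need to check that the Epstein--Glaser extension of the renormalized time-ordered products can be, and is, performed $\mathbb{Z}_2$-covariantly, a point you correctly flag and which indeed follows from field independence (the numerical distributions $t_n$ multiply monomials whose degree has fixed parity). Both proofs use quasifreeness in essentially the same place, namely the $\theta$-invariance of $\omega$ (equivalently, the vanishing of the odd correlation functions), so neither is more general than the other in that respect.
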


\begin{proof}
    Since $Q_k$ is $\phi$-even, to study the parity of $\Gamma_k$ it suffices to study the parity of $\widetilde \Gamma_k$. We analyse the $\phi$-parity of $\widetilde{\Gamma}_k$ through Equation \eqref{def-tilde-gamma}:
    \begin{align*}
        \widetilde{\Gamma}_k(\phi)
        =W_k(j_\phi)-J_\phi(\phi)
        =-i\log Z_k(j_\phi)-J_\phi(\phi)\,,
        \qquad
        Z_k(j_\phi)=\omega\left[
        R_V(S(Q_k+J_\phi)
        \right]\,.
    \end{align*}
    To this end we first analyse the $\chi$-parity of the involved functionals.
    Since $V$ is $\chi$-even and $\star,\cdot_T$ preserve $\chi$-parity, it follows that $R_VF$ has the same $\chi$-parity of $F$.
    %\ndnote{This is where we are using that $\omega$ is quasifree.}
    Since $Q_k$ is $\chi$-even and $\omega$ corresponds to evaluation at $\chi=0$, it follows that the contribution
    \begin{align*}
        \omega\left[
        R_V(S(Q_k+J_\phi))\right]
        =\omega\left[
        R_V(S(Q_k)\cdot_TS(J_\phi))\right]\,,
    \end{align*}
    contains only even powers of $J_\phi$, so this contribution is $j_\phi$-even.
    It then follows that $W_k(j_\phi)=-i\log Z_k(j_\phi)$ is $j_\phi$-even.
    
    We now prove that $j_\phi$ is $\phi$-odd. With the observations already made, this will imply that $J_\phi(\phi)$ is $\phi$-even and so is $\widetilde{\Gamma}_k$, because of Equation \eqref{def-tilde-gamma}.
    From Equation \eqref{eq-j}
    we have
    \begin{align*}
        j_\phi = - P_0 \phi - Q^{(1)}_k(\phi) - \frac{1}{Z_k(j_\phi)} \omega\left[R_V\left(S(Q_k + J_\phi) \cdot_T TV^{(1)} \right)\right] \,.
    \end{align*}
    It follows that $j_{\phi,0}$ (the 0-th order in $V$ of $j_\phi$) is $\phi$-odd.
    By induction, let assume that the expansion $j_{\phi,n}$ up to order $V^n$ is $\phi$-odd.
    For the expansion $j_{\phi,n+1}$ up to $V^{n+1}$ we have
    \begin{align*}
        j_{\phi,n+1}
        =-\frac{1}{Z_k(j_{\phi,n})} \omega\left[{R_V}\big(S(Q_k+J_{\phi,n})\cdot_T TV^{(1)}\big) \right] \,,
    \end{align*}    
    which is shown to be $\phi$-odd.
    Indeed, $Z_k(j_{\phi,n})$ is $j_{\phi,n}$-even and thus $\phi$-even, moreover, \begin{align*}
        \omega\left[R_V\big(S(Q_k+J_{\phi,n})\cdot_T TV^{(1)}\big) \right]\,,    
    \end{align*}
    contains only odd powers of $j_{\phi,n}$, because only $\chi$-even terms in $S(Q_k+J_{\phi,n+1})\cdot_T TV^{(1)}$ provide a non-vanishing contribution to the expectation value ---notice that $TV^{(1)}$ is $\chi$-odd because so is $V^{(1)}$ and also because the map $T$ preserves the $\chi$-parity.
    By the inductive assumption, we find that $j_{\phi,n+1}$ is $\phi$-odd.
\end{proof}

Hence, we see that quantum contributions cannot violate the parity of symmetry of the starting classical action, at least in this simple example. This observation will justify the ansatz \eqref{explicit-effective-potential} in actual computations of the flow.

\section{Flow equations} \label{section:flow-eq}
In this section, we derive the flow equations for the generating functionals as differential equations in the scale parameter $k$. We are interested in particular in deriving the equations for $W_k$ and $\Gamma_k$, and in  seeing the connection with their Euclidean counterparts given respectively by the Polchinski \cite{Polchinski1983} and Wetterich \cite{Wetterich1992} equations.

%The derivation for local $\q_k(x)$ and non-local $\q_k(x,y)$ regulator functions are slightly different. 

%We start from the local one: 
The flow equation for $W_{k}$ can be computed from its definition, since the $k$-dependence comes through the term $Q_k(\chi)= -\frac{1}{2}\int \dd x \q_{k}(x) \chi^2(x)$ in the definition of $Z_{k}$:
\begin{equation} \label{flow-eq-w}
\partial_k W_{k}(j) = -\frac{1}{2} \int \dd x \partial_k \q_{k}(x) \frac{1}{Z_{k}(j)} \omega(S(V)^{-1} \star[S(V+J+Q_{k}) \cdot_T T\chi^2(x)]) \ .
\end{equation}
We recall here that $\omega$ is an arbitrary (not necessarily quasifree) Hadamard state on $\mathcal{A}$.
We shall see in a moment that this equation is a normal-ordered generalization of the Polchinski equation to Lorentzian manifolds and generic states.

From the flow equation for $W_k$, we can immediately write the flow equation for $\tilde \Gamma_{k}$:
\[
\partial_k \tilde \Gamma_{k} = \partial_k W_{k}(j_\phi) + 
{\langle W_{k}^{(1)}(j_\phi), \partial_{k} j_\phi \rangle - \langle \phi,  \partial_{k} j_\phi \rangle} = \partial_{k} W_{k}(j_\phi) \ ,
\]
thanks to \eqref{def-phi}. Subtracting $\partial_k Q_k(\phi)$ to the left-hand side, we get the flow equation for the average effective action $\Gamma_k$ defined in \eqref{Eq: average effective action}:
\begin{equation} \label{flow-eq-gamma}
\partial_k \Gamma_{k} (\phi) = -\frac{1}{2} \int \dd x \partial_k \q_{k}(x) \bigg [ \frac{1}{Z_{k}(j_\phi)} \omega \big (R_V(S(J_\phi+Q_{k}) \cdot_T T\chi^2(x) ) \big ) - \phi^2(x) \bigg ] \ .
\end{equation}
This flow equation, valid for local $Q_k$, is a version of the Wetterich equation that is valid on globally hyperbolic Lorentzian manifolds and for generic states. Actually, analyzing the Bogoliubov map, we can see that the right-hand side of the above equation is closely related to the interacting Feynman propagator; the difference is the factor $S(J_\phi+Q_k)$ in the expectation value. However, applying the perturbative agreement ---\textit{cf.} Equations \eqref{Eq: gPPA for Moeller operators} and \eqref{Eq: gPPA for relative S-matrices} in Appendix \ref{Sec: Principle of Perturbative Agreement}---, we will see in the next section that, at least in the free case, the factor $S(Q_k)$ can be absorbed in the regularised products, so that, taking the limit $j=0$, in which the correlation functions of physical interest are computed, we get the interacting Feynman propagator for the regularised theory.

Equation \eqref{flow-eq-w} can be cast in the form of the Polchinski equation. Remembering that pointwise products are always implicitly normal-ordered with respect to $H$, as was given in \eqref{eq:Hadamard}, we have
\begin{align*}
T\chi^2(x) &= \lim_{y \to x} \chi(x) \star \chi(y) - (H(x,y) + \frac{i}{2}\Delta(x,y))  
\\&= \lim_{y \to x} \chi(x) \cdot_T \chi(y) - ( H(x,y) + 
\frac{i}{2}\Delta_R(x,y) + \frac{i}{2} \Delta_A(x,y) )
\\&=  \lim_{y \to x} \chi(x) \cdot_T \chi(y) - H_F(x,y)  \ ,
\end{align*}
so that the right-hand side of \eqref{flow-eq-w} can be written using \eqref{interacting-F-propagator} as a normal-ordered Polchinski equation:
\begin{align}
\nonumber
\partial_k W_{k} &=-\frac{1}{2} \int \dd x \partial_k \q_{k}(x) \frac{1}{Z_{k}(j)} \omega(S(V)^{-1} \star[S(V+J+Q_{k}) \cdot_T \left[\lim_{y\to x}\chi(x)\cdot_T\chi(y)-H_F(x,y)\right]])
\\
\label{Eq: implicit definition of tilde HF}
&=-\lim_{y\to x}\frac{1}{2} \int \dd x \partial_k \q_{k}(x) \left[\frac{1}{Z_{k}(j)} \omega(S(V)^{-1} \star[S(V+J+Q_{k}) \cdot_T \chi(x)\cdot_T\chi(y)])
-\widetilde{H}_F(x,y)\right]
%\lim_{y \to x} -\frac{1}{2} \int \dd x \partial_k \q_{k}(x) \bigg[ - i \frac{\delta^2 W_{k}(j)}{\delta j(x)\delta j(y)} + W_{k}^{(1)}(x)W_{k}^{(1)}(y) - H_F(x,y) \bigg ],
\end{align}
where $\widetilde{H}_F$ is a counter-term implicitly defined by the last identity.
Equation \eqref{flow-eq-gamma} then becomes
\begin{equation}\label{eq:Wetterich-Local}
\partial_k \Gamma_{k} = \lim_{y \to x}  \frac{i}{2} \int \dd x \partial_k \q_{k}(x) 
\left[\frac{\delta^2 W_{k}(j)}{\delta j(x)\delta j(y)}  - i \widetilde{H}_F(x,y)\right] \ .
\end{equation}
Furthermore, by construction of $\widetilde{H}_F$,  the object in square bracket is at least a continuous function on a sufficiently small neighborhood of the diagonal in $\mathcal{M}\times \mathcal{M}$.

Therefore, using \eqref{relation-second-derivatives}, we can write the flow equation for $\Gamma_{k}$ in terms of its second derivative, arriving at the \textbf{Wetterich equation} for a local regulator:
\[
\partial_k \Gamma_{k} = -\frac{i}{2} \int \dd x \partial_k \q_{k}(x) :\left[ \Gamma^{(2)}_k - \q_{k}\right]^{-1} :_{\widetilde{H}_F} {(x)} \ ,
\]
where $:A:_{\widetilde{H}_F}(x)= \lim\limits_{y\to x}(A(x,y) + i\widetilde{H}_F(x,y))$, similarly to the procedure necessary to evaluate the expectation value of normal-ordered quadratic local fields by point splitting.
Notice that the definition of $\widetilde{H}_F$ is rather implicit but it can be computed when $V$ is quadratic ---\textit{cf.} Section \ref{sec:free-case}.
Notwithstanding we stress that our main focus is in Equation \eqref{flow-eq-gamma}, which should be regarded as the main equation of interest.

 In \cite{Fehre2021}, the use of a local regulator in the flow equation for the graviton propagator implies the introduction of additional counterterms in the Wetterich equation, in order to regularize the UV divergences; in our discussion, such a regularization is provided by the normal-ordering prescription. In this regard, Eq. 6 in \cite{Fehre2021} shares many similarities with the flow equation \eqref{eq:Wetterich-Local} we derived.

Equations \eqref{flow-eq-gamma} and \eqref{flow-eq-w} then are the generalizations, for generic states on Lorentzian manifolds, of the Wetterich and Polchinski equations, respectively. They are interpreted in a different perspective, because a local $Q_k$ is not simply an infrared cut-off but it can rather be interpreted as a perturbative contribution to the mass of the theory.
%\edobservation{Although we observed that \textit{it is not that easy}, since we are adding the regulator only at the level of $Z_k$, and in an "asymmetric" way. While it is true at the level of equations of motion, we noticed how the effects of $Q_k$ for example give us more freedom in the renormalization constants. What I mean is that we are not simply adding a mass term in the action, it is a little subtler, as Nicolò mentioned in \ref{Rmk: local covariance in M,k} At the level of correlation functions, we are just changing the mass; at the level of algebras (?) the procedure is more complicated. I don't know how to remark on this point without being so verbose}.
However, due to the finiteness of the right-hand side of both equations, they give well-defined beta-functions for the parameters.

\subsection{Non-local regulator function}

In the case of a non-local regulator $Q_k = -\frac{1}{2} \int \dd x \dd y  \q_k(x,y) \chi(x)\cdot_T \chi(y)$, 
at least formally, we may get $Z_k$ from \eqref{eq:Z-reg}. 
Proceeding in this way, we obtain flow equations similar to those already known in the literature.
Actually, the derivative of $W_k$ gives
\[
\partial_k W_k = \partial_k W_{k}(j) = -\frac{1}{2} \int \dd x \dd y \partial_k \q_{k}(x,y) \frac{1}{Z_{k}(j)} \omega(S(V)^{-1}\star\left[S(V+J+Q_{k}) \cdot_T \chi(x) \cdot_T \chi(y)\right]) \ .
\]
We can rewrite the above equation as the Polchinski equation,
\[
\partial_k W_{k} =  -\frac{1}{2} \int \dd x \dd y \partial_k \q_{k}(x,y) \bigg[ - i \frac{\delta^2 W_{k}(j)}{\delta j(x)j(y)} + W_{k}^{(1)}(x)W_{k}^{(1)}(y)  \bigg ] \ .
\]
Analogously, the same derivation as before gives the flow equation for the average effective action,
\begin{equation} \label{flow-eq-gamma-nonlocal}
\partial_k \Gamma_{k} (\phi) = \frac{1}{2} \int \dd x \dd y \partial_k q_{k}(x,y) \bigg [ \frac{1}{Z_{k}(j)} \omega \big (R_V(S(J+Q_{k}) \cdot_T \chi(x)\cdot_T \chi(y) ) \big ) - \phi(x)\phi(y))) \bigg ] \ ,
\end{equation}
which can be cast in the form of the Wetterich equation:
\[
\partial_k \Gamma_{k} = -\frac{i}{2} \int \dd x \dd y \partial_k \q_{k}(x,y) \bigg( \Gamma^{(2)}_k - \q_{k} \bigg)^{-1} \ .
\]

The interpretation of the Polchinski and Wetterich equations with a non-local regulator is  the same as in the standard functional renormalization group approach: in particular, equation \eqref{flow-eq-gamma} tells us how the effective average action flows with the scale $k$ of the IR cutoff $Q_{k}$, in terms of the full propagator of the theory. 
An example of the use of this non-local regulator in the case of cosmological spacetimes can be found in the recent work \cite{banerjee2022spatial}.

As they are, equation \eqref{flow-eq-gamma} and its non-local analogue \eqref{flow-eq-gamma-nonlocal} look intractable, and therefore, in order to solve the equation, we need to identify suitable approximation schemes.

\section{Approximations}

In this section we describe a procedure which provides a useful approximation of the effective action $\Gamma_k$ which can be applied when the chosen reference state $\omega$ for the free theory is quasifree.
It simplifies the computation of the effective action and leads to the notion of beta-function associated with the parameters of the theory $I$.

\subsection{Free case} \label{sec:free-case}

In the case $V=0$, the flow equation for local regulator 
reduces to
\[
\partial_k \Gamma_k =  
-\frac{1}{2} \int \dd x \partial_k \q_k(x) \bigg [\frac{1}{Z_k(j)} \omega \big (S(J)\cdot_T S(Q_k) \cdot_T T\chi(x)^2
\big) - \phi(x)^2  \bigg ] \ .
\]
Instead of analyzing this equation directly, we study the form of $\Gamma_k$ in the next proposition.

\begin{proposition}\label{prop:quadratic}
Let $V=0$, let $\omega$ be a quasifree state on $\mathcal{A}$ (associated with $I_0$).
Then the effective action $\Gamma_k$ coincides with $I_{0}$ up to a constant.
Moreover, for $W_k$ defined as in equation \eqref{Eq: Wk definition}, we have
% \begin{align}\label{Eq: Wk second derivative in the quadratic case}
%     \frac{\delta^2W_k}{\delta j^2}=
%     \Delta_{F,k}:=
% 	\sum_{n\geq 0}(-i)^n\Delta_F q_k(\Delta_F q_k)^{n-1}\Delta_F\,,
% \end{align}
\begin{align}\label{Eq: Wk second derivative in the quadratic case}
    \frac{\delta^2W_k}{\delta j^2}=
    \Delta_{F,k}:=
	\sum_{n\geq 0}(-i)^n
 {(\Delta_F q_k)^{n}}\Delta_F\,,
\end{align}
where in this formula  $\Delta_F$ is the linear operator  associated to the Feynman propagator with the Schwartz kernel theorem and $q_k$ is a multiplicative operator. Furthermore, 
 the series defining $\Delta_{F,k}$ is perturbative in $q_k$.
In particular, the integral kernel of the operator $\Delta_{F,k}$ is the Feynman propagator associated to $I_{0k}$.

\end{proposition}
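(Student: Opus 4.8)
The plan is to handle the three assertions in turn, using throughout that $V=0$ trivialises both the interaction and the Bogoliubov map, so that $R_0=\mathrm{id}$ and $Z_k(j)=\omega(S(J+Q_k))$.

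\emph{Step 1 ($\Gamma_k=I_0$ up to a constant).} This is immediate from the quantum equation of motion in the form \eqref{qeom-explicit}. Setting $V=0$ annihilates the second term, since $TV^{(1)}=0$, leaving $\Gamma_k^{(1)}(\phi)=P_0\phi=I_0^{(1)}(\phi)$ for every admissible $\phi$. As $\Gamma_k$ and $I_0$ then share the same first functional derivative on the affine space of configurations, their difference is a constant functional.

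\emph{Step 2 ($W_k^{(2)}=\Delta_{F,k}$).} With $V=0$ I first use that $\cdot_T$ is commutative and associative to factor $S(J+Q_k)=S(Q_k)\cdot_T S(J)$. The exponent $J+Q_k$ is at most quadratic in $\chi$ and $\omega$ is quasifree, so $\log Z_k=iW_k$ contains only a connected two-point contribution in $j$: all higher connected $j$-correlators vanish because $J$ is $\chi$-linear and $Q_k$ is $\chi$-quadratic. Hence $W_k$ is purely quadratic in $j$ — the linear term vanishes by the $\chi$-evenness of $Q_k$ together with $\omega(\chi)=0$ — and in particular $W_k^{(2)}$ is $j$-independent, as the statement requires. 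To identify it I would evaluate the right-hand side of \eqref{interacting-F-propagator}, which at $V=0$ is the connected time-ordered two-point function of the theory deformed by the quadratic term $Q_k$. Expanding $S(Q_k)=e_{\cdot_T}^{iQ_k}$ and using Lemma \ref{Lem: time-ordered multiplication with S(J)} to trade each $S(J)$ factor for a shift of the field by $i\Delta_F j$, every insertion of $Q_k=-\tfrac12\int q_k\chi^2$ contributes a two-legged vertex carrying a factor $\sim -iq_k$ joined to its neighbours by Feynman propagators $\Delta_F$; resumming the chain of contractions between the two external $j$-legs produces the geometric (Neumann) series $\sum_{n\ge 0}(-i)^n(\Delta_F q_k)^n\Delta_F=\Delta_{F,k}$, understood as a formal power series in $q_k$. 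More conceptually, the same result follows from the principle of perturbative agreement (Lemma \ref{lem:step} and \eqref{Eq: gPPA for relative S-matrices}): absorbing $Q_k$ into the free structure turns $Z_k$, up to the $j$-independent normalisation $Z_k(0)$, into the generating functional of the \emph{free} theory governed by $I_{0k}=I_0+Q_k$ in the quasifree state $\omega_k=\omega\circ \mathsf{r}_{Q_k}$, whose covariance is exactly $\Delta_{F,k}$.

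\emph{Step 3 (kernel of $\Delta_{F,k}$).} It remains to recognise the resummed series as the Feynman propagator of $I_{0k}$, whose relevant operator by Remark \ref{Rmk: action I0k} is $P_{0k}=P_0-q_k$. The series satisfies the recursion $\Delta_{F,k}=\Delta_F-i\Delta_F q_k\Delta_{F,k}$; acting with $P_0$ and using that $\Delta_F$ is a Green's function of $P_0$, the identity reorganises so that $\Delta_{F,k}$ is a Green's function of $P_{0k}$ with the same normalisation as $\Delta_F$. Since every term of the series is built from the Feynman parametrix $\Delta_F$, the Feynman (time-ordered) boundary conditions are inherited order by order in $q_k$, and uniqueness of the Feynman parametrix identifies $\Delta_{F,k}$ with the Feynman propagator of $I_{0k}$. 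The same conclusion follows from perturbative agreement, where $\Delta_{+,k}=\mathsf{r}_{Q_k}\Delta_+\mathsf{r}_{Q_k}^*$ is the two-point function of $\omega_k$ and $\Delta_{F,k}=\Delta_{+,k}+i\Delta_{A,k}$.

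\emph{Main obstacle.} The delicate point is Step 2: justifying that the quadratic insertion $Q_k$ — which enters through $\star$ and the reference state $\omega$, not merely through $\cdot_T$ — can be absorbed cleanly into the free $I_{0k}$ data via perturbative agreement, and that the resulting geometric series is well defined perturbatively in $q_k$ and carries Feynman, rather than advanced or retarded, boundary conditions. Once this is established, tracking the combinatorial $(-i)^n$ factors and the $j$-independent normalisation $Z_k(0)$ is routine.
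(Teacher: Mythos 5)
Your proposal is correct, and all three assertions are covered. For the identity \eqref{Eq: Wk second derivative in the quadratic case} you follow essentially the same route as the paper: expand $S(Q_k)$, observe that with only a linear source and a two-valent vertex the connected diagrams are chains, and resum the geometric series $\sum_n(-i)^n(\Delta_F q_k)^n\Delta_F$; the paper organizes the same computation through the linked-cluster theorem with an auxiliary bilocal source $f(x,y)$ used to insert $\chi(x)\chi(y)$, and shows separately that the diagrams with two $J$-insertions reproduce exactly the subtraction $\phi(x)\phi(y)$ in \eqref{interacting-F-propagator} --- your observation that $W_k$ is purely quadratic in $j$ (odd diagrams vanish for a quasifree state) is an equivalent packaging. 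The verification that $(P_0-q_k)\Delta_{F,k}$ is proportional to the identity via the recursion $\Delta_{F,k}=\Delta_F-i\Delta_F q_k\Delta_{F,k}$ is also the paper's computation, and your appeal to the M\o ller-map representation $\Delta_{F,k}=\Delta_{+,k}+i\Delta_{A,k}$ with $\Delta_{+,k}=\mathsf{r}_{Q_k}\Delta_+\mathsf{r}_{Q_k}^*$ matches the remark following the proposition (your phrase ``uniqueness of the Feynman parametrix'' is loose, since Feynman propagators are not unique, but the M\o ller-map argument supplies the correct identification). The genuine difference is in the first claim: you obtain $\Gamma_k=I_0$ up to a constant directly from the first functional derivative, since \eqref{qeom-explicit} with $V=0$ gives $\Gamma_k^{(1)}(\phi)=P_0\phi=I_0^{(1)}(\phi)$ outright. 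The paper instead goes through the second derivative: it combines $W_k^{(2)}=\Delta_{F,k}$ with \eqref{relation-second-derivatives} to get $\Gamma_k^{(2)}=I_0^{(2)}$, and then must invoke the parity result (Proposition \ref{pr:parity}) to exclude a residual term linear in $\phi$. Your route is shorter and does not need the parity proposition or even \eqref{Eq: Wk second derivative in the quadratic case} for this part; the paper's route has the merit of exhibiting explicitly that $\Gamma_k^{(2)}-q_k$ inverts the regularized Feynman propagator, which is the structural fact the Wetterich equation rests on. Both are valid and non-circular, since \eqref{qeom-explicit} descends from Proposition \ref{Prop: existence of jphi}, which is independent of the present statement.
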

\begin{remark}
    Notice that the series in Equation \eqref{Eq: Wk second derivative in the quadratic case} should be considered in the sense of perturbation theory with respect to $q_k$.
    \color{black}
    A rigorous construction of 
    $\Delta_{F,k}$ can be obtained using the classical M\o ller map introduced in \cite{Drago2015}.
    In particular, consider
    \[
    \tilde{\Delta}_{F,k} = \Delta_{+,k} + i \Delta_{A,k}
    \]
    where $\Delta_{A,k}$ and $\Delta_{+,k}$ are obtained by means of the functional which realizes the classical M\o ller map %$r_{Q_k}$ and
    $\mathsf{r}_{Q_k}\chi  = \chi - \Delta_{R,k} q_k \chi$, $\chi\in C^\infty(\mathcal{M})$ introduced in Proposition 3.8 in \cite{Drago2015} and recalled in \eqref{Eq: classical Moeller maps} in the appendix. Actually 
    by means of 
    Proposition 3.11 of \cite{Drago2015} 
    $\Delta_{+,k} = \mathsf{r}_{Q_k} \Delta_{+}  \mathsf{r}_{Q_k}^{*} 
    $
    and Lemma 3.10  of \cite{Drago2015} gives that
    $ \Delta_{A,k} = \Delta_{A} \mathsf{r}_{Q_k}^{*}$.
    Furthermore,  using recursively the latter relation 
    we obtain that
    % $ \Delta_{A,k} = \Delta_{A}(1- 
    %  q_k  \Delta_{A,k})$ one has that
     \[
     \Delta_{A,k} = \sum_{j=0}^N \Delta_A (-q_k \Delta_A)^j - 
      \Delta_A (-q_k \Delta_A)^N q_k
        \Delta_{A,k}.
     \]
    With this observation and using the support properties of the advanced and retarded fundamental solutions it is possible to prove that order by order in powers of $q_k$, $\Delta_{F,k} = \tilde{\Delta}_{F,k}$.
    (See the derivation of \eqref{eq:DeltaFk-Moller} below in a similar context for further details on this procedure.)
    \color{black}
    As a matter of fact, the convergence of the series can be addressed in specific scenario, \textit{e.g.} Minkowski or highly symmetric spacetimes.
    % , {see e.g. Proposition B.1 in \cite{Drago2015} }.
\end{remark}

\begin{proof}[Proof of Proposition \ref{prop:quadratic}]
We shall prove Equation \eqref{Eq: Wk second derivative in the quadratic case}, which, together with Equation \eqref{relation-second-derivatives}, will lead to the proof of the statement about $\Gamma_k$. 
%(For Edo: I think the rest of the proof is still needed because, although we may proceed with eq. \eqref{eq-j} and \eqref{Eq: average effective action}, we should nevertheless compute $Z_k\propto\omega_0(S(Q_k+J))$: this task is as hard as the proof below.)
%\nicorrection{The action $I_{0}$ is even and quadratic in the field, hence by Proposition \ref{pr:parity} we know that also $\Gamma_k$ is even, hence it is sufficient to prove that $I_{0}^{(2)}=\Gamma^{(2)}_k$ for every possible $\phi$.
%To this end we prove that $W_k^{(2)}$ is inverted by $I_{0}^{(2)}-\q_k$.}{}
%
We recall that $W_k^{(1)}(x)=\phi(x)$ and that
\[
- i \frac{\delta^2 W_{k}(j)}{\delta j(x)\delta j(y)} = 
		\frac{1}{Z_{k}(j)}\omega(S(Q_{k}+J)\cdot_T\chi(x)\cdot_T\chi(y))-\phi(x)\phi(y)
\]
	where $Z_{k}=\omega(S(Q_{k}+J))$.
	We study
	\begin{align*}
		\frac{1}{Z_{k}}\omega(S(Q_{k}+J)\cdot_T\chi(x)\cdot_T\chi(y))
		=\Delta_F(x,y)
		+\frac{1}{Z_{k}}\omega(S(Q_{k}+J)\cdot_T\chi(x)\chi(y))\,,
	\end{align*}
    To deal with the contribution $\omega(S(Q_{k}+J)\cdot_T\chi(x)\chi(y))$, we consider the functional
    \begin{align*}
	    \tilde{Q}_f(\chi):=\int\dd x\, [j(x)\chi(x)-q_k(x)\chi(x)^2]
	    +\int\dd x\dd y\, f(x,y)\chi(x)\chi(y)\,,
	\end{align*}
	where $f\in C^\infty_c(\mathcal{M}^2)$.
	With this definition we have
	\begin{align*}
		\frac{1}{Z_{k}}\omega(S(Q_k+J)\cdot_T\chi(x)\chi(y))
		=\frac{\delta}{i\delta f(x,y)}\log
        (S(\tilde{Q}_f))\bigg|_{\substack{f=0\\\chi=0}}\,
	\end{align*}
    because we are working in a representation of $\mathcal{A}$ where $\omega$ correspond to the evaluation on $\chi=0$.
	Since $S(\tilde{Q}_f)=T[\exp (i\tilde{Q}_f)]$ we can evaluate the log-contribution using the linked-cluster theorem\footnote{ See e.g. section 4.3.1 in \cite{Lindner2013} for a derivation of the formula expressing the linked cluster theorem.}:
	\begin{align}\label{eq:linked-cluster}
		S(\tilde{Q}_f)=T(e^{i\tilde{Q}_f})=\exp\left[T^c(e^{i\tilde{Q}_f}_\otimes)\right]\,.
	\end{align}
	Here $T^c$ denotes the connected time-ordered product: shortly, $T^c(A\otimes B)$ corresponds to summing over all contributions of $A\cdot_TB$ which comes from a connected diagram ---\textit{cf.} equation \eqref{Eq: connected time-ordered n-point functions}.
	For example we have, upon renormalization, $T^c(\chi^2(x)\otimes\chi^2(y))=2\chi(x)\Delta_F(x,y)\chi(y)+\Delta_F^2(x,y)$.
	We thus obtain
	\begin{align*}
		\frac{1}{Z_{k}}\omega(S(\tilde{Q}+J)\cdot_T\chi(x)\chi(y))
		&=\frac{\delta}{i\delta f(x,y)}T^c(e^{i\tilde{Q}_f}_\otimes)\bigg|_{\substack{f=0\\\chi=0}}
		\\&=T^c\left(e_\otimes^{iQ_k+iJ}\otimes\chi(x)\chi(y)\right)\big|_{\chi=0}
		\\&=\sum_{n\geq 1}\frac{1}{n!}T^c\left[ (iQ_k+iJ)^{\otimes n}\otimes\chi(x)\chi(y)
		\right]|_{\chi=0}
		\\&=\sum_{n\geq 1}\frac{1}{n!}T^c\left[ (iQ_k)^{\otimes n}\otimes\chi(x)\chi(y)
		\right]|_{\chi=0}
		+\\
		& 
		+\sum_{n\geq 0}\frac{1}{n!}T^c\left[ iJ\otimes iJ \otimes (iQ_k)^{\otimes n}\otimes\chi(x)\chi(y)
		\right]|_{\chi=0}\,.
	\end{align*}
	Using the definition of $T^c$ we have that the graph over which we are summing in evaluating $T^c\left[ (iQ_k)^{\otimes n}\otimes\chi(x)\chi(y)
		\right]$ are all formed by $n+2$ vertices joined by  $n+1$ edges. The external vertices correspond to $\chi(x)$ and $\chi(y)$ the $n$ internal vertices correspond to $Q_k$. All these graphs are equal and differs only by the possible $n!$ permutations of the internal vertices. Hence
 % We have that 
	\begin{align*}
\sum_{n\geq 1}\frac{1}{n!}T^c\left[ (iQ_k)^{\otimes n}\otimes\chi(x)\chi(y)
		\right]|_{\chi=0}
		&=\sum_{n\geq 1}((-i)^n
		\Delta_{F}\q_k(\Delta_F\q_k)^{n-1}\Delta_{F}\delta_y)(x) 
		\\&= \Delta_{F,k}(x,y)- \Delta_F(x,y)
		\,,
	\end{align*}
where $\Delta_F$ is considered as a linear operator acting on compactly supported functions and $\delta_y$ is the Dirac delta function centered in $y$.
%\edobservation{where $\Delta_{x,F}=\Delta_F(x,z)$ while $\Delta_{F,y}=\Delta_F(z,y)$ - to be removed?}. 
Notice that $\Delta_{F,k}$ is a Feynman propagator associated to $I_{0k}$.
%\kacomment{I don't like the collision of notation between $\Delta_{F,k}$ and $\Delta_{F,y}$. It suggests that $y$ and $k$ should play a similar role. Can we try to change it?} 
Indeed, recalling that $P_{0k}=P_0-q_k$ ---\textit{cf.} Remark \ref{Rmk: action I0k} --- we have
	\begin{align*}
	P_{0} \Delta_{F,k} = P_0\sum_{n\geq 0}(-i)^n
		(\Delta_F\q_k)^{n}\Delta_{F} = i  + \q_k \sum_{n\geq 1}(-i)^{n-1}
		(\Delta_F\q_k)^{n-1}\Delta_{F}
		=i  + \q_k \Delta_{F,k}\,.
    \end{align*}
	Furthermore, 
since $J$ is linear in the field, $Q_k$ and $\chi(x)\chi(y)$ are quadratic and since $T^c$ contains only connected components, we have that the
 graph over which we have to sum to evaluate $T^c\left[ iJ\otimes iJ \otimes (iQ_k)^{\otimes n}\otimes\chi(x)\chi(y)
\right]|_{\chi=0}$
are the connected graphs formed by $n+3$ vertices and $n+2$ edges. The external vertices correspond to $J$ and $J$. $n$ internal vertices correspond to $Q_k$ and the remaining vertex to $\chi(x)\chi(y)$. The latter is non local hence,  this graph can be seen as two-graphs, one joining $J$ to $\chi(x)$ and the other joining $J$ to $\chi(y)$. Taking care of the possible internal vertices in this decomposition and on the possible positions of $\chi\chi$ in the original graph we have the following 
	\begin{gather*}
		\sum_{n\geq 0}\frac{1}{n!}T^c\left[ iJ\otimes iJ \otimes (iQ_k)^{\otimes n}\otimes\chi(x)\chi(y)
		\right]|_{\chi=0}\\
		=\sum_{n\geq 0}\frac{1}{n!}T^c\left[ iJ \otimes (iQ_k)^{\otimes n}\otimes\chi(x)
		\right]|_{\chi=0}
		\sum_{l\geq 0}\frac{1}{l!}T^c\left[ iJ \otimes (iQ_k)^{\otimes l}\otimes\chi(y)
		\right]|_{\chi=0} =\phi(x)\phi(y)\,.
	\end{gather*}
	where the last equality can be proven using again the linked cluster theorem \eqref{eq:linked-cluster}, observing that by definition
	\[
			\phi(x) = \frac{1}{Z_{k}}\omega(S(Q_k+J)\cdot_T\chi(x))
		=\frac{\delta}{i\delta j(x)}T^c(e^{i{Q}_k+iJ}_\otimes)|_{\substack{\chi=0}}
		=T^c(e^{i{Q}_k+iJ}_\otimes\otimes \chi(x))|_{\substack{\chi=0}}\,.
	\]
	Summing up we have that
	\begin{align*}
	- i \frac{\delta^2 W_{k}(j)}{\delta j(x)\delta j(y)} 
		=\Delta_{F,k}(x,y)\,.
	\end{align*}
	where the latter Feynman propagator is the $Q_{k}$-massive one.
	% This entails that   $\Gamma_k^{(2)}=I_0^{(2)}$ because of Equation \eqref{relation-second-derivatives} 
 % .
	The latter equation, together with 
 Equation \eqref{relation-second-derivatives} and Equation \eqref{Eq: Wk second derivative in the quadratic case} entails that $\Gamma_k^{(2)}(\phi)=I_{0k}^{(2)}(\phi)$ for all $\phi$.
	Moreover, since the action $I_{0}$ is even and quadratic in the field and Proposition \ref{pr:parity} ensures that $\Gamma_k$ is even,
	it follows that $\Gamma_k=I_{0}$ up to a constant.
	% This entails  $\Gamma_k=I_0$ because of Equation \eqref{relation-second-derivatives}.
	% In particular, the latter equation together with Equation \eqref{Eq: Wk second derivative in the quadratic case} entails $\Gamma_k^{(2)}(\phi)=I_{0k}^{(2)}(\phi)$ for all $\phi$.
	% Moreover, since the action $I_{0}$ is even and quadratic in the field, Proposition \ref{pr:parity} ensures that $\Gamma_k$ is even.
	% It follows that $\Gamma_k=I_{0k}$ up to a constant.
\end{proof}

\subsection{Local potential approximation}
\label{sec:LPA}
In the following, we illustrate how we apply the approximation scheme known in the literature as \textbf{local potential approximation} (LPA) \cite{Dupuis2021} to the Lorentzian case, in order to find approximated solutions of the  the obtained flow equation for the effective action. 

We start analysing the right-hand side of \eqref{eq:Wetterich-Local}.
Firstly we observe that $W^{(2)}_k$ satisfies the following equation
%\begin{equation} \label{Gamma^2-W^2}
\[
(\Gamma^{(2)}_k - \q_k) W^{(2)}_k = - \delta \ . \ 
\]
However, even if we knew $\Gamma_k$ we would not simply obtain $W^{(2)}_k$ from the equation above for two reasons.
The first difficulty lies in the fact that 
in general $\Gamma_k$ is not quadratic in $\phi$ and thus $(\Gamma^{(2)}_k - \q_k)$ cannot be easily inverted to get $W^{(2)}_k$. 
Furthermore, even when $\Gamma$ is quadratic,
if $\Gamma^{(2)}_k$ contains a principal symbol which is of hyperbolic type, as it would happen in our case, it is known that there exists no unique inverse and thus it is not easy to pick up the inverse corresponding to the expectation value in the state $\omega$ which appears in the definition of $W$.

With this in mind, let us start with the obtained flow equation in the form given in \eqref{flow-eq-gamma}:
\[
\partial_k \Gamma_{k} (\phi) = -\frac{1}{2} \int \dd x \partial_k \q_{k}(x) \bigg [ \frac{1}{Z_{k}(j_\phi)} \omega \big (R_V(S(J_\phi+Q_{k}) \cdot_T T \chi^2(x) ) \big ) - \phi^2(x) \bigg ] \ .
\]
We stress that in the construction we have presented the state $\omega$ has been selected a priori: it does not depend on $k$ and we do not want to change it in the approximation we are going to discuss.
Furthermore, to be as close as possible to the standard literature, the reference state $\omega$ for the free theory is a quasifree Hadamard state.

The approximation scheme proceeds with an Ansatz for the effective average action in the form of a \textbf{local potential approximation},
\begin{equation}\label{eq:Ansatz}
\Gamma_{k}(\phi) =  - \int \dd^d x  \left( \frac{1}{2} \nabla_a\phi \nabla^a \phi  {+U_k(\phi)} \right) \ .
\end{equation}
As a first approximation, we do not consider a $k$-dependent wavefunction renormalization in front of the kinetic term. We will comment on its inclusion in remark \ref{Rmk: wavefunction renormalization} at the end of this section.

With this Ansatz, the left-hand side of the flow equation is 
$\partial_k \Gamma_k = -\int \dd x \; \partial_k U_k(\phi)$.
To approximate the right-hand side in a sensible way, we proceed as follows.
We expand $\Gamma_k$ close to a solution $\phi_{{cl}}$ of $\frac{\delta \Gamma_k}{\delta \phi} = 0$,
\begin{equation*}\label{eq:expansion-gamma-fluctuation-field}
\Gamma_k(\phi) = \Gamma_k(\phi_{cl}) + {\frac{1}{2}\Gamma^{(2)}_k(\phi_{cl})(\varphi,\varphi)} + \mathcal O(\varphi^3)
= \Gamma_k^t(\varphi) + \mathcal O(\varphi^3)
 \ ,    
\end{equation*}
where we introduced the fluctuation field $\varphi := \phi - \phi_{cl}$ and we 
have denoted the \textbf{truncated effective action} by $\Gamma_k^t$.

The local potential approximation proceeds by assuming that $\Gamma_k^t$ is the average effective action produced by an action $I^t_0$, in the very same way $\Gamma_k$ is the average effective action produced by $I=I_0+V$.
By Proposition \ref{prop:quadratic}, the action $I^t_0$ has to be quadratic, because so is $\Gamma_k^t$.
Moreover, from equation \eqref{eq:Ansatz} it follows that the principal symbol of $\Gamma_k^t$ is $k$-independent, so we must have
\[
I^t_0 = - \int \dd x \left( \frac{1}{2} \nabla_a\chi \nabla^a \chi  {+\frac{1}{2}U^{(2)}_k(\phi_{cl})}\chi^2 \right)\, ,
\]
where $U^{(2)}_k(\phi_{cl})$ 
is equal to the contribution to the truncated $\Gamma^t_k$ coming from the potential $U_k$.
% \edobservation{$\Gamma^{(2)}$ is missing the kinetic term. I would rewrite everything in terms of $U(\phi)$ instead of $U(\rho)$, since $\rho$ is not important at this stage.}
% \[
% \Gamma^{(2)}_k(\phi) =  U_k^{(1)}(\rho)+2 \rho  U_k^{(2)}(\rho)
% \]
% is the contribution to the form of the truncated $\Gamma^t_k$ coming from the potential $U_k$.
We proceed by comparing the "truncated" action $I^t_0$ with the full action $I=I_0+V$ to find an approximation of the right-hand side of \eqref{flow-eq-gamma}.
More precisely, we  decompose the full action $I=I_0+V$ in another free and interacting part, namely 
$I=I^t_0+\mathcal{V}$. We then rewrite the right-hand side of the flow equation perturbing the system around $I^t_0$ and discarding the correction due to $\mathcal{V}$ in the right-hand side of \eqref{flow-eq-gamma}.
We thus have
\[
I=I_0+V = 
% I_0+ M - M + V =  I_0+ M +\mathcal{V} = 
I_0^t + \mathcal{V}\,,
\]
where $I_0^t:= I_0+M$ and $\mathcal{V} := V-M$, with 
\[
M :=  -\int \dd x (U^{(2)}_k(\phi_{cl})-  m^2-\xi R )\frac{\chi^2}{2}.
\]
Notice in particular that $M$ depends on the non-trivial field background $\phi_{cl}$.
Furthermore, its role is that of shifting the quadratic non-kinetic parts, originally present in $I_0$, to $-U_k^{(2)}(\phi_{cl})\chi^2/2$.

Truncating $\Gamma$ to $\Gamma^t$ corresponds thus to discarding the $\mathcal{V}$ contributions at the right-hand side of the flow equation \eqref{flow-eq-gamma}, keeping however the choice of the state $\omega$ originally made.
Stated differently, we are approximating the full, non-linear action $I$ with a quadratic one, $I^t_0$, which is the quadratic action whose associated average effective action is $\Gamma_k^t$.
The term $M$ contains the residual information about the interactions of the full Lagrangian.
The action $I_0^t$ then contains the usual kinetic term, plus a non-trivial, classical background, acting as a $\phi_{cl}$-dependent mass for the fluctuation field $\varphi$.

Within this truncation, we now study how the right-hand side of \eqref{flow-eq-gamma} gets modified. In order to proceed, we make use of the principle of perturbative agreement to analyse how the Bogoliubov map changes with this new splitting (see Appendix \ref{sec:perturbative-agreement} for a review).
In fact, we want to use the Bogoliubov map $R_{\mathcal{V}}$ constructed around the new action $I_0^t$ and to consider only the zeroth order contribution.

In particular, making use of Theorem 4.1 in \cite{Drago2015}, we have that ---\textit{cf.} Equation \eqref{Eq: gPPA for Moeller operators}---
\[
R_V = R_{\mathcal{V}+M} = r_M\circ R^M_{\gamma \mathcal{V}}\circ \gamma\,,
\]
where $r_M$ is the classical M\o ller map, whose definition
%\nicorrection{in a similar case}{} 
is recalled in the Appendix \ref{sec:perturbative-agreement} ---\textit{cf.} Equation \eqref{Eq: classical Moeller maps}--- 
and $\gamma$ intertwines the time ordered product $T$ constructed with $\Delta_F$ with a suitably chosen time-ordered product $T_M$ for the free theory $I^t_0$, actually $\gamma T=T_M$.
In particular, the associated Feynman propagator $\Delta_{F,M}$ is the one associated with the quasifree state $\omega_M$, whose two-point function reads $\mathsf{r}_M\Delta_+\mathsf{r}_M^*$ where $\mathsf{r}_M$ is given in Equation \eqref{Eq: classical Moeller maps}. Furthermore, $R^M_{\gamma V}$ is the Bogoliubov map (quantum M\o ller map) constructed over the free theory $I^t_0$. 
Hence, since
\[
R_{\mathcal{V}+M}(S(J_\phi+Q_{k})\cdot_T T\chi^2  ) = r_M \left( R^M_{\gamma \mathcal{V}}(S^M(\gamma Q_k+\gamma J_\phi)\cdot_T \gamma T \chi^2) \right)
\]
holds, discarding the contributions containing $\mathcal{V}$ at the right-hand of \eqref{flow-eq-gamma} gives the flow equation in the relevant approximation. Up to an integration over the space it is given by:
\[
\partial_k {U_k(\phi)} = 
\frac{\partial_k \q_k}{2} \left(\frac{\omega_M (S^M(\gamma Q_k + J_\phi)\cdot_{T_M} T_M \chi^2)} {\omega_M (S^M(\gamma Q_k + J_\phi))} - \phi^2 \right),
\]
where we used the fact that $\gamma T \chi^2=T_M \chi^2$.
The right-hand side of the previous equation can be computed following the analysis presented in Proposition \ref{prop:quadratic}, to obtain
\[
\partial_k U_k {(\phi)}  =  \lim_{y\to x}\frac{\partial_k \q_k}{2} \left(\frac{\omega_M (S^M(\gamma Q_k + J_\phi)\cdot_{T_M}  \chi(x)\cdot_{T_M} \chi(y) )} {\omega_M (S^M(\gamma Q_k + J_\phi))} - H_{F,M,k}(x,y) - \phi(x)\phi(y) \right),
\]
where we used the fact that $\gamma \chi = \chi$.
Hence
\begin{equation} \label{approximate-Wetterich}
\partial_k U_k{(\phi)} =  \lim_{y\to x}  \frac{\partial_k \q_k}{2}(x) (\Delta_{F,M,k}(y,x)-H_{F,M,k}(y,x)) \,,
\end{equation}
where $\Delta_{F,M,k}$ is a Feynman propagator for the theory $I^t_0+Q_k$ obtained from the two-point function $\Delta_+$ of the state $\omega$ and where $H_{F,M,k}$ is the Hadamard parametrix of the theory $I^t_0+Q_k$.
More precisely, arguing as in the proof Proposition \ref{prop:quadratic}, we have that 
\begin{equation}\label{eq:Feynman-k}
\Delta_{F,M,k} = \sum_{n\geq 0} (-i)^n \Delta_{F,M} (\q_k\Delta_{F,M})^n 
\end{equation}
where $\Delta_{F,M}=\Delta_{+,M}+i\Delta_{A,M}$ is a Feynman propagator of the theory  $I^t_0$, which is obtained from $\Delta_F = \Delta_++i\Delta_A$ by means of the principle of perturbative agreement. 
According to Lemma 3.1 and Proposition 3.11 in \cite{Drago2015}, see also Appendix \ref{sec:perturbative-agreement} and formula \eqref{Eq: classical Moeller maps} ,
\begin{equation}\label{eq:mass-change}
\Delta_{+,M} = \mathsf{r}_M \circ \Delta_{+} \circ \mathsf{r}_M^*\,,
\end{equation}
where $\mathsf{r}_M$ is such that,  
\[
\mathsf{r}_M \chi  = \chi-\Delta_{R,M}M^{(1)}(\chi)
\]
and $\Delta_{R,M}$ is the unique retarded propagator of $I^t_0$.

We finally rewrite formula \eqref{eq:Feynman-k} in a simpler way.
We start recalling that $\Delta_{F,M} = \Delta_{+,M} +i \Delta_{A,M}$, 
hence, 
\begin{equation}\label{eq:deltafmk}
\Delta_{F,M,k} = \sum_{n\geq 0} (-i)^n (\Delta_{+,M}+i 
\Delta_{A,M}) (\q_k(\Delta_{+,M}+i \Delta_{A,M}))^n 
\end{equation}
according to Lemma 3.10 in \cite{Drago2015}, we have that 
\[
\Delta_{A,M,k} = \Delta_{A,M}\sum_{n\geq 0} (\q_k \Delta_{A,M})^n =  \Delta_{A,M}(1+\q_k \Delta_{A,M,k}) = \Delta_{A,M} \mathsf{r}_{Q_k}^*\,,
\]
hence, rearranging the sum,  we may rewrite \eqref{eq:deltafmk} as
\begin{align*}
    \Delta_{F,M,k} &=i\Delta_{A,M,k} 
    +[1+\Delta_{A,M,k}q_k] \Delta_{+,M} \mathsf{r}^*_{Q_k}  
    + \sum_{n\geq 1}   
    \mathsf{p} (\Delta_{+,M} \mathsf{r}^*_{Q_k}) (-\q_k \Delta_{+,M} \mathsf{r}^*_{Q_k})^{n}
    \\
    &= i\Delta_{A,M,k} 
    + \mathsf{r}_{Q_k} \Delta_{+,M} \mathsf{r}^*_{Q_k}  
    - \Delta_{M,k} \q_k \Delta_{+,M} \mathsf{r}^*_{Q_k}
    + \sum_{n\geq 1}   
    \mathsf{p} (\Delta_{+,M} \mathsf{r}^*_{Q_k}) (-\q_k \Delta_{+,M} \mathsf{r}^*_{Q_k})^{n} \,,
\end{align*}
where $\mathsf{p} = (1+\Delta_{A,M,k} \q_k) = 
(1+\Delta_{R,M,k} \q_k -\Delta_{M,k} \q_k) = \mathsf{r}_{Q_k} - \Delta_{M,k} \q_k$.
Notice that
\[
-\Delta_{M,k} \q_k \Delta_{+,M} = \Delta_{M,k} ((P_0+M^{(2)}-\q_k)-(P_0+M^{(2)})) \Delta_{+,M} = 0 \,,
\]
where in the last step we used the fact that $\Delta_{M,k}$ is a weak solution of $P_0+M^{(2)}-\q_k$ in both entries while $\Delta_{+,M}$ is a weak solution of $P_0+M^{(2)}$. 
Similarly we have also that for every $n\geq 2$, $(\Delta_{+,M} \mathsf{r}^*_{Q_k}) (\q_k \Delta_{+,M} \mathsf{r}^*_{Q_k})^{n-1} = 0$ because $\Delta_{+,M}\mathsf{r}^*_{Q_k} (P_0+M^{(2)}-\q_k )= 0 $. 
Finally since $\mathsf{r}_{M+Q_k}(f) = \mathsf{r}_{Q_k}(\mathsf{r}_{M}(f))$ we have that
\begin{equation}\label{eq:DeltaFk-Moller}
\Delta_{F,M,k} = i\Delta_A \mathsf{r}_{M+Q_k}^* + \mathsf{r}_{M+Q_k} \Delta_{+} \mathsf{r}_{M+Q_k}^*\,. 
\end{equation}

Combining all these observations, we conclude that the right-hand side of \eqref{approximate-Wetterich} is nothing but 
the expectation value of $\partial_k Q_k$ in a quasifree state $\omega_{M,k}$ whose two-point function is 
\begin{equation}\label{eq:two-point-function-approximation}
\Delta_{+,M,k}  =  \mathsf{r}_{M+Q_k} \Delta_{+} \mathsf{r}_{M+Q_k}^*\,,
\end{equation}
hence
\begin{equation} \label{approximate-Wetterich-2}
\partial_k U_k(\rho) = -\omega_{M,k}[\partial_k(Q_k)]\,,
\end{equation}
where $Q_k$ is a properly normal ordered Wick square.
Actually, for $q_k=k^2 f$ in the region where the cutoff function $f$ is 1, it can be evaluated as 
\[
+\partial_k U_k(\rho) =  \lim_{y\to x}  k (\Delta_{S,M,k}(y,x)-H_{M,k}(y,x))  \,,
\]
where $\Delta_{S,M,k}$ is the symmetric part of the two-point function $\Delta_{+,M,k}$ and $H_{M,k}$ is the Hadamard function related to the theory whose action is $I^t_0+Q_k=I_0+M+Q_k$.

The regularization which is provided by the point splitting procedure discussed here is compatible with the principles 
discussed in \cite{HollandsWald2001a}, see also \cite{HollandsWald14}. Furthermore, many explicit computations of similar contributions are already present in the 
literature on flat and curved spacetimes (e.g. in de Sitter \cite{Decanini2005}).

\color{black}

\begin{remark}\label{Rmk: wavefunction renormalization}

We can slightly adapt the discussion above in order to include a wavefunction renormalization. In this case, we start with an Ansatz for the average effective action in the form
\[
\Gamma_{k}(\phi) =  - \int \dd^d x  \left( \frac{z_k}{2} \nabla_a\phi \nabla^a \phi + U_k(\phi) \right) \ ,
\]
while we also modify the regulator into
\[
Q_k(\phi) = - \frac{z_k}{2} \int \dd^d x q_k(x) \phi^2(x) \ .
\]
This approximation scheme is known as LPA'. First, we rescale the fields as $\phi \to z_k^{-1/2}\phi$, to obtain
\[
\Gamma_{k}(\phi) =  - \int \dd^d x  \left( \frac{1}{2} \nabla_a \phi \nabla^a \phi + U_k \left (\frac{\phi}{\sqrt z_k} \right ) \right) \ .
\]
We can now argue as before, expanding the average effective action around a solution $\phi_{cl}$ of the quantum equations of motion for vanishing $j$, and keeping terms at most quadratic in the fluctuation field:
\[
\Gamma_k(\phi) = \Gamma_k(\phi_{cl}) 
-\int \dd^d x
\left(\frac{1}{2}
\nabla_a \varphi \nabla^a \varphi
+\frac{1}{2} U^{(2)}_k\left(\frac{\phi_{cl}}{\sqrt z_k} \right ) \varphi^2
\right)
% \varphi \frac{1}{2}\left ( - \frac{1}{2} \nabla_a \nabla^a + U^{(2)}_k \left (\frac{\phi_{cl}}{\sqrt z_k} \right ) \right) \varphi 
+ \mathcal O(\varphi^3) := \Gamma^t_k(\varphi) 
+ \mathcal 
O(\varphi^3) \ ,
\]
where $U^{(2)}_k$ denotes the derivative with respect to the argument $\phi/\sqrt{z_k}$.

The truncated average effective action $\Gamma^t_k$ must come from a truncated action of the form
\[
I_0^t = - \int \dd x \left ( \frac{1}{2} \nabla_a \chi \nabla^a \chi +U^{(2)}\left( \frac{\phi_{cl}}{\sqrt z_k} \right ) \frac{\chi^2}{2} \right ) \ .
\]
Comparing the truncated action with the full action $I= I_0 + V$ we see that $I= I_0^t + \mathcal V = I_0 + M + \mathcal V$, where now $\mathcal V = V - M$ and $M$ is
\[
M = 
\int \dd x 
\bigg(m^2+\xi R-U^{(2)}_k\left( \frac{\phi_{cl}}{\sqrt z_k} \right ) \bigg) \frac{\chi^2}{2}.
% \frac{1}{4} U^{(2)}_k\left( \frac{\phi_{cl}}{\sqrt z_k} \right ) \chi^2\ .
\]
Scaling back to the original classical fields $\phi_{cl} \to \sqrt z_k \phi_{cl}$ we obtain
% \[
% M = \int \dd x \frac{1}{4 z_k} U^{(2)}_k(\phi_{cl}) \chi^2\ .
% \]
\[
M = 
\int \dd x 
\bigg(m^2+\xi R-\frac{U^{(2)}_k(\phi_{cl})}{z_k}\bigg) \frac{\chi^2}{2 }.
% \frac{1}{4} U^{(2)}_k\left( \frac{\phi_{cl}}{\sqrt z_k} \right ) \chi^2\ .
\]

The approximation of the r.h.s now proceeds as before, arriving at \eqref{approximate-Wetterich}. The difference now is in the mass term $M$, which includes an additional $k-$dependence in the wavefunction renormalization.

For example, in the simple case of the Minkowski vacuum as the reference state $\omega$, with $q_k = k^2$, taking the adiabatic limit and choosing a constant classical field $\phi_{cl}$, the r.h.s. of the Wetterich equation in Fourier domain becomes
\[
\partial_k \Gamma_k = \frac{1}{2(2 \pi)^d} \int \dd^d p \frac{z_k \partial_k(z_k k^2)}{z_k(p^2 + k^2 + m^2_k) + U^{(2)}_k(\phi_{cl})} \ .
\]
\end{remark}
% \color{red}
% CONTROLLARE IL SEGNO DELL'ULTIMA FORMULA

 \color{black}

\begin{remark}\label{Rmk: local covariance in M,k}
    In the computation described above we have transformed the Wick square $T\chi^2$ in $T_{M,k}\chi^2$ with a two step procedure:
    (a) we moved $T\chi^2$ in $T_M\chi^2$ by invoking the principle of perturbative agreement;
    (b) we changed $T_M\chi^2$ into $T_{M,k}\chi^2$ roughly by computing $\omega(S(Q_k)\cdot_{T_M}T_M\chi^2)/\omega(S(Q_k))$.
    
    Although the net result is the same (i.e. we are changing the mass of the theory), the procedures (a) and (b) are slightly different.
    This difference is particularly relevant once considering the ambiguities in the choice of $T\chi^2,T_M\chi^2$ and $T_{M,k}\chi^2$.
    Indeed, since point (a) respects the principle of local covariance \cite{Brunetti2001}, it follows that the ambiguities which define $T\chi^2$ and $T_M\chi^2$ \textit{have to be the same}.
    This is not the case when discussing the ambiguities of $T_{M,k}\chi^2$, since there is no reason why (b) should respect the principle of local covariance, because theories with different $k$s are in principle not deformable one into the other.
    Thus, we are a priori free to make \textit{different} choices for the ambiguities defining $T\chi^2$ and $T_{M,k}\chi^2$.
    This possibility will turn out to be quite useful when discussing actual computation, \textit{cf.} Section \ref{Subsubsec: Vacuum case}.
\end{remark}

Taking successive functional derivatives with respect to $\phi$ on both sides of \eqref{approximate-Wetterich} gives the beta-functions for the evolution equations of the running coupling constants, defined as the coefficients of the Taylor series of $U_k(\phi)$ close to $\phi=0$ (which correspond to a local minimum of $U_k(\phi)$):
\begin{equation} \label{effective-potential-taylor}
U_k(\phi) = U_{0,k} + m^2_k \frac{\phi^2}{2} +  \lambda_k\frac{\phi^4}{4!} + \mathcal O(\phi^6) \ .
\end{equation}
In the previous discussion, we always tacitly assumed that the starting point for the flow was the classical action $I_0$.
Clearly, one may start the flow at a different scale $\Lambda$, where $\Lambda\to\infty$ corresponds to the classical action $I_0$.
Such identification lets one interpret the beta-functions as the evolution equations for the couplings.

The beta-functions are then defined as the evolution equations for the dimensionless parameters $\tilde m^2_k$ and $\tilde \lambda_k$ with respect to the renormalization time $t = \log k/ \Lambda$, where $\Lambda$ is the scale at which the renormalization starts. In powers of $k$, the dimensions of the couplings are $[m^2_{\beta, k}] = 2$ and $[\lambda_k] = 4 - d$. Then we have
\begin{align}
k \partial_k \tilde m^2_{k,\beta} &= k^{-1} \partial_k m^2_{k,\beta} - 2 \tilde m^2_{k,\beta}  \\
k \partial_k \tilde \lambda_k &= k^{d-3} \partial_k \lambda_k + (d-4) \tilde \lambda_k \ . 
\end{align}

\section{Applications} \label{section-applications}

\subsection{Local regulator and the high-temperature fixed point in $\lambda \phi^4$}
    
As an application of the theoretical machinery presented in this paper, we apply our modified Wetterich equation to $\lambda \phi^4$ model at finite inverse temperature $\beta$.
    The Lagrangian density for $\lambda \phi^4$ at finite temperature is given by
\[
\mathcal L =  - \frac{1}{2} \nabla_a \chi \nabla^a \chi - \frac{\lambda}{4!}\chi^4 \ .
\]

We apply the renormalization scheme of section \ref{sec:LPA}. We therefore linearize the theory as a free theory for the perturbation $\varphi$ with a mass term $m^2 + \lambda \rho$ where $\rho=\phi^2/2$. Following \eqref{effective-potential-taylor}, our Ansatz for $U_k({\phi})$ is
\begin{equation} \label{explicit-effective-potential}
U_k({\phi}) = U_{0,k,\beta} + m^2_{k, \beta} \rho +\frac{1}{6}\lambda_{k, \beta}\rho^2 \,\qquad \rho=\frac{\phi^2}{2} .
\end{equation}
The couplings $m^2_{\beta,k}$ and $\lambda_{\beta,k}$ can depend on the temperature as well as $k$, since, for example, there will be contributions coming from the one-loop renormalization of the thermal mass.

The right-hand side of the modified Wetterich equation in the LPA \eqref{approximate-Wetterich} can be computed as follows. 
We choose as reference state an equilibrium state with respect to Minkowski time evolution at inverse temperature $\beta$, which is also quasifree for the free theory. This is a KMS state for the free theory $\lambda = 0$ whose two-point function is invariant under translations, and has the form (see e.g. \cite{Drago2015})
\begin{equation}
\Delta_{+}^\beta(x^0,\mathbf{x}; y^0,\mathbf{y}) = \int \frac{\dd^{d-1} \mathbf{p}}{(2 \pi)^{d-1}} e^{i\mathbf{p}\cdot(\mathbf{x} - \mathbf{y})} \frac{1}{2 w } \bigg(\frac{e^{-i  w (x^0-y^0)}}{1-e^{-\beta w}} + \frac{e^{i w (x^0-y^0)}}{e^{\beta w} -1} \bigg) \ ,
\end{equation}
where $w = |\mathbf{p}|$ . 
We follow the procedure discussed above to get the corresponding  
\[
\Delta^\beta_{+,M,k}=
\mathsf{r}_{Q_k+M}\Delta^\beta_{+} \mathsf{r}_{Q_k+M}^*.
\]
Following \cite{Drago2015}, see also \cite{DragoGerard2016,Drago2018,Vasconcellos2019}, and taking the adiabatic limit we have that the two-point function of the state in which we compute the expectation value of the Wick square is
\begin{equation}\label{eq:2pt-beta-k-M}
\Delta_{+,M,k}^\beta(x^0,\mathbf{x}; y^0,\mathbf{y}) = \int \frac{\dd^{d-1} \mathbf{p}}{(2 \pi)^{d-1}} e^{i\mathbf{p}\cdot(\mathbf{x} - \mathbf{y})} \frac{1}{2w_{M,k} } \bigg(\frac{e^{-i  w_{M,k} (x^0-y^0)}}{1-e^{-\beta w}} + \frac{e^{i w_{M,k} (x^0-y^0)}}{e^{\beta w} -1} \bigg) \ ,
\end{equation}
where $w_{M,k} = \sqrt{w^2 + m_{k, \beta}^2 + \lambda_{k, \beta} \rho +\q_k}$.
Notice that the $w$-factors associated with the modes have changed, while this is not the case for the Bose factors.

The above two-point function differs from that of \cite{Litim2006}, since in our construction the Bose factors are $k$-independent. The reason is that, in the derivation of the Wetterich equation we have presented in the previous sections, we fixed the state once and for all in the original, unregularised theory. 

One could ask what happens if, instead, one chooses different  states at different $k$'s. 
 In the example of thermal states a possibility in this direction 
 would be to consider $\beta$ which depends on $k$.
 % this would imply choosing a thermal state for the interacting theory, which could have been done exactly since the interaction is quadratic, and therefore the Bose factors in the above two-point function would have depended on $k$ as well. 
However, 
this introduces 
a new scale dependence through the choice of 
states at various $k$
% extra effective scale is introduced 
% with in the dependance of $k$ states
% one effectively in} However, this procedure 
% would introduce a scale dependence in the state, 
in addition to the the scale dependence one introduces in the observables. As a consequence, the flow equation \eqref{approximate-Wetterich} 
would receive an additional contribution from the derivative of the explicit dependence of the state on $k$ and the obtained equation will not be of the simple form given in \eqref{eq:Wetterich-Local}.
With this in mind, we conclude that the flow equation \eqref{approximate-Wetterich}
takes the form
\begin{equation} \label{eq:wetterich-termal}
\partial_k U_k({\phi}) =  
\lim_{y\to x} 
(\Delta^\infty_{S,M,k}(y,x)-H(y,x))  \frac{\partial_k \q_k(x)}{2}
+
\lim_{y\to x}
(\Delta^\beta_{S,M,k}(y,x)-\Delta^\infty_{S,M,k}(y,x))  \frac{\partial_k \q_k(x)}{2}
\end{equation}
The first contribution is the one which would remain in the limit $\beta\to\infty$, namely when $\omega$ is chosen to be the vacuum state, while the second is the correction due to the temperature. 

The flow equations for $m_{k,\beta}$ and $\lambda_{k,\beta}$ are obtained from the above, taking on both sides of the equation functional derivatives up to order two with respect to $\rho$ and equating them for $\rho=0$.
Hence, to get the beta-functions for both $\lambda$ and $m$ we analyse these two contributions separately.
We have that (in the adiabatic limit and in the four dimensional case)
\begin{equation}\label{Eq: thermal-minus-vacuum contribution}
\begin{aligned}
    A&:=
    \lim_{y\to x}
    (\Delta^\beta_{S,M,k}(y,x)-\Delta^\infty_{S,M,k}(y,x)(y,x))  \frac{\partial_k \q_k(x)}{2}
    \\&= \frac{1}{(2\pi)^{3}}  \int \dd^{3} \textbf{p} 
     \frac{1}{w_{M,k} } \bigg(\frac{1}{e^{\beta w} -1} \bigg) \frac{\partial_k \q_k(x)}{2}
\end{aligned}
\end{equation}
while, by employing the result recalled in Appendix \ref{Sec: Hadamard expansion of the Minkowski vacuum two-point function and the Wick square},
\begin{equation}
\label{Eq: vacuum contribution}
\begin{aligned}
    B&:=
    \lim_{y\to x}
    (\Delta^\infty_{S,M,k}(y,x)-H(y,x)) 
    \frac{\partial_k \q_k(x)}{2}
    \\&= 
    \frac{1}{8\pi^2}(k^2 + m_{k,\beta}^2 + \lambda_{k,\beta}\rho)
    \log \left( \frac{k^2 + m_{k,\beta}^2 + \lambda_{k,\beta}\rho}{\mu^2} 
    \right) 
    \frac{\partial_k \q_k(x)}{2}
\end{aligned}
\end{equation}
the regulator $\q_k$ is now chosen to be equal to $\q_k=k^2$ ---the corresponding adiabatic limit has been tacitly taken.

\subsubsection{Vacuum case}\label{Subsubsec: Vacuum case}

In the limit of vanishing temperature $\beta\to\infty$ we have that the contributions due to \eqref{Eq: thermal-minus-vacuum contribution} vanish, and in the four dimensional case we are left with
\begin{align*}
k\partial_k U_{0,k} &= \frac{1}{8\pi^2}  k^2 (k^2+m_k^2) \log \left(\frac{k^2+m_k^2}{\mu^2} \right)
\\
k\partial_k m^2_k &= 
\frac{1}{8\pi^2}k^2
\left(1+ \log \left(\frac{k^2+m_k^2}{\mu^2} \right)
\right) \lambda_k
\\
k\partial_k \lambda_k &= 
\frac{3}{8\pi^2} 
\frac{ k^2}{k^2+m_k^2}\lambda_k^2 \ .
\end{align*}
In terms of the dimensionless couplings, $\tilde{U}_{0,k} = U_{0,k}/k^4$ $\tilde{m}_k = m_k/k$ and $\tilde{\lambda}_k=\lambda_k$
the beta-functions then are
\begin{align*}
k\partial_k \tilde{U}_{0,k} &= -4\tilde{U}_{0,k}+\frac{1}{8\pi^2}  (1+\tilde{m}_k^2) \left[\log \left({1+\tilde{m}_k^2} \right) +
  \log \left(\frac{k^2}{\mu^2} \right) 
\right]
\\
k \partial_k \tilde m^2_k &= -2 \tilde m^2_k + \frac{1}{8 \pi^2}   \left [ 1 +   \log(1+ \tilde m^2_k) + \log(\frac{k^2}{\mu^2})  \right ] \tilde \lambda_k \\
k \partial_k \tilde \lambda_k &=  \frac{3 }{8\pi^2 }  \frac{\tilde \lambda^2_k}{ 1+\tilde m_k^2 } 
\end{align*}
 In the above equation, the arbitrary mass parameter $\mu$ represents a residual freedom in the ultraviolet renormalization scheme we have adopted. Hence, the price to pay to have a local regularization term, is an additional freedom in the beta-functions due to the ultraviolet renormalization scale $\mu$. However, the additional freedom may be safely removed setting
 %It would be tempting, then, to set $\mu^2 = k^2$; however, this would introduce an additional $k$-dependence in the expectation value of $T\chi^2$ state which should also be taken into account.
%We may get rid of the unpleasant contribution 
%$\log k^2/\mu^2$ by setting 
$\mu=k$. 
%\kacomment{Something wrong with the end of the previous sentence}
This is equivalent to tune the renormalization ambiguities of $T_{M,k}\chi^2$ which, we recall, are not forced to be the same as the ones present in $T\chi^2$ ---\textit{cf.} Remark \ref{Rmk: local covariance in M,k}.
Notice that this choice would not change the form of the Wetterich equation as it is made \textit{after} deriving in $k$.
We finally observe that setting $\mu=k$ we can still identify a fixed point for $\tilde\lambda_k=0$ and for $\tilde{m}_k=0$, in agreement with what one obtains using non local regulators, showing that the only fixed point in four dimensions is the non-interacting one.

\subsubsection{High temperature limit}

We now take the high temperature limit $\beta \to 0$.
We shall later approximate $\dfrac{1}{e^{\beta w}-1}\approx \dfrac{1}{\beta w}$ in some part of the computation.
Hence 
\begin{align*}
A 
=      
 \frac{k^3}{2\pi^{2}}  \int_0^\infty \dd p   
 \frac{p^2}{\sqrt{p^2 + \left(\frac{m_{k,\beta}}{k}\right)^2 +\frac{\lambda_{k,\beta}}{k^2}\rho + 1}}
 \frac{1}{e^{\beta k p}-1}\,.
\end{align*}
Expanding $A$ up to order $2$ in powers of $\rho$ we get
\begin{multline*}
A 
\simeq  
 \frac{k^3}{4\pi^{2}}  \int_0^\infty \dd p^2 
 \left[ 
 \frac{1}{\left({p^2 + \left(\frac{m_{k,\beta}}{k}\right)^2  + 1}\right)^{\frac{1}{2}}}
 -
 \frac{1}{2 k^2}
 \frac{\lambda_{k,\beta}\rho}{\left({p^2 + \left(\frac{m_{k,\beta}}{k}\right)^2  + 1}\right)^{\frac{3}{2}}}
 \right.
 \\\left.+
 \frac{3}{8 k^4}
 \frac{(\lambda_{k,\beta}\rho)^2}{\left({p^2 + \left(\frac{m_{k,\beta}}{k}\right)^2  + 1}\right)^{\frac{5}{2}}}
  \right ]   
  \frac{p}{e^{\beta k p}-1}\,.
\end{multline*}
The contribution to the beta-functions due to $A$ diverges as $1/\beta$ in the limit $\beta\to 0$ while $B$ stays bounded ---\textit{cf.} Equation \eqref{Eq: thermal-minus-vacuum contribution}-\eqref{Eq: vacuum contribution}. Thus $A$ dominates over $B$, and we can drop the latter term when computing the flow equations for the coupling parameters in the high-temperature regime.
We then find
\begin{align*}
k \partial_k \tilde{U}_{0, k, \beta} &= - \tilde{U}_{0, k, \beta} 
+
 \frac{\zeta(3)}{2\pi^{2}}
 \\
k \partial_k \tilde{m}_{k, \beta}^2 &= 
-2(\tilde{m}_{k, \beta})^2 
- \frac{1}{2\pi^{2}}
 \frac{\tilde \lambda_{k,\beta}}{\left( 1+ \tilde{m}_{k,\beta}^2 \right)^{\frac{1}{2}}}
\\
k \partial_k \tilde{\lambda}_{k, \beta} &=  
- \tilde{\lambda}_{k, \beta}
+
 \frac{3}{8\pi^{2}}
 \frac{(\tilde \lambda_{k,\beta})^2}{\left( 1+ \tilde{m}_{k,\beta}^2 \right)^{\frac{3}{2}}} \, ,
\end{align*}
where  $\zeta$ is the Riemann zeta function, and we introduced the dimensionless, rescaled constants
$\tilde{U}_{0,k,\beta}={U}_{0,k,\beta} \beta^2 /k$, 
 $\tilde{m}_{k,\beta} = m_{k,\beta}/k$ and $\tilde{\lambda}_{k,\beta} = \lambda_{k,\beta}/(\beta k)$.
 In $d=4$, one finds a non-trivial fixed point for 
$\tilde{U}_* = \zeta(3)/2\pi^{2}$, 
$\tilde m^2_* = -2/5$, $ \tilde{\lambda}_* =  (8/3)\pi^2(1+ {\tilde m^2_*})^{3/2}$. The minus sign in the mass fixed point indicates that the symmetry $\chi\to-\chi$ is spontaneously broken in the chosen state.

To make it clearer, following \cite{Tetradis1992}, we can now repeat the above analysis, in which the effective potential takes the simple form
\[
U_{k} = \frac{\lambda_{ k, \beta}}{2}(\rho - \rho_{0, k, \beta})^2 \ \qquad {\rho=\frac{\phi^2}{2}}.
\]
which coincides with the effective potential written in equation \eqref{explicit-effective-potential} up to a constant.
The new parameter $\rho_{0, k, \beta}=\phi^2_{0, k, \beta}/2$ is the minimum of the potential, located at $U^{(1)}_{ k}({\phi_{0, k, \beta}}) = 0$. The new parameters are then linked to the old coupling constants, in particular
$m^2_{ k, \beta} = - \lambda_{ k, \beta} \rho_{0, k, \beta}$, while $\lambda_{ k, \beta}$ is scaled by a factor 3.
The flow equation for these parameters can be obtained from those written above.

\subsubsection{de Sitter space}
As a final application we show that the same methods apply in the context of curved spacetimes; in particular we study (the linearization of) $\lambda \chi^4$ in de Sitter space, defined as the four dimensional hyperboloid embedded in five dimensional flat space via the equation
\[
X^a X^b\eta_{ab} = H^{-2} \ , 
\]
where $H > 0$ is the Hubble constant and $\eta$ is the five-dimensional Minkowskian metric. 
We consider the linear theory to be in the Bunch Davies state \cite{Bunch1978}, which is the unique quasifree maximally symmetric state on the de Sitter spacetime. Following \cite{Birrell1984, Allen1987}, the symmetric part of its two-point function is
\[
\Delta_S^{\text{BD},+}(x,y) = 
\frac{H^2}{16\pi }\frac{\left(\frac{1}{4}-\nu^2\right)}{\cos(\pi\nu)} 
{_2F_1}\left(\frac{3}{2}+\nu,\frac{3}{2}-\nu;2;\frac{1+Z(x,y)}{2}\right)
\] 
where
$_2F_1$ is the hypergeometric function, $Z(x,y)=H^2 X^a(x) X^b(y)\eta_{ab}$
is related to the geodesic distance $d(x, y)=H\cos(Z(x,y))$ between $x$ and $y$, and 
\begin{equation}\label{eq:nu}
\nu = \sqrt{\frac{9}{4} - 12\xi + \frac{m^2}{H^2}}\,,
\end{equation}
where $m$ is the mass of the quantum field and $\xi$ its coupling to the scalar curvature.

We proceed with the following Ansatz
\begin{equation*}
U_k({\phi}) = m^2_k \rho +\frac{1}{6}\lambda_k\rho^2 , \qquad {\rho=\frac{\phi^2}{2}} \ .
\end{equation*} 
Notice that, for the sake of simplicity, $\xi$ does not depend on $k$.

To apply the approximation scheme introduced above, we need first of all to apply the maps which realise the classical transformation  
\[
\mathsf{r}_{M+Q_k}\Delta^{\text{BD}}_S\mathsf{r}_{M+Q_k}^*.
\]
However, instead of directly performing that computation, we observe that 
in the adiabatic limit the obtained states share the same symmetry properties as those of the original two-point function, because the classical M\o ller map preserves the spacetime symmetry.
The only maximally symmetric state in de Sitter is the Bunch Davies state and the original state is maximally symmetric; hence, the new state needs to be a Bunch Davies state too with a mass $m^2 = m_k^2 +\lambda_k \rho +k^2$.

For massive theories, with a general, non-minimal coupling $\xi$, the renormalized expectation value (via the Hadamard procedure) of the Wick square $\chi^2$ in this state is given by \cite{Bernard1986}
\begin{multline}
\omega(\chi^2(x)) = \\ -\frac{1}{16 \pi}\bigg \{ - \frac{2H^2}{3} + \bigg [ (m_k^2 + k^2 + \lambda_k \rho) + \bigg(\xi - \frac{1}{6} \bigg )12 H^2 \bigg ]  \bigg[\psi\bigg(\frac{3}{2} + \nu\bigg) + \psi\bigg(\frac{3}{2} - \nu\bigg) + \log\bigg(\frac{12 H^2}{\mu^2}\bigg) \bigg] \bigg \} \ ,
\end{multline}
where $\psi$ is the digamma function, defined as the logarithmic derivative of the Euler gamma function, $\nu$ is as in \eqref{eq:nu} with mass square equal to $m^2+k^2 + \lambda_k \rho$, and $\mu$ is again an arbitrary mass parameter.

Starting from this expression for  $\omega(\chi^2)$, the evolution equations become
\begin{multline*}
k\partial_k m^2_k 
= - \frac{k^2 \lambda_k}{16 \pi^2}
\bigg \{ \log \frac{12 H^2}{\mu^2} + \psi(\frac{3}{2}-\nu) + \psi(\frac{3}{2}+\nu) 
\\ + \frac{1}{2\nu} \bigg[\frac{m^2_k}{H^2}+ \frac{k^2}{H^2} + 12(\xi - \frac{1}{6} )\bigg ] [\psi'(\frac{3}{2}-\nu) - \psi'(\frac{3}{2}+\nu) ] \bigg) \bigg \}
\end{multline*}

%\begin{multline}
%k\partial_k \lambda_k = \frac{k^2 \lambda_k^2}{32 \pi^2 \nu H^2} \frac{1}{4 \frac{k^2}{H^2}\bigg(1 + \frac{m^2_k}{k^2} \bigg)+ 9 - 48 \xi}\\ 
%\bigg \{ \bigg[ - 11 + 60 \xi - 3 \frac{k^2}{H^2}\bigg(1+\frac{m^2_k}{k^2} \bigg) \bigg] (\psi'(\frac{3}{2}-\nu) - \psi'(\frac{3}{2}+\nu) )
%\\
%+ 4 \nu \bigg( - 1 + 6 \xi + \frac{k^2}{H^2}\bigg(1 + \frac{m^2_k}{k^2} \bigg ) \bigg)(\psi''(\frac{3}{2}+\nu) + \psi''(\frac{3}{2}-\nu) \bigg )
%\bigg \} \ .
%\end{multline}

\begin{multline*}
  k\partial_k \lambda_k =
  - \frac{3 k^2 \lambda_k^2}{16 \pi^2 H^2 \nu} \bigg \{ \psi'\left(\frac{3}{2}- \nu \right) - \psi'\left(\frac{3}{2} + \nu \right) + \\
  \frac{k^2+m^2 + 12 H^2\left (\xi - \frac{1}{6}\right )}{4H^2 \nu^2} \left [ \psi'\left(\frac{3}{2}- \nu \right) - \psi'\left(\frac{3}{2}+ \nu \right) + \nu \left ( \psi''\left(\frac{3}{2}- \nu  \right) + \psi''\left(\frac{3}{2} + \nu \right)\right )\right ]
  \bigg \} \ .
\end{multline*}
We can now define new, dimensionless couplings
\[
\tilde m^2_k := \frac{m^2_k}{H^2} \ , \quad \tilde \lambda_k = \frac{k^2}{H^2} \lambda_k \ .
\]
Notice that, as in the thermal case, the appearance of a dimensionful parameter ($H$ in this case) allows for a different scaling behaviour of the coupling constants.

In terms of the rescaled couplings the beta-functions become
\begin{align*}
   k \partial_k \tilde m^2_k &= - \frac{\tilde \lambda_k}{16 \pi^2}
\bigg \{ \log \frac{12 H^2}{\mu^2} + \psi(\frac{3}{2}-\nu) + \psi(\frac{3}{2}+\nu) 
\\ 
&+ \frac{1}{2\nu} \bigg[\frac{k^2}{H^2} + \tilde m^2_k + 12(\xi - \frac{1}{6} )\bigg ] [\psi'(\frac{3}{2}-\nu) - \psi'(\frac{3}{2}+\nu) ] \bigg) \bigg \}
\end{align*}
\begin{multline*}
    k \partial_k \tilde \lambda_k = 2 \tilde \lambda_k
     - \frac{3 \tilde \lambda_k^2}{16 \pi^2 \nu} \bigg \{ \psi'\left(\frac{3}{2}- \nu \right) - \psi'\left(\frac{3}{2} + \nu \right) \\
 + \frac{1}{4\nu^2} \left(\frac{k^2}{H^2}+\tilde m_k^2 + 12 \left (\xi - \frac{1}{6}\right ) \right ) \bigg [ \psi'\left(\frac{3}{2}- \nu \right) - \psi'\left(\frac{3}{2}+ \nu \right) \\
 +\nu \left ( \psi''\left(\frac{3}{2}- \nu  \right) + \psi''\left(\frac{3}{2} + \nu \right)\right )\bigg ]
  \bigg \} \ .
\end{multline*}
%\begin{align*}
%\begin{split}
%k \partial_k \tilde m^2_k &= - \frac{2k^2}{H^2} +  \frac{\tilde \lambda_k}{16 \pi^2} \bigg \{ \log \frac{12 H^2}{\mu^2} + \psi(\frac{3}{2}-\nu) + \psi(\frac{3}{2}+\nu) \\
%&+ \frac{1}{\nu} \bigg[\tilde m^2_k + 12(\xi - \frac{1}{6} )\bigg ] [\psi'(\frac{3}{2}-\nu) - \psi'(\frac{3}{2}+\nu) ] \bigg) \bigg \}
%\end{split} \\
%\begin{split}
%k \partial_k \tilde \lambda_k &= - 2\tilde \lambda_k + \frac{\tilde \lambda^2_k}{32 \pi^2 \nu} \frac{1}{4 \tilde m^2_k + 9 - 48 \xi} 
%\bigg \{ \bigg[ - 11 + 60 \xi - 3 \tilde m^2_k \bigg] (\psi'(\frac{3}{2}-\nu) - \psi'(\frac{3}{2}+\nu) ) \\
%&+ 4 \nu ( - 1 + 6 \xi + \tilde m^2_k )[\psi''(\frac{3}{2}+\nu) + \psi''(\frac{3}{2}-\nu) ]
%\bigg \}
%\end{split} \ .
%\end{align*}
By choosing $\mu^2 = 12 H^2$, we can remove all the dependence on the additional parameter $\mu$. This is possible in de Sitter since we have a new mass scale $H^{-2}$ which enters the flow equations as an "external parameter", and does not depend on the scale, similar to the inverse temperature $\beta$ in the flow equation for thermal theories we considered in the last section. Comparing with the Minkowski equations, we see that the first term in the beta-function for $\tilde \lambda_k$, corresponds to an effective dimension of $2$ in the flow equation for $\tilde\lambda_k$.

In the limit $k^2/H^2 \to 0$, corresponding to an inflationary regime in de Sitter, the RG flow equations acquire a non-trivial fixed point $(\tilde m^2_*, \tilde \lambda_*)$.

Due to its complexity, we do not report here the full expression for general $\xi$. Choosing the conformal coupling $\xi = 1/6$, the non-trivial fixed point is given by the simple expression
\begin{equation}
    \tilde m^2_k \simeq 0.164588 \ , \quad \tilde \lambda_k \simeq 179.237
\end{equation}
%\begin{align*}
%\tilde m^2_* &= \frac{1}{2}\left (2 \gamma - 1 \right ) \simeq 0.0772157 \\
%\tilde \lambda_* &= \frac{4\pi^2}{1-6\gamma - 2\psi''(1)(1 -2\gamma \psi''(1))-2\psi''(2)(1-2\gamma)}(64 \gamma - 1) \simeq -248.225 \ .
%\end{align*}

\section{Conclusions and Outlook}
In this paper, we employed the formalism of pAQFT to generalise the methods of the FRG to globally hyperbolic spacetimes and for generic states of a scalar field. We then showed, with the examples of the Minkowski vacuum, thermal states, and the Bunch-Davies vacuum in de Sitter, how to perform practical computations out of our formalism, that can be compared with the known results in the literature.

From here, several interesting directions of research open. On the conceptual side, we notice how the entire construction is based on the definition of the generating functional $Z_k(j)$ as the expectation value of a suitably regularised relative S-matrix. It would be interesting, then, to make the connection with the $C^*-$algebraic approach to interacting QFT first developed in \cite{Buchholz2019}, where primary objects are axiomatically defined relative S-matrices. Such a method would go beyond the perturbative construction, permitting an exact reformulation of the FRG equations.

The pAQFT approach already encompasses gauge theories \cite{Hollands2007} and perturbative quantum gravity \cite{Brunetti2016} via the BV formalism \cite{FR, Fredenhagen2013}. The generalization of our formalism to gauge theories seems in reach, and it would provide a clear conceptual framework and a rigorous construction to e.g. the thermal effects in non-abelian gauge theories, playing an important role in the phases of QCD (see section 5 in \cite{Dupuis2021} and references therein, with first developments in \cite{Reuter1993, Reuter1994, Gies2002}). 

On the other hand, the generalization to quantum gravity would provide mathematical tools to rigorously explore the UV properties of Lorentzian quantum gravity, making contact with the rich literature of the asymptotic safety community \cite{Reuter1996, Eichhorn2019}. In particular, the pAQFT formulation of quantum gravity is naturally developed in Lorentzian spacetimes, and it would be interesting to compare it with the results obtained in Asymptotically Safe Lorentzian Quantum Gravity \cite{Fehre2021, Manrique2011}. pAQFT could in principle address long-standing questions in the asymptotic safety program \cite{Donoghue2020, Bonanno2020}: first of all, pAQFT could provides a natural Lorentzian framework, without the need of a Wick rotation to Euclidean signature; secondly,  there are already developments on the use of relational observables \cite{Brunetti2016, Baldazzi2021} which could be compared with cosmological observations. In particular, the role of the state in the flow equations, as emphasised in our reformulation of the Wetterich equation \eqref{eq:Wetterich-Local}, could provide a new perspective on the objection raised by Donoghue \cite{Donoghue2020}, as the different possible runnings of the Newton's constant depending on the scattering process considered in effective field theory.

Finally, from a more practical point of view, our formalism provides a natural language to discuss in a systematic way the flow equations in generic states, as thermal states, or on curved spacetimes, where a natural notion of a vacuum is in general not at our disposal. It would be interesting, for example, to apply the formalism in cosmological situations, as in de Sitter space, and compare the results with the existing literature \cite{Serreau2011, Serreau2014, Guilleux2015, Guilleux2016}, or on black hole spacetimes, where self-interaction effects could provide instabilities of the theory.

\paragraph{Acknowledgements.}
We are grateful to Rudi Banerjee, Astrid Eichhorn and Max Niedermaier for interesting comments on this manuscript and also for pointing out to us some relevant references in the asymptotic safety literature. We thank both referees far useful comments and suggestions on an earlier version of this paper.
K.R. found the discussions with Astrid Eichhorn, Benjamin Knorr, Alessia Platania and Frank Saueressig very helpful and inspiring.
The work of E.D. is supported by a PhD scholarship of the University of Genoa.
E.D., N.D. and N.P. are grateful for the support of the National Group of Mathematical Physics (GNFM-INdAM).

\paragraph{Data availability statement.}
Data sharing is not applicable to this article as no new data were created or analysed in this study.

\appendix

\section{Principle of Perturbative Agreement}\label{Sec: Principle of Perturbative Agreement}\label{sec:perturbative-agreement}
	In this appendix we recall some basic facts about the principle of perturbative agreement. For further reading we refer to \cite{HollandsWald2004} and to \cite{Drago2015,Zahn2015}.

    Let $I$ be the action $I:=I_0+V$ where $I_0$ and $V$ are as in \eqref{eq:full-action}.
    Moreover, let $Q_k$ the quadratic potential depending on a parameter $k$
    \begin{align}\label{Eq: Q-functional}
    	Q_k(\chi)= -\frac{1}{2}\int \dd^d x \q_k(x) \chi^2(x) ,
    \end{align}
	where $q_k=k^2f\in C_{\mathrm{c}}^\infty(\mathcal{M})$ and $f\in C_{\mathrm{c}}^\infty(\mathcal{M})$ is the usual cutoff which can be eventually removed by performing a limit $f\to 1$ in a suitable sense.
	In what follows we consider
    \begin{align*}
    	I_{k}
    	:=I+Q_k
    	=I_0+(V+Q_k)
    	=I_{0k}+V
    	=I_0+V_k\,,
    \end{align*}
and we denote with $\star$, $\cdot_T$, $\mathcal{A}$, $S$, $R$ (\textit{resp.} $\star_k$, $\cdot_{T_{k}}$, $\mathcal{A}_k$, $S_k$, $R_k$) the star product, time-ordered product, algebra of observables, $S$-matrix and M\o ller map associated with $I_0$ (\textit{resp.} $I_{0k}$), respectively.

	Considering the (abstract, unreachable) interacting algebra $\mathcal{A}_{I_k}$, we have two maps which defines a perturbative representation of $\mathcal{A}_{I_k}$ in either $\mathcal{A}[[V_k]]$ or $\mathcal{A}_k[[V]]$ ---depending on whether $Q_k$ is considered as part of $I_0$ or of $V$.
	These are the quantum M\o ller maps \eqref{eq:Bogoliubov} $R_{V_k}$ and $R_{k,V}$. 	
	which permit to represent the generators of the interacting algebra, \textit{i.e} the local interacting fields, to $\mathcal{A}[[V_k]]$ or in $\mathcal{A}_k[[V]]$.	
	Notice that the M\o ller maps $\RQ_{V_k}$ (\textit{resp.} $\RQ_{k,V}$) are constructed with the time ordered exponential of $TV_k$ (\textit{resp}. $T_k V$) in order to make the above representation fully local and covariant.
	Notably, the two representations are related as follows.
	First there exists a (classical M\o ller) isomorphism $\RC_{Q_k}\colon\mathcal{A}_k\to\mathcal{A}$ defined by
	\begin{align}\label{Eq: classical Moeller maps}
		\RC_{Q_k}\colon\mathcal{A}_k\to\mathcal{A}\,,\qquad
		(\RC_{Q_k}F)(\chi)=F(\mathsf{r}_{Q_k}\chi)\,,\qquad
		\mathsf{r}_{Q_k}\chi:=\chi-\Delta_{R,k}q_k\chi\,
	\end{align}
	where $\Delta_{R,k}$ is the retarded operator associated to $I_{0k}$.
	Notice in particular that $r_{Q_k}$ intertwines between $I_{0k}$ and $I_0$ namely $I_{0k}^{(1)}r_{Q_k}=I_0^{(1)}$.
	(Notice that the existence of $\Delta_{R,k}$ is ensured by the fact that $Q_k$ is local and does not contains second derivatives.)
	As a matter of fact not only $\mathcal{A}\simeq\mathcal{A}_{k}$, but also the time-ordered products can be related.
	In particular, using $\tilde{\Upsilon}$ given in \eqref{eq:alpha}
	\begin{align}\label{Eq: PPA map}
		\gamma_k\colon\mathcal{F}_{\text{loc}}\to\mathcal{F}_{\text{loc}}\,,\qquad
		\gamma_k F:=e^{\tilde{\Upsilon}_{\Delta_{F,k}-\Delta_F}}F\,,\qquad
	\end{align}
	is such that
	\begin{align}\label{Eq: the PPA map intertwines the T-products}
		\gamma_k(F\cdot_TG)=\gamma_k(F)\cdot_{T_k}\gamma_k(G)\,.
	\end{align}
	\begin{remark}\label{Rmk: T, TLambda Wick ordering}
		The map $\gamma_k$ can also be used to define a Wick-ordering map for $\mathcal{A}_k$ \cite{Drago2015,HollandsWald2004}.
		In particular, given the Wick ordering map $T\colon\mathcal{F}_{\text{loc}}\to\mathcal{F}_{\text{loc}}$ for the algebra $\mathcal{A}$, we consider $\gamma_k\circ T$.
		At this stage one may prove, \textit{cf.} \cite{HollandsWald2004}, that the renormalization ambiguities of the Wick ordering map $T$ can be chosen so that $T_k=\gamma_k\circ T\colon\mathcal{F}_{\textrm{loc}}\to\mathcal{F}_{\textrm{loc}}$ is a Wick ordering map for $\mathcal{A}_k$.
	\end{remark}
	It is not difficult to show that, in a perturbative sense,
	\begin{align}
		\gamma_k={\RC}_{Q_k}^{-1}\circ \RQ_{Q_k}\,,
	\end{align}
	where $\RQ_{Q_k}\colon\mathcal{A}_k\to\mathcal{A}[[Q_k]]$ is the (perturbative) representation of $\mathcal{A}_k$ in $\mathcal{A}[[Q_k]]$.
	
	The latter relation can be promoted to the interacting setting ---\textit{i.e.} $V\neq 0$.
	In particular the following  holds, \textit{cf.} \cite{Drago2015}:
	
	\begin{lemma}
	It holds that 
	\begin{align}\label{Eq: gPPA for Moeller operators}
		\RQ_{V_k}
		=\RC_{Q_k}
		\circ \RQ_{k,{\gamma_k} V}
		\circ\gamma_k\,,
	\end{align}
	and similarly
	\begin{align}\label{Eq: gPPA for relative S-matrices}
		S_{V_k}(F)=\RC_{Q_k}S_{k,{\gamma_k} V}(\gamma_k F)\,.
	\end{align}
    Furthermore, $\gamma_k V$ and $V$ differs only by a different choice of renormalization constants.
		\end{lemma}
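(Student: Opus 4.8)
The plan is to derive both displayed identities from two facts already at hand: the intertwining property \eqref{Eq: the PPA map intertwines the T-products} of $\gamma_k$ between the two time-ordered products, and the free perturbative agreement $\RQ_{Q_k}=\RC_{Q_k}\circ\gamma_k$ recorded just above the statement. I would establish the relative $S$-matrix identity \eqref{Eq: gPPA for relative S-matrices} first and then deduce the M\o ller-map identity \eqref{Eq: gPPA for Moeller operators}. First I would collect a few elementary consequences. Since $\gamma_k$ is linear and obeys $\gamma_k(F\cdot_T G)=\gamma_k(F)\cdot_{T_k}\gamma_k(G)$, expanding the time-ordered exponential $S(W)=\sum_n\frac{i^n}{n!}W^{\cdot_T n}$ and applying $\gamma_k$ term by term gives $\gamma_k S(W)=S_k(\gamma_k W)$ for local $W$, in the perturbative sense. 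Because $\cdot_T$ is commutative one has $S(V+F)=S(V)\cdot_T S(F)$, from which $S_V(F)=\RQ_V(S(F))$ and, by the telescoping computation already used in the excerpt, the cocycle identity $S_A(B+C)=S_A(B)\star S_{A+B}(C)$ follow at once. Finally, $\RC_{Q_k}\colon\mathcal{A}_k\to\mathcal{A}$ is a $*$-isomorphism carrying $\star_k$ to $\star$, hence it distributes over $\star$-products and commutes with $\star$-inverses.

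I would then prove \eqref{Eq: gPPA for relative S-matrices}. Writing $V_k=V+Q_k$ and using the cocycle with $A=Q_k$, $B=V$, $C=F$,
\begin{align*}
S_{V_k}(F)=S_{Q_k}(V)^{-1}\star S_{Q_k}(V+F).
\end{align*}
Both factors are of pure-regulator type, so that for $W\in\{V,V+F\}$ one has $S_{Q_k}(W)=\RQ_{Q_k}(S(W))=\RC_{Q_k}\big(\gamma_k S(W)\big)=\RC_{Q_k}\big(S_k(\gamma_k W)\big)$, where I used $S_{Q_k}=\RQ_{Q_k}\circ S$, the free perturbative agreement, and the $S$-matrix intertwining. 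Substituting, distributing $\RC_{Q_k}$ over $\star$ and over inverses, and using linearity of $\gamma_k$, I obtain
\begin{align*}
S_{V_k}(F)=\RC_{Q_k}\Big(S_k(\gamma_k V)^{-1}\star_k S_k(\gamma_k V+\gamma_k F)\Big)=\RC_{Q_k}\big(S_{k,\gamma_k V}(\gamma_k F)\big),
\end{align*}
which is exactly \eqref{Eq: gPPA for relative S-matrices}.

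The M\o ller-map identity \eqref{Eq: gPPA for Moeller operators} then follows by evaluating the two candidate maps on the generating elements $S(F)$. On one side $\RQ_{V_k}(S(F))=S_{V_k}(F)$; on the other, using $\gamma_k S(F)=S_k(\gamma_k F)$ and the $k$-theory relation $\RQ_{k,\gamma_k V}\circ S_k=S_{k,\gamma_k V}$,
\begin{align*}
\big(\RC_{Q_k}\circ\RQ_{k,\gamma_k V}\circ\gamma_k\big)(S(F))=\RC_{Q_k}\big(S_{k,\gamma_k V}(\gamma_k F)\big)=S_{V_k}(F),
\end{align*}
so the two maps agree on every $S(F)$. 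Since the functional derivatives of $S(F)$ at $F=0$ recover the generators $\chi,\chi^2,\dots$ of $\mathcal{A}$, and the M\o ller map is fixed by its values on such elements, the two maps coincide on all of $\mathcal{A}[[V]]$, giving \eqref{Eq: gPPA for Moeller operators}.

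It remains to justify that $\gamma_k V$ and $V$ differ only by a choice of renormalization constants. The operator $\tilde\Upsilon_{\Delta_{F,k}-\Delta_F}$ of \eqref{Eq: PPA map} contracts field pairs against $\Delta_{F,k}-\Delta_F$ and so lowers the field degree by two at each application; acting on a local vertex it therefore returns $V$ together with local terms of strictly lower degree, i.e. contributions to the mass, to the curvature coupling and a field-independent constant. Equivalently, $T_k=\gamma_k\circ T$, and the defining freedom of the Wick-ordering maps $T$ and $T_k$ consists precisely of such finite local renormalization constants --- \textit{cf.} Remark \ref{Rmk: T, TLambda Wick ordering}. I expect this last step to be the delicate one: one must control the coincidence limits of $\Delta_{F,k}-\Delta_F$ entering these coefficients, which are finite only after the Hadamard renormalization built into the time-ordered products, the residual ambiguity being exactly the finite shift of the couplings already present in $I$. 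The purely algebraic manipulations of the first three steps, by contrast, are formal and robust.
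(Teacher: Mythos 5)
Your proposal is correct and follows essentially the same route as the paper: both proofs reduce the relative $S$-matrix identity to the free perturbative agreement $\RQ_{Q_k}=\RC_{Q_k}\circ\gamma_k$ together with the intertwining $\gamma_k(F\cdot_T G)=\gamma_k(F)\cdot_{T_k}\gamma_k(G)$ and the fact that $\RC_{Q_k}$ carries $\star_k$ to $\star$, and then obtain the M\o ller-map identity from the $S$-matrix identity (the paper by differentiating $S_{V_k}(\mu F)$ at $\mu=0$, you by the equivalent generating-functional argument). The only cosmetic differences are that you run the computation from $S_{V_k}(F)$ toward the right-hand side via the cocycle identity rather than the reverse, and that, like the paper, you defer the full justification of the statement about renormalization constants to the ambiguity analysis of the Wick-ordering maps in the cited literature.
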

	\begin{proof} 
To prove Eq. \eqref{Eq: gPPA for Moeller operators} and \eqref{Eq: gPPA for relative S-matrices} we proceed by direct inspection
		\begin{align*}
			\RC_{Q_k}S_{k,{\gamma_k} V}(\gamma_k F)	&=\RC_{Q_k}\left[S_k({\gamma_k}V)^{-1}\star_k S_k(\gamma_k(F+V))
			\right]
			\\&=\left[\RC_{Q_k}S_k( {\gamma_k}V)\right]^{-1}
			\star \RC_{Q_k}\circ\gamma_k S(F+V)
			\\&=\left[\RC_{Q_k}S_k({\gamma_k}V)\right]^{-1}
			\star \RQ_{Q_k} S(F+V)
			\\&=\left[\RC_{Q_k}S_k({\gamma_k} V)\right]^{-1}
			\star S(Q_k)^{-1}\star S(F+V_k)\,.
		\end{align*}
		Moreover
		\begin{align*}
			\RC_{Q_k}S_k(V)
			=\RC_{Q_k}\gamma_k S(V)
			={\RQ}_{Q_k} S(V)
			=S(Q_k)^{-1}\star S(V_k)\,,
		\end{align*}
		so that overall we have
		\begin{align*}
			\RC_{Q_k}S_{k,{\gamma_k} V}(\gamma_k F)
			&=\left[\RC_{Q_k}S_k({\gamma_k} V)\right]^{-1}
			\star S(Q_k)^{-1}\star S(F+V_k)\,.
			\\&=\left[S(Q_k)^{-1}\star S(V_k)\right]^{-1}
			\star S(Q_k)^{-1}\star S(F+V_k)
			=S_{V_k}(F)\,.
		\end{align*}
		This proves Equation \eqref{Eq: gPPA for relative S-matrices}.
		Concerning Equation \eqref{Eq: gPPA for Moeller operators} this follows by Equation \eqref{Eq: gPPA for relative S-matrices} as
		\begin{align*}
			\RQ_{V_k} F
			=\frac{\mathrm{d}}{i\mathrm{d}\mu}S_{V_k}(\mu F)\bigg|_{\mu=0}
			\stackrel{\eqref{Eq: gPPA for relative S-matrices}}{=}\frac{\mathrm{d}}{i\mathrm{d}\mu}\RC_{Q_k}S_{k,{\gamma_k}V}(\mu\gamma_k F)\bigg|_{\mu=0}
			=\RC_{Q_k}\RQ_{k,{\gamma_k}V} \gamma_k F\,.
		\end{align*}
	\end{proof}

\section{Technical lemmata}\label{App: technical lemmata}

We collect in this appendix some technical lemmata used in the main text.
The first is used to evaluate the effect of the product of the time ordered exponential of local currents with a local field.
\begin{lemma}\label{Lem: time-ordered multiplication with S(J)}
	For all $F\in\mathcal{A}$ we have
	\begin{align}\label{Eq: time-ordered multiplication with S(J)}
		[S(J)\cdot_T F](\chi)
		=e^{-\Delta_F(j,j)/2}e^{iJ(\chi)}F(\chi+i\Delta_F j)\,.
	\end{align}
\end{lemma}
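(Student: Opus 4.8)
The plan is to exploit the linearity of $J$ in two independent steps: first compute the time-ordered exponential $S(J)=Te^{iJ}$ in closed form, and then evaluate the single remaining time-ordered product against $F$ by recognising a functional Taylor series. The scalar prefactor $e^{-\Delta_F(j,j)/2}$ and the shift $\chi\mapsto\chi+i\Delta_F j$ will arise from these two steps respectively.

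First I would record that, since $J(\chi)=\Chi_j(\chi)=\int j\chi$ is linear, its first derivative $J^{(1)}=j$ is field-independent while $J^{(n)}=0$ for $n\geq 2$. Using the defining formula \eqref{T-product}, the time-ordered product $T(J^{\otimes n})=J\cdot_T\cdots\cdot_T J$ then collapses to a pure Wick contraction: each contracted pair yields the scalar $\Delta_F(j,j)=\langle j,\Delta_F j\rangle$ and each uncontracted factor survives as a copy of $J$. Grouping the resulting combinatorial sum according to the number $k$ of contracted pairs and the number $m=n-2k$ of surviving factors, and using $i^n=i^m(-1)^k$, the double sum factorises into $\big(\sum_m (iJ)^m/m!\big)\big(\sum_k (-\tfrac12\Delta_F(j,j))^k/k!\big)$, i.e. $S(J)=e^{-\frac12\Delta_F(j,j)}\,e^{iJ}$. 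This isolates the claimed prefactor and reduces the problem to $e^{iJ}\cdot_T F$.

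Next I would compute $[e^{iJ}\cdot_T F](\chi)$ directly from \eqref{T-product}. The key observation is that $(e^{iJ})^{(n)}(\chi)=(ij)^{\otimes n}e^{iJ(\chi)}$, so every functional derivative of $e^{iJ}$ merely brings down a factor $ij$. Pairing against $F^{(n)}(\chi)$ and using the symmetry of both $\Delta_F$ and $F^{(n)}$, the $n$-th term becomes $\tfrac{1}{n!}\,e^{iJ(\chi)}\,\langle F^{(n)}(\chi),(i\Delta_F j)^{\otimes n}\rangle$, where $i\Delta_F j\in C^\infty(\mathcal M)$ is a smooth configuration. Summing over $n$ I would then recognise the functional Taylor expansion of $F$ about $\chi$ shifted by $i\Delta_F j$, giving $e^{iJ(\chi)}F(\chi+i\Delta_F j)$; multiplying by the prefactor from the first step produces exactly \eqref{Eq: time-ordered multiplication with S(J)}.

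The main obstacle is bookkeeping and well-posedness rather than anything deep: one must check that the formal series genuinely reproduce the exponential and the Taylor expansion. This is clean here because $J$ and $e^{iJ}$ are \emph{regular} functionals (the wavefront sets of all their derivatives are empty), so that the $\cdot_T$ product is defined on them without any renormalization, and because each element of $\mathcal F$ has only finitely many non-vanishing functional derivatives, which truncates the Taylor sum and makes the rearrangements above legitimate. The only point requiring mild care is that the smooth shift $i\Delta_F j$ lies in the domain on which the Taylor identity $\sum_n \tfrac{1}{n!}\langle F^{(n)}(\chi),\psi^{\otimes n}\rangle=F(\chi+\psi)$ holds, which follows from the smoothness assumptions on $F$ recalled in the excerpt.
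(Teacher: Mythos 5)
Your proposal is correct and follows essentially the same two-step route as the paper: first showing $S(J)=e^{-\Delta_F(j,j)/2}e^{iJ}$ and then evaluating $e^{iJ}\cdot_T F$ term by term to recognise the functional Taylor expansion $F(\chi+i\Delta_F j)$. The only difference is cosmetic — you derive the prefactor by explicit counting of Wick pairings, whereas the paper reads it off directly from the exponential form of the $T$-product.
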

\begin{proof}
	By direct inspection we have
	\begin{align*}
		S(J)=e^{iJ}_{\cdot_T}
		=T(e^{iJ}_\otimes)
		=\sum_{n\geq 0}\frac{1}{n!2^n}\Delta_F^{\otimes n}(ij)^{\otimes 2n} e^{iJ}
		=e^{-\Delta_F(j,j)/2}e^{iJ}\,.
	\end{align*}
	Moreover for all $F\in\mathcal{A}$ we have
	\begin{align*}
		(e^{iJ}\cdot_T F)(\chi)
		=\sum_{n\geq 0}\frac{1}{n!}\Delta_F^{\otimes n}(ij)^{\otimes n} e^{iJ(\chi)} F^{(n)}(\chi)
		=e^{iJ(\chi)}F(\chi+i\Delta_F j)\,.
	\end{align*}
	Combining these results leads to equation \eqref{Eq: time-ordered multiplication with S(J)}.
\end{proof}
\begin{remark}\label{Rmk: time-ordered multiplication with S(J) - special case}
	In the particular case of $j=-\mathcal{L}_0^{(1)}\phi$ equation \eqref{Eq: time-ordered multiplication with S(J)} simplifies to
	\begin{align*}
		[S(J)\cdot_T F](\chi)
		=e^{-\mathcal{L}_0(\phi)}e^{iJ(\chi)}F(\chi{+}\phi)\,,
	\end{align*}
	where we used that $\Delta_F\mathcal{L}_0^{(1)}=i\delta$.
\end{remark}

In the next we analyse the vanishing of the various propagators of a theory whose action is $I_0+Q_k$ in the limit $k\to\infty$. 
\begin{lemma}\label{le:limit-propagators}
Let $\q_k(x)=  k^2 f(x)$ where the cutoff function $f\in C^{\infty}_{\mathrm{c}}(\mathcal{M})$ is positive and it is equal to  $1$ in $D(\mathcal{O})\subset \mathcal{M}$. 
Let $h,g\in C^{\infty}_{\mathrm{c}}(\mathcal{O})$, it holds that
\[
\lim_{k\to\infty} (h,\Delta_{F,k} g) = 0, \qquad \lim_{k\to\infty} (h,\Delta_{+,k} g) = 0
\]
where $\Delta_{F,k}$ is any Feynman propagator of the theory whose action is $I_0+Q_k$ and $\Delta_{+,k}$ the two-point function of the corresponding state. 
\end{lemma}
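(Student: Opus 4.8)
The plan is to exploit that on $D(\mathcal{O})$, where $f\equiv 1$, the regulator degenerates to a genuine constant mass, $q_k=k^2$; consequently, when paired against test functions supported in $\mathcal{O}$, both $\Delta_{+,k}$ and $\Delta_{F,k}$ behave as weak eigendistributions of $P_0$ with eigenvalue $k^2$. This lets me trade every inverse power $k^{-2}$ for one application of $P_0$ to the smooth, compactly supported arguments, thereby reducing the whole statement to a single uniform-in-$k$ bound on the propagators.

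First I would record the algebraic identity underlying the trade. Since $g$ is a test function, $\Delta_{+,k}g$ solves the homogeneous equation $P_{0k}(\Delta_{+,k}g)=0$ everywhere, i.e.\ $P_0(\Delta_{+,k}g)=q_k\,\Delta_{+,k}g$; pairing with $h$, whose support lies in $\mathcal{O}$ where $q_k=k^2$, and using the formal self-adjointness of $P_0$ gives $\langle P_0 h,\Delta_{+,k}g\rangle=k^2\langle h,\Delta_{+,k}g\rangle$. Iterating with $h\mapsto P_0^{\,l-1}h$, which remains supported in $\mathcal{O}$, yields
\[
\langle h,\Delta_{+,k}g\rangle=\frac{1}{k^{2l}}\,\langle P_0^{\,l}h,\Delta_{+,k}g\rangle,\qquad l\in\mathbb{N}.
\]
For the Feynman propagator the relation $P_{0k}\Delta_{F,k}=i\delta$ produces the same identity up to an additional finite series $-\,i\sum_{j=1}^{l}k^{-2j}\langle P_0^{\,j-1}h,g\rangle$, which is $O(k^{-2})$ because each pairing $\langle P_0^{\,j-1}h,g\rangle$ is a fixed, $k$-independent number; this term is harmless.

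The heart of the proof is the uniform bound $\|f\,\Delta_{+,k}g\|_{2}\le C\|g\|_{2}$ with $C$ independent of $k$. For this I would invoke the mode decomposition already used in the proof of Theorem \ref{Thm: classical limit of Gamma on ultrastatic spacetimes}: writing $P_0=\partial_t^2-B-k^2$ with $B$ self-adjoint on $L^2(\Sigma)$ and bounded below, functional calculus over the (eventually positive) spectrum of $B+k^2$ furnishes the kernels $\tilde\Delta_{+,k}$ and $\tilde\Delta_{F,k}$ whose operator norms satisfy the bounds \eqref{eq:boundDelta+k}, uniformly in $t$; integrating in time against the compactly supported $g$ gives the claimed $L^2$ estimate, in fact decaying like $(k^2-r)^{-1/2}$. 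Inserting the cutoff $f$ is free, since $P_0^{\,l}h$ is supported where $f\equiv 1$, so Cauchy--Schwarz delivers
\[
|\langle h,\Delta_{+,k}g\rangle|\le\frac{1}{k^{2l}}\,\|P_0^{\,l}h\|_{2}\,\|f\,\Delta_{+,k}g\|_{2}\le\frac{C}{k^{2l}}\,\|P_0^{\,l}h\|_{2}\,\|g\|_{2}\xrightarrow[k\to\infty]{}0
\]
for any fixed $l\ge 1$, and the identical argument (using the second bound in \eqref{eq:boundDelta+k}) disposes of $\Delta_{F,k}$.

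The main obstacle is precisely this uniform-in-$k$ control: both the state and its two-point function vary with $k$, so the bound cannot be imported from a fixed object but must be generated by the spectral decomposition itself. The ultrastatic structure makes this transparent; the same mechanism is available for the other settings treated in the paper (thermal states on flat spacetime and the Bunch--Davies state on de Sitter), as noted in the discussion following Theorem \ref{Thm: classical limit of Gamma on ultrastatic spacetimes}. For a fully general globally hyperbolic background and an arbitrary Hadamard state one would instead need energy estimates yielding the analogous uniform decay, which I expect to be the only nontrivial point.
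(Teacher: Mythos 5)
Your algebraic identity --- trading each factor $k^{-2}$ for an application of $P_0$ to $h$, using that $f\equiv 1$ on $\supp h$ together with the formal self-adjointness of $P_0$ --- is exactly the engine of the paper's proof as well. The difference, and the gap, lies in how the resulting pairing is controlled. You close the argument with the uniform bound $\|f\,\Delta_{+,k}g\|_2\le C\|g\|_2$, imported from the spectral bounds \eqref{eq:boundDelta+k}; but those bounds are derived only for an ultrastatic spacetime with bounded curvature and the ground state, whereas Lemma \ref{le:limit-propagators} is stated for a general globally hyperbolic $\mathcal{M}$, for \emph{any} Feynman propagator of $I_0+Q_k$ and the two-point function of a corresponding (arbitrary) state. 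As you yourself note at the end, in that generality the uniform $L^2$ estimate is precisely the nontrivial point, and your proposal leaves it unproved --- so what you have is a proof only in the ultrastatic/ground-state setting, which is the situation of Theorem \ref{Thm: classical limit of Gamma on ultrastatic spacetimes}, not of the lemma as stated.

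The paper sidesteps the need for any a priori bound by a self-normalizing trick: starting from
\[
|(h,f\Delta_{F,k}g)|\le \frac{1}{k^2}\big(\|h\|\,\|g\|+\|P_0h\|\,\|f\Delta_{F,k}g\|\big)\,,
\]
it divides through by $\|g\|+\|f\Delta_{F,k}g\|$, obtaining a bound $k^{-2}(\|h\|+\|P_0h\|)$ that is independent of whatever size $\|f\Delta_{F,k}g\|$ may have; from this it infers that the normalized vector $f\Delta_{F,k}g/(\|g\|+\|f\Delta_{F,k}g\|)$ tends to zero and hence that $\|f\Delta_{F,k}g\|\to 0$, after which Cauchy--Schwarz gives the claim. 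This uses only that $\Delta_{F,k}$ (resp.\ $\Delta_{+,k}$) is a fundamental solution (resp.\ weak bisolution) of $P_0-q_k$, with no spectral decomposition and a single power of $k^{-2}$, so it covers the stated generality. Where your extra hypotheses do hold, your route is quantitatively stronger --- it yields decay $O(k^{-2l})$ for every $l$, which is what the proof of Theorem \ref{Thm: classical limit of Gamma on ultrastatic spacetimes} actually exploits in \eqref{eq:chain-est} --- but as a proof of the lemma itself it is missing the normalization idea that makes the general case go through.
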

\begin{proof}
We start analyzing the first limit.
It holds that $\Delta_{F,k}$ is proportional to a fundamental solution of the equation obtained from the differential operator
$P_0-q_k$, namely $(P_0-q_k)\Delta_{F,k} g=ig$.
Hence
\[
-(h,f \Delta_{F,k} g) = \frac{1}{k^2} \left[ {i} (h,g) - (h, P_0\Delta_{F,k} g)   \right].
\]
Since $f$ is $1$ on the support of $h$ we have that $fh=h$ hence using Cauchy-Schwartz inequality  we get
\[
\frac{|(h, f\Delta_{F,k} g)|}{(\|g\|+\|f\Delta_{F,k}g\|) } \leq \frac{1}{k^2}  (\| h\| +\| P_0 h\|)  
\]
where the norm $\|\cdot\|$ is that of $L^2(\mathcal{O}, \dd x)$ and where we use the fact that $P_0$ is formally self-adjoint on $L^2(\mathcal{O}, \dd x)$.
The latter inequality implies that for every $h\in C^{\infty}_{\mathrm{c}}(\mathcal{O})$ 
\[
\frac{(h, f\Delta_{F,k} g)}{(\|g\|+\|f\Delta_{F,k}g\|) }
\]
vanishes in the limit $k\to\infty$ and thus
\[
\frac{f\Delta_{F,k} g}{\|g\|+\|f\Delta_{F,k}g\|}
\]
tends to the $0$ of  $L^2(\mathcal{O}, \dd x)$ under that limit because  $C^{\infty}_{\mathrm{c}}(\mathcal{O})$ are dense.
This implies that 
\[
\lim_{k\to\infty}\|f\Delta_{F,k} g\|  = 0 
\]
We thus have that 
\[
|(h,\Delta_{F,k} g)| \leq \|h\|\|f\Delta_{F,k} g\|
\]
and the right hand side vanishes for $k\to+\infty$.
Hence the first limit we wanted to prove holds. We can use the same strategy to prove the that also the limit of $(h,\Delta_{+,k}g)$ vanishes for large $k$ with the observation that $\Delta_{+,k}$ is a weak solution of $P_0-\q_k$.
\end{proof}

\section{Hadamard expansion of the Minkowski vacuum two-point function and the Wick square}\label{Sec: Hadamard expansion of the Minkowski vacuum two-point function and the Wick square}
 For a massive theory in even-dimensional Minkowski space, the Hadamard distribution is known to depend on an additional arbitrary parameter $\mu$, and it is given by \cite{Brunetti2009}
\[
H^\mu_m(x,y) = \Delta_{S,k}(x,y) + \frac{(-1)^{d/2}}{2 (2\pi)^{d/2}} M^{d/2-1} \log(\frac{\mu^2}{M^2}) \sigma^{\frac{2-d}{4}}  I_{d/2-1}(\sqrt{M^2 \sigma}) \ ,
\]
where $\Delta_{S,k}(x,y)$ is the symmetric contribution of the vacuum two-point function, $I_\nu(x)$ is the modified Bessel function of the first kind and $\sigma = g_{ab} (x-y)^a (x-y)^b$ is the squared geodesic distance. The mass term for the linearized theory is $M^2 = k^2 + m^2_k + \lambda_k \rho$.  In the coincidence limit, $\sigma \to 0$ and 
\[
I_{d/2-1}(x) \simeq \frac{M^{d/2-1} \sigma^{\frac{d-2}{4}}}{2^{d/2-1} \Gamma(d/2)} \ .
\]
Therefore, the $\sigma$ dependence drops and we obtain
\begin{equation}\label{eq:vacuum-Wick}
\omega\left[\frac{\chi^2(x)}{2}\right] = \frac{(-1)^{d/2}}{\Gamma(d/2) (4 \pi)^{d/2}} \bigg (k^2 + m^2_k+\lambda_k \rho\bigg )^{d/2-1} \log(\frac{k^2 + m^2_k + \lambda_k \rho}{\mu^2}) \ .
\end{equation}
We notice that the Hadamard distribution explicitly depend on the scale $k$, which was not the case for the state. The reason is that, under a change in the mass parameter, the massive Minkowski vacuum is mapped into the massive Minkowski vacuum with the rescaled mass; in particular, this implies that the singularity structure of the 2-point function is modified by the mass rescaling.

\printbibliography
%\printbibliography[heading=bibintoc,title={Bibliography}]
%\bibliography{bibliography}{}
%\bibliographystyle{cmphref}

\end{document}